\newtheorem{theorem}{Theorem}
\newtheorem{lemma}{Lemma}
\newtheorem{corollary}{Corollary}
\newtheorem{observation}{Observation}
\newtheorem{definition}{Definition}
\newtheorem{implication}{Implication}
\DeclareMathOperator{\Tr}{Tr} 
\DeclareMathOperator{\supp}{supp}
\DeclareMathOperator{\cov}{cov}
\DeclareMathOperator{\dist}{d}
\newcommand{\e}{\mathrm{e}}
\newcommand{\rmd}{\mathrm{d}}
\DeclareMathOperator{\landauO}{O}
\newcommand{\1}{\mathbb{1}}
\newcommand{\NN}{\mathbb{N}}
\newcommand{\CC}{\mathbb{C}}
\newcommand{\ZZ}{\mathbb{Z}}
\newcommand{\mc}[1]{\mathcal{#1}}
\renewcommand{\H}{\mc{H}}
\DeclareMathOperator{\G}{\mc{G}}
\newcommand{\norm}[1]{\left\Vert #1\right\Vert}
\newcommand{\nnorm}[1]{\Vert #1 \Vert}
\newcommand{\bnorm}[1]{\bigl\| #1 \bigr\|}
\newcommand{\Bnorm}[1]{\Bigl\| #1 \Bigr\|}
\newcommand{\ad}{^\dagger}
\newcommand{\kw}[1]{\frac{1}{#1}}
\newcommand{\restr}{\upharpoonright}
\renewcommand{\ol}[1]{\overline{#1}}
\DeclareMathOperator{\dunion}{\uplus}
\DeclareMathOperator{\dUnion}{\biguplus}
\renewcommand{\complement}{^{c}}
\DeclareMathOperator{\colonequiv}{:\!\Leftrightarrow}
\newcommand{\ket}[1]{\left|#1 \right\rangle}
\newcommand{\trunc}[2]{#1_{\upharpoonright \mathnormal{#2}}} 
\newcommand{\gibbs}{g}
\newcommand{\animalc}{\alpha}
\DeclareMathOperator{\C}{\mc{C}}
\DeclareMathOperator{\A}{\mc{A}}
\newcommand{\clusters}[3][F]{\C_{#2}^{#3}\!(#1)}
\newcommand{\animals}[3][F]{\A_{#2}^{#3}(#1)}
\newcommand{\W}{\mc{W}} 
\DeclareMathOperator{\boundary}{\partial}
\DeclareMathOperator{\alphay}{\mathnormal{b}}
\newcommand{\swap}{\mc{S}}
\newcommand{\Eset}{E}
\newcommand{\Fset}{F}
\newcommand{\I}{{(1)}}
\newcommand{\II}{{(2)}} 
\newcommand{\titletext}{Locality of temperature}
\newcommand{\fu}{Dahlem Center for Complex Quantum Systems, Freie Universit{\"a}t Berlin, 14195 Berlin, Germany}
\newcommand{\aei}{Max Planck Institute for Gravitational Physics (Albert Einstein Institute), 
	     Am M\"uhlenberg 1, 14476 Potsdam-Golm, Germany}
\newcommand{\listkeys}{Gibbs state, canonical state, canonical temperature, clustering of correlations, critical temperature, spin lattice system, spin lattice model, classical simulation, intensive temperature, perturbation of thermal states, stability of thermal states, thermalization, lattice system, local Hamiltonian, thermodynamic limit, matrix product operators, MPO, locality of temperature, intensive temperature, intensivity, intensiveness, quantum information, quantum thermodynamics, lattice animal, cluster expansion, cluster analysis
}
\newcommand{\listpacs}{
  05.30.-d,  
  05.50.+q, 
  03.67.-a,  
}
\begin{document}

\title{\titletext}

\author{M.\ Kliesch} \affiliation{\fu}
\author{C.\ Gogolin} \affiliation{\fu}
\author{M.~J.\ Kastoryano} \affiliation{\fu}
\author{A.\ Riera} \affiliation{\fu} \affiliation{\aei}
\author{J.\ Eisert} \affiliation{\fu}



\begin{abstract}
This work is concerned with thermal quantum states of Hamiltonians on spin and fermionic lattice systems with short range interactions. We provide results leading to a local definition of temperature, thereby extending the notion of ``intensivity of temperature'' to interacting quantum models. More precisely, we derive a perturbation formula for thermal states. The influence of the perturbation is exactly given in terms of a generalized covariance. For this covariance, we prove exponential clustering of correlations above a universal critical temperature that upper bounds physical critical temperatures such as the Curie temperature. As a corollary, we obtain that above the critical temperature, thermal states are stable against distant Hamiltonian perturbations. Moreover, our results imply that above the critical temperature, local expectation values can be approximated efficiently in the error and the system size.
\end{abstract}

\maketitle
\section{Introduction}
\label{sec:intro}
The ongoing miniaturization of devices, with structures reaching the nanoscale, has lead to the development of extremely small thermometers \cite{PotGueBir97,PengPeng13}, some of which are so small that they can only be read out with powerful electron microscopes \cite{GaoBan02}. Even small thermal machines  working in the quantum regime have been suggested \cite{LinPopSkr10,MarEis11}.
In order to understand the working of such devices, it is necessary to formulate a theory of statistical mechanics and thermodynamics at the microscopic and mesoscopic scales. 
A prerequisite for this is a good understanding of the limitations of the concept of temperature at small scales. 

The problem with assigning locally a temperature to a small subsystem of a globally thermal system is the following: 
Interactions between the subsystem and its environment that generate correlations can lead to noticeable deviations of the state of the subsystem from a thermal state (see Figure~\ref{fig:intensivetemperatureproblem}). 
Hence, given only a subsystem state, there is no canonical way to assign a temperature to the subsystem. 
We call this the \emph{locality of temperature problem}. 

The first steps toward a solution of the locality of temperature problem have been taken in  Refs.~\cite{HarMahHess04-PRL,Har06,HarMah05}, and more recently, within the mindset of quantum information theory, in Ref.~\cite{Ferraro_intensive_T}. 
The general locality of temperature problem is, however, still open.
In this work, we conclusively solve it for spin and fermionic lattice systems. 

More precisely, we first show that the locality of temperature problem is equivalent to a decay of correlations measured by an averaged generalized covariance that precisely captures the response of expectation values to perturbations of the Hamiltonian. 
We expect the corresponding equality to be useful for applications beyond the scope of this article.

We then provide conditions under which the generalized covariance decays exponentially with the distance, including a detailed analysis of the preasymptotic, and of the finite-size regime. 
In particular, this exponential decay holds above a universal critical temperature that only depends on the ``connectivity'' of the underlying graph of the model and is an upper bound on physically relevant critical temperatures such as the Curie temperature.

While, in the low-temperature regime, quantum lattice models exhibit a great diversity of phases, many of which involve the emergence of long-range or topological order \cite{Vojta}, in the high-temperature regime, exponential clustering of correlations is expected. 
Our rigorous results help to delineate the boundary between these two regimes.
They build upon and go significantly beyond previous results on the clustering of correlations in classical systems \cite{Rue99_stat_mech_book}, for quantum gases \cite{Gin65}, i.e., translation-invariant Hamiltonians in the continuum, and cubic lattices \cite{BraRob97_op_alg_book,Gre69, Park-Yoo}.

\begin{figure}[t]
\centering
\includegraphics{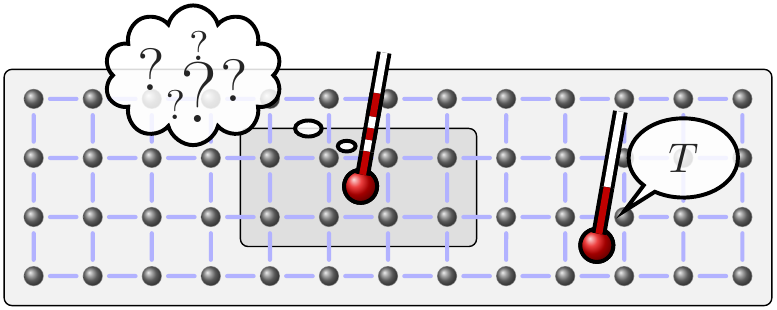}
\caption{The locality of temperature problem: Subsystems of thermal states are themselves, in general, not in a state with a locally well-defined temperature. 
Down to which length scale can temperature be an intensive quantity?}
\label{fig:intensivetemperatureproblem}
\end{figure}

Mathematically, we significantly contribute to the problem of whether and under which precise conditions thermal quantum states are stable against distant Hamiltonian perturbations. 
This is particularly relevant in the broader scheme of phase transitions in classical and quantum lattice models \cite{BhaKha1995,BraRob97_op_alg_book} 
as well as for the foundations of statistical mechanics and the equilibration and thermalization behavior of closed quantum systems \cite{GolLebTum06,CraDawEis08,RigDunOls08,LinPopSho09,Yuk2011,ReiKas12,Ours,MueAdlMas13,MasRonAugAci13}.
In the light of the recent surge of interest in these topics, developing a better understanding of the properties of thermal states has become a timely issue.

A major obstacle to progress on some of the most interesting open questions in this context, such as equilibration time scales in closed quantum systems, is the limited set of mathematical tools available for exploiting the structure of locally interacting Hamiltonians \cite{MasRonAugAci13}. 
Our results are among the first that explicitly exploit properties of local Hamiltonians, without being limited to very specific models. 

For quantum Monte Carlo simulations \cite{TroAleTre03}, our results provide a guideline as to how 
large the finite system size has to be taken in order to be able to sample from the right partition function and, conversely, to identify observables that are best suited to detect long-range correlations.

In fact, our results are reminiscent of known statements about ground states.
If a Hamiltonian has a unique ground state and is gapped, correlations in its ground state cluster exponentially and faraway regions become essentially uncorrelated. 
This is rigorously proven using information theory inspired methods such as Lieb-Robinson bounds and quasi-adiabatic continuation \cite{Koma,NacSim06,NacOgaSim06}. 
These rigorous results allow for certified algorithms that efficiently approximate ground states of gapped Hamiltonians on classical computers \cite{LanVazVid13}.
In the same spirit, we are able to show that an exponential decay of correlations renders thermal states locally efficiently simulatable. 

The rest of this paper is structured as follows: 
In Section~\ref{sec:settingandmainresults}, we formulate the precise setting and explain the main results and their implications. 
In Section~\ref{sec:connections}, we discuss connections to known results on phase transitions, thermalization in closed quantum systems, and matrix product operator approximations. 
Then, in Section~\ref{sec:details}, we discuss basic properties of the generalized covariance, explain how our results can be made applicable to finite-range $k$-body interactions, and state the results for fermionic lattices. 
We proceed with proving all theorems in Section~\ref{sec:proofs} and conclude in Section~\ref{sec:conclusions}. 
In the Appendix, we provide a detailed proof of two bounds on truncated cluster expansions, one of which is an important ingredient to the proof of clustering of correlations.

\section{Setting and main results}
\label{sec:settingandmainresults}
In this section, we introduce the setting, state the locality of temperature problem more formally, and state our results.

\subsection{Perturbation formula for thermal states}
As the first result, we state a perturbation formula, which is a general statement about the response of the expectation value of an observable in the thermal state, upon changes in the system Hamiltonian.
It does not make any reference to the locality structure of the Hamiltonian but turns out to be especially useful when correlations between local observables decay rapidly with distance. 

Throughout the paper, we assume the Hilbert space to be finite dimensional \cite{Note1} and denote the \emph{thermal state}, or Gibbs state, of a Hamiltonian $H$ at \emph{inverse temperature} $\beta$ by 
\begin{equation}
  \gibbs(\beta) \coloneqq \frac{\e^{-\beta\,H}}{Z(\beta)} \, ,
\end{equation}
with $Z(\beta) \coloneqq \Tr(\e^{-\beta\,H})$ being the \emph{partition function}. 
If we mean the thermal state or partition function of a different Hamiltonian $H'$, we write $\gibbs[H'](\beta)$ or $Z[H'](\beta)$.

We measure correlations by the \emph{(generalized) covariance} that we define for any two operators $A$ and $A'$, full-rank quantum state $\rho$, and parameter $\tau \in [0,1]$ as
\begin{equation}\label{eq:def_cov}
 \cov_\rho^\tau(A,A') 
 \coloneqq \Tr\left(\rho^\tau A\, \rho^{1-\tau} A'\right) - \Tr(\rho\, A) \Tr(\rho \, A') \, .
\end{equation}
We discuss various properties of this covariance and generalizations to arbitrary-rank quantum states in Section~\ref{sec:generalizedcovariance}. 

The generalized covariance appears naturally in our first theorem about the response of expectation values to perturbations. More precisely, when we are given an unperturbed Hamiltonian $H_0$ and perturbed Hamiltonian $H$, then the difference of expectation values in the corresponding thermal states is captured by that covariance: 

\begin{theorem}[Perturbation formula]\label{thm:perturbation_formula}
Let $H_0$ and $H$ be Hamiltonians acting on the same Hilbert space. 
For $s \in [0,1]$, define the \emph{interpolating Hamiltonian} by 
$H(s) \coloneqq H_0 +s\,(H-H_0)$ and denote its thermal state by
$\gibbs_s \coloneqq \gibbs[H(s)]$. 
Then, 
\begin{equation}\label{eq:perturbation_formula}
  \begin{split}
    \Tr\bigl(A\,\gibbs_0(\beta)\bigr) 
	& - \Tr\bigl(A\,\gibbs(\beta)\bigr)\\
    &= \beta\int_0^1 \rmd \tau \int_0^1 \rmd s\, \cov_{\gibbs_s(\beta)}^\tau (H-H_0,A) 
  \end{split}
\end{equation}
for any operator $A$.
\end{theorem}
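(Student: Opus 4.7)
The plan is to set $V \coloneqq H - H_0$, so that $\partial_s H(s) = V$, differentiate $s \mapsto \Tr(A\,\gibbs_s(\beta))$, and integrate from $s = 0$ to $s = 1$. Since $\gibbs_1(\beta) = \gibbs(\beta)$, the perturbation formula follows from the fundamental theorem of calculus once I establish
\begin{equation}
  \partial_s \Tr\!\bigl(A\,\gibbs_s(\beta)\bigr) = -\beta \int_0^1 \rmd\tau\,\cov^\tau_{\gibbs_s(\beta)}(V, A).
\end{equation}

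The main tool is Duhamel's formula, $\partial_s \e^{B(s)} = \int_0^1 \rmd\tau\,\e^{\tau B(s)}\, B'(s)\, \e^{(1-\tau) B(s)}$, applied to $B(s) = -\beta H(s)$. This yields $\partial_s \e^{-\beta H(s)} = -\beta \int_0^1 \rmd\tau\,\e^{-\tau\beta H(s)}\, V\, \e^{-(1-\tau)\beta H(s)}$, and via cyclicity of the trace also $\partial_s Z_s(\beta) = -\beta\, Z_s(\beta)\,\Tr(V\,\gibbs_s(\beta))$. Writing $\e^{-\tau\beta H(s)} = Z_s(\beta)^\tau \gibbs_s(\beta)^\tau$ to extract the partition function factors and applying the quotient rule gives
\begin{equation}
\begin{split}
  \partial_s \Tr\!\bigl(A\,\gibbs_s(\beta)\bigr) = &-\beta \int_0^1 \rmd\tau\,\Tr\!\bigl(\gibbs_s(\beta)^\tau V \gibbs_s(\beta)^{1-\tau} A\bigr) \\
  &+ \beta\,\Tr(V\,\gibbs_s(\beta))\,\Tr(A\,\gibbs_s(\beta)).
\end{split}
\end{equation}

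The key observation is that the second term is independent of $\tau$ and hence equals $\beta\int_0^1\rmd\tau\,\Tr(V\,\gibbs_s(\beta))\,\Tr(A\,\gibbs_s(\beta))$, which is precisely the product-of-expectations subtracted in the definition of $\cov^\tau_{\gibbs_s(\beta)}(V, A)$ in equation \eqref{eq:def_cov}. Combining the two contributions under a single $\tau$-integral therefore produces exactly the generalized covariance, and integrating $s$ from $0$ to $1$ yields the perturbation formula. I do not anticipate a substantial obstacle: the argument is essentially a direct computation, with the only points requiring mild care being the $\tau$-parametrization in Duhamel's formula and the algebraic observation that the correction from differentiating $1/Z_s(\beta)$ reconstructs the subtracted product term in the generalized covariance.
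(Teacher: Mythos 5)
Your proposal is correct and follows essentially the same route as the paper's proof: the fundamental theorem of calculus in $s$, Duhamel's formula for $\partial_s \e^{-\beta H(s)}$, and the observation that the derivative of $1/Z_s(\beta)$ reproduces the subtracted product of expectation values in the generalized covariance. The only cosmetic difference is that you differentiate the scalar $\Tr(A\,\gibbs_s(\beta))$ directly, whereas the paper differentiates the operator $\gibbs_s(\beta)$ and then traces against $A$.
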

The proof of the theorem, which is presented in Section~\ref{sec:proof_perturbation_formula}, relies on the fundamental theorem of calculus and Duhamel's formula.
We refer to the double integral over the covariance in Eq.~\eqref{eq:perturbation_formula} as the \emph{averaged (generalized) covariance}.

\subsection{Spin lattice systems}
In the remainder of this work, we will be concerned with spin and fermionic lattice systems.
We will only write out everything for spin systems and then later, in Sections~\ref{sec:fermionic_results} and~\ref{sec:fermionic_proofs}, explain the necessary modifications for fermionic systems. 
In the case of spin lattice systems, the Hilbert space is given by 
$\H = \bigotimes_{x \in V} \H_x$, where $V$ is called the \emph{vertex set} and is assumed to be finite.
To make the presentation more accessible, many of the following definitions are highlighted in Figure~\ref{fig:lattice}.
A \emph{local Hamiltonian} with \emph{interaction (hyper)graph} $(V,\Eset)$ is 
a sum
\begin{equation}\label{eq:def_local_H}
  H = \sum_{\lambda \in \Eset} h_\lambda 
\end{equation}
of \emph{local Hamiltonian terms} $h_\lambda$ acting on $\H$.  
The \emph{(hyper)edge set} $\Eset$ is the set of supports $\lambda = \supp(h_\lambda) \subset V$ of the local terms $h_\lambda$. 
For any subset of edges $\Fset \subset \Eset$, we denote by 
$H_\Fset \coloneqq \sum_{\lambda \in \Fset} h_\lambda$ the Hamiltonian only containing the interactions in $\Fset$, and for any subsystem $B \subset V$, we define the \emph{truncated Hamiltonian} to be
$\trunc H B \coloneqq H_{\Eset(B)}$, where 
$\Eset(B) \subset \{\lambda \in \Eset: \lambda \subset B\}$ is the \emph{restricted edge set} and we take $\trunc H B$ to be an operator on the Hilbert space $\H_B \coloneqq \bigotimes_{x \in B} \H_x$.

Given some subsystem $S\subset V$ there are two natural thermal states associated with it:
\begin{compactenum}[(i)]
  \item 
  $\trunc \gibbs S(\beta) \coloneqq \gibbs[\trunc H S](\beta)$ denotes the thermal state of $S$ alone, i.e., the thermal state of the truncated Hamiltonian $\trunc{H}{S}$.
  \item 
  $\gibbs^S(\beta) \coloneqq \Tr_{S\complement}( \gibbs(\beta))$ denotes the full thermal state \emph{reduced} to $S$.
\end{compactenum}
For a non-interacting Hamiltonian, these two states coincide, but, in general, this is not the case due to correlations between $S$ and its environment.
This discrepancy raises the question of how to define temperature locally as an intensive quantity, i.e., the \emph{locality of temperature problem}.

\begin{figure}[t]
\centering
\includegraphics{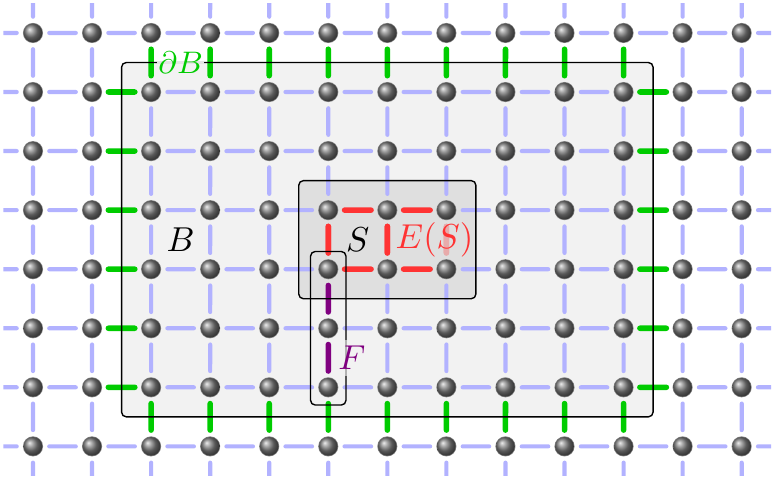}
\caption{
  A 2D square lattice: The boxes indicate subsystems $S\subset B \subset V$. 
  The edges in $S$ are $\Eset(S)$, boundary edges of $B$ are 
  $\boundary\! B$, 
  and~$\Fset$ is a shortest path connecting $S$ and $\boundary\! B$; hence, 
  $\dist(S,\boundary\! B)=|\Fset|=2$. 
  The set of edges $\Eset(S)$ is an example for an animal of size $|\Eset(S)|=7$, while $\boundary\! B$ is not connected and hence not an animal.
  }
\label{fig:lattice}
\end{figure}

\subsection{Locality of temperature}
In order to  locally assign a temperature to the subsystem $S \subset V$ it was suggested, e.g., in Ref.~\cite{Ferraro_intensive_T}, to extend $S$ by a buffer region and define the temperature of $S$ via the thermal state of the Hamiltonian truncated outside the extended region $B$, see Figure~\ref{fig:lattice}. 
The role of the buffer region $B$ is to remove the boundary effects and the correlations with the rest of the system that are intuitively the reason for the locality of temperature problem.
Nevertheless, it is not obvious how these correlations should be quantified and how large this buffer region needs to be. 
We will see shortly that Theorem~\ref{thm:perturbation_formula} answers these questions.

By $\boundary\! B \subset \Eset$, we denote the set of \emph{boundary edges} of $B$, i.e., the edges having overlap with both $B$ and its complement $B\complement \coloneqq V \setminus B$.
Then, by choosing $H_0 = H - H_{\boundary\! B}$ in Theorem~\ref{thm:perturbation_formula}, using that $\gibbs_0 = \trunc{\gibbs}{B} \otimes \trunc{\gibbs}{B\complement}$, and tracing over $B\complement$, we obtain the following:

\begin{corollary}[Truncation formula]\label{cor:truncation_formula}
Let $H$ be a local Hamiltonian, let $B \subset V$ be a subsystem, and denote the corresponding \emph{boundary Hamiltonian} by $H_{\boundary\! B}$, and the \emph{interpolating Hamiltonian} by 
$H(s) \coloneqq H - (1-s)\, H_{\boundary\! B}$ with its thermal state $g_s\coloneqq \gibbs[H(s)]$.
Then, for any operator $A=A_B \otimes \1_{B\complement}$ supported on $B$, 
\begin{equation}\label{eq:truncation_error_in_terms_of_cov}
  \begin{split}
    \Tr\bigl[A_B\,\trunc{\gibbs}{B}(\beta)\bigr] 
	& - \Tr\bigl[A\,\gibbs(\beta)\bigr]\\
    &= \beta\int_0^1 \rmd \tau \int_0^1 \rmd s\, \cov_{\gibbs_s(\beta)}^\tau (H_{\boundary\! B}, A) .    
  \end{split}
\end{equation}
\end{corollary}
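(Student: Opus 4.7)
The plan is to obtain the corollary as a direct specialization of Theorem~\ref{thm:perturbation_formula} by choosing the unperturbed Hamiltonian to be $H_0 \coloneqq H - H_{\boundary\! B}$. With this choice one has $H - H_0 = H_{\boundary\! B}$, and the interpolating Hamiltonian $H_0 + s(H-H_0)$ equals $H - (1-s)\,H_{\boundary\! B}$, which matches the definition of $H(s)$ given in the corollary. In particular, the thermal states $\gibbs_s$ and the integrand in Eq.~\eqref{eq:perturbation_formula} agree term by term with those in Eq.~\eqref{eq:truncation_error_in_terms_of_cov}, so the entire right-hand side is inherited from the theorem with no further work.

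The only thing left to verify is that the first term on the left-hand side collapses to $\Tr[A_B\,\trunc{\gibbs}{B}(\beta)]$. Since $\boundary\! B$ consists precisely of the edges overlapping both $B$ and $B\complement$, the remaining edges partition as $\Eset \setminus \boundary\! B = \Eset(B) \dunion \Eset(B\complement)$, and consequently
\begin{equation}
  H_0 = \trunc{H}{B}\otimes \1_{B\complement} + \1_B \otimes \trunc{H}{B\complement}.
\end{equation}
The two summands act on disjoint tensor factors and therefore commute, so exponentiation yields the product structure $\gibbs_0(\beta) = \trunc{\gibbs}{B}(\beta)\otimes \trunc{\gibbs}{B\complement}(\beta)$. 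For an observable of the form $A = A_B \otimes \1_{B\complement}$, taking traces factorizes as $\Tr[A\,\gibbs_0(\beta)] = \Tr_B[A_B\,\trunc{\gibbs}{B}(\beta)]\cdot \Tr_{B\complement}[\trunc{\gibbs}{B\complement}(\beta)] = \Tr_B[A_B\,\trunc{\gibbs}{B}(\beta)]$, which is the claimed expression.

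There is no real obstacle here once Theorem~\ref{thm:perturbation_formula} is in hand; the corollary is pure bookkeeping. The one subtlety worth flagging is the hypergraph convention: defining $\boundary\! B$ as the set of (hyper)edges overlapping both $B$ and $B\complement$ is precisely what makes the three-way partition $\Eset = \boundary\! B \dunion \Eset(B) \dunion \Eset(B\complement)$ hold cleanly, which in turn is what reduces the identification of the truncated term to a one-line tensor-product argument. For $k$-body interactions with $k \ge 3$ this convention should be kept in mind so that the decomposition of $H_0$ above remains valid verbatim.
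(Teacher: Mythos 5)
Your proposal is correct and matches the paper's own derivation exactly: the paper likewise obtains the corollary by setting $H_0 = H - H_{\boundary\! B}$ in Theorem~\ref{thm:perturbation_formula}, using the factorization $\gibbs_0 = \trunc{\gibbs}{B} \otimes \trunc{\gibbs}{B\complement}$, and tracing over $B\complement$. Your spelled-out justification of the factorization via the edge partition $\Eset = \boundary\! B \dunion \Eset(B) \dunion \Eset(B\complement)$ is exactly the bookkeeping the paper leaves implicit.
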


Now we choose $S \subset B \subset V$ (see Figure~\ref{fig:lattice}). 
If, for a given inverse temperature $\beta$, correlations over the distance between $S$ and $\boundary\! B$ are negligible, then the corollary clearly implies that 
\begin{equation}
\Tr[A\, \gibbs(\beta)] \approx \Tr[A_B\,\trunc{\gibbs}{B}(\beta)]
\end{equation}
for any observable 
$A_B = A_S \otimes \1_{B\setminus S}$ on $S$.
Also note that such an approximate equality does not hold whenever average correlations over lengths exceeding the distance between $S$ and $\boundary\! B$ are non-negligible.

Hence, we have the following equivalence for the temperature defined via thermal states:

\begin{implication}[Locality of temperature] 
 Temperature is intensive on a given length scale if and only if correlations (measured by the averaged generalized covariance) are negligible compared to $1/\beta$ on that length scale. 
\end{implication}

In order to fully exploit Corollary~\ref{cor:truncation_formula}, it is necessary to bound the generalized covariance, which we will do for high temperatures in the next section. 

\subsection{Clustering of correlations at high temperatures}
For small temperatures, correlations can be arbitrarily long-ranged, as is, e.g., the case for the ferromagnetic Ising model in two or higher dimensions below the Curie temperature.
On the other hand, above a universal critical temperature, depending only on a local property of the interaction graph, correlations cluster exponentially, as we will see next. 
Given the combinatorial nature of parts of the arguments leading to this result, 
we need additional notation related to edges and vertices of the lattice. Most of the following definitions can be understood intuitively, as is shown in Figure~\ref{fig:lattice}.

We say that two subsystems $X,Y \subset V$ \emph{overlap} if $X \cap Y \neq \emptyset$, 
a set $X \subset V$ and a set $\Fset\subset \Eset$ \emph{overlap} if $\Fset$ contains an edge that overlaps with $X$, and two sets $\Fset,\Fset'\subset \Eset$ \emph{overlap} if $\Fset$ overlaps with any of the edges in $\Fset'$. 
A subset of edges $\Fset \subset \Eset$ \emph{connects} $X$ and $Y$ if $\Fset$ contains a sequence of pairwise overlapping edges such that the first overlaps with $X$ and the last overlaps with $Y$ and similarly for the case where $X$ and/or $Y$ are just vertices.

The graph distance on $V$, and also the induced distance on subsets of $V$, are denoted by $\dist$. 
The distance $\dist(X,\Fset)$ of a subset $X\subset V$ and a subset $\Fset \subset \Eset$ is $0$ if $X$ and $\Fset$ overlap and otherwise equal to the size of the smallest subset of $\Eset$ that connects $X$ and $\Fset$. 
Sometimes, we denote the support of an operator by the operator itself, e.g., for two operators $A$ and $A'$, their distance is 
$\dist(A,A') \coloneqq \dist(\supp A, \supp A')$ and 
$\boundary\! A \subset \Eset$ are the edges across the boundary of $\supp(A)$.

A subset of edges $\Fset \subset \Eset$ that connects all pairs of its elements $\lambda,\lambda' \in \Fset$ is called \emph{connected}.
Such a connected set $\Fset$ is also called an \emph{(edge) animal}. 
The size $|\Fset|$ of an animal $\Fset$ is given by the number of edges contained in $\Fset$.
The results presented here apply to Hamiltonians with interaction graphs $(V,\Eset)$ whose number $a_m$ of lattice animals of size $m$ containing some fixed edge is exponentially bounded. 
With
\begin{equation}\label{eq:animal_bound}
  a_m \coloneqq \sup_{\lambda \in \Eset} |\{ \Fset\subset \Eset \ \text{ connected}: \lambda \in \Fset,\ |\Fset|=m \}| \, ,
\end{equation}
the \emph{growth constant} $\animalc$ is the smallest constant satisfying
\begin{equation} \label{eq:animal_const}
  a_m \leq \animalc^m .
\end{equation}
For example, the growth constant of a $D$-dimensional cubic lattice can be bounded as $\animalc \leq 2\,D\,\e$ (Lemma~2 in Ref.~\cite{MirSlade2010}), where $\e $ is Euler's number. 
Moreover, $\animalc$ is finite for any regular lattice \cite{Pen94}. 
Upper bounds to growth constants for so-called spread-out graphs \cite{MirSlade2010} render our results applicable for the case of bounded-range two-body interactions. 
By a simple embedding argument, one can also bound the growth constant for the case of local $k$-body interactions on a regular lattice, which we explain in Section~\ref{sec:embedding} in detail. 

For any operator $A$ and $p \in [1,\infty]$, we denote by $\norm{A}_p$ its Schatten $p$-norm; e.g., $\norm{A}_\infty$ is the operator norm and $\norm{A}_1$ is trace norm of $A$. 
We call $J\coloneqq \max_{\lambda \in \Eset} \norm{h_\lambda}_\infty$ the \emph{local interaction strength} of a local Hamiltonian, as given in Eq.~\eqref{eq:def_local_H}.

We are able to provide a universal inverse \emph{critical temperature} $\beta^\ast$, which is, in particular, independent of the system size, below which correlations decay exponentially with a \emph{thermal correlation length} $\xi(\beta)$:
  
\begin{theorem}[Clustering of correlations at high temperatures] \label{thm:clustering}
Let $\gibbs(\beta)$ be the thermal state at inverse temperature $\beta$ of a local Hamiltonian with finite interaction (hyper)graph $(V,\Eset)$ having growth constant $\animalc$ and local interaction strength $J$. 
Define the quantities
\begin{equation}\label{eq:crit_beta_def}
\beta^\ast \coloneqq \ln\bigl[\bigl(1+\sqrt{1+4/\animalc}\bigr)/2\bigr]/(2\,J) 
\end{equation}
and
\begin{equation} \label{eq:correlation_length_def}
\xi(\beta) \coloneqq \bigl|\ln\bigl[\animalc\, \e^{2\,|\beta|\,J}\bigl(\e^{2\,|\beta|\,J}-1\bigr)\bigr]\bigr|^{-1}  \, .
\end{equation}
Then, for every $|\beta|<\beta^\ast$, parameter $\tau \in [0,1]$, every two operators $A$ and $B$ with 
$\dist(A, B) \geq L_0(\beta,a)$ [given in Eq.~\eqref{eq:L_zero_def}], 
and $a \coloneqq \min\{|\boundary\! A |,|\boundary\! B|\}$, 
\begin{equation} \label{eq:clustering} 
 |\cov^\tau_{\gibbs(\beta)} (A, B)| 
 \leq 
 \frac{4 \, a\, \norm{A}_\infty\, \norm{B}_\infty}  {\ln(3)\, (1-\e^{-1/\xi(\beta)})} \, 
 \e^{-\dist(A,B)/\xi(\beta)} .
\end{equation}
\end{theorem}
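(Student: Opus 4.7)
The plan is a cluster expansion for the generalized covariance, leveraging that only connected edge sets bridging $\supp A$ and $\supp B$ survive the subtraction in Eq.~\eqref{eq:def_cov}.

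First I would Taylor-expand $\e^{-\tau\beta H}$ and $\e^{-(1-\tau)\beta H}$ appearing in $\gibbs(\beta)^\tau$ and $\gibbs(\beta)^{1-\tau}$ as power series in $\beta H = \beta\sum_{\lambda\in\Eset} h_\lambda$, and similarly expand the two exponentials inside $\Tr(\gibbs A)\,\Tr(\gibbs B)$. Collecting terms by their underlying edge set $\Fset\subset\Eset$ (forgetting ordering but keeping multiplicities), the $\Fset$-contribution to $\Tr(\gibbs^\tau A\,\gibbs^{1-\tau}B)$ factorizes across the connected components of $\Fset$; whenever no component of $\Fset$ simultaneously overlaps $\supp A$ and $\supp B$, that factorization decouples $A$ from $B$ and the contribution exactly cancels the corresponding term in $\Tr(\gibbs A)\,\Tr(\gibbs B)$. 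What remains are only those $\Fset$ containing a connected sub-animal meeting both $\supp A$ and $\supp B$; such a bridging animal has size at least $\dist(A,B)$ and must contain edges from both $\boundary\! A$ and $\boundary\! B$.

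Next I would bound the contribution of a single animal $\Fset$ of size $m$. Combining $\norm{h_\lambda}_\infty \leq J$ with H\"older's inequality on the trace, and handling the partition-function denominator $Z(\beta)$ through a truncated cluster expansion of the ratio $Z(\beta)/Z[H_{\Eset\setminus\Fset}](\beta)$ (using the estimates the paper relegates to the Appendix), I expect a per-animal bound of the form $C\,\norm{A}_\infty\norm{B}_\infty \bigl[\e^{2|\beta|J}(\e^{2|\beta|J}-1)\bigr]^m$, independent of $\tau$. The $\e^{2|\beta|J}$ factor per edge is a uniform Peierls-type estimate from $\norm{\e^{-\beta h_\lambda}}_\infty$ and from the partition-function ratio, whereas the $(\e^{2|\beta|J}-1)$ factor per edge encodes that the subtraction in the covariance kills the trivial linear-in-$\beta h_\lambda$ contribution of each edge, so that edges enter only through their ``nonlinear excess''. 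Anchoring the enumeration at an edge of whichever of $\boundary\! A,\boundary\! B$ is smaller and using $a_m\leq\animalc^m$ from Eq.~\eqref{eq:animal_bound}, the total is controlled by the geometric series
\begin{equation*}
\sum_{m\geq\dist(A,B)} a\,\bigl[\animalc\,\e^{2|\beta|J}(\e^{2|\beta|J}-1)\bigr]^m .
\end{equation*}
This converges precisely when $\animalc\,\e^{2|\beta|J}(\e^{2|\beta|J}-1)<1$; solving the quadratic recovers $|\beta|<\beta^\ast$ from Eq.~\eqref{eq:crit_beta_def}, the common ratio is $\e^{-1/\xi(\beta)}$ by Eq.~\eqref{eq:correlation_length_def}, and the geometric tail produces the factor $\e^{-\dist(A,B)/\xi(\beta)}/(1-\e^{-1/\xi(\beta)})$ in Eq.~\eqref{eq:clustering}. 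The universal constant $4/\ln(3)$ should absorb the numerical factors from the per-animal bound and from the translation between ordered sequences and unordered animals.

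The main obstacle is the per-animal bound, specifically isolating the factor $(\e^{2|\beta|J}-1)$ rather than $\e^{2|\beta|J}$ per edge: this demands that the subtraction in the covariance effectively replace each edge contribution by its nonlinear excess, which forces a careful reorganization of the cluster expansion around the ``outside'' Hamiltonian $H_{\Eset\setminus\Fset}$ so that external edges factor out cleanly and only the excess on the bridging animal survives; this is precisely the content of the truncated cluster expansion bounds in the Appendix. A secondary bookkeeping issue is the preasymptotic threshold $L_0(\beta,a)$, which demarcates the regime where the leading geometric term dominates the accumulated lower-order corrections; pinning down $L_0$ amounts to weighing the boundary prefactor $a$ against the smallest $m$ at which the exponential decay takes over.
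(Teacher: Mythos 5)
Your overall strategy coincides with the paper's: expand in words of local terms, observe that only contributions containing an animal bridging $\supp A$ and $\supp B$ survive, anchor the animal count at the smaller of $\boundary\! A,\boundary\! B$, and sum a geometric series with ratio $\animalc\,\e^{2|\beta|J}(\e^{2|\beta|J}-1)$, which indeed reproduces $\beta^\ast$ and $\xi(\beta)$. However, the step you state in one sentence --- that for a disconnected edge set the contribution to $\Tr(\gibbs^\tau A\,\gibbs^{1-\tau}B)$ ``exactly cancels'' the corresponding term of $\Tr(\gibbs A)\Tr(\gibbs B)$ --- is the genuinely hard point, and your one-copy expansion does not deliver it. The two sides have mismatched normalizations ($Z$ versus $Z^2$), and at fixed word the weights $\tau^{|w|}(1-\tau)^{|v|}$ from expanding $\e^{-\tau\beta H}$ and $\e^{-(1-\tau)\beta H}$ separately do not cancel against the product of traces; one must first resum over all ways of splitting each edge's occurrences between the two exponentials. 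The paper circumvents this by passing to four copies: it writes $\cov^\tau_\rho(A,B)=\tfrac12\Tr\bigl(\swap^{1,3}\swap^{2,4}(A^{(-)}\otimes B^{(-)})\,\rho_4\bigr)$ with $\rho_4=\e^{-\beta\tilde H^{(+)}}/Z(\beta)^2$ and local terms $\tilde h^{(+)}_\lambda$ of norm $\leq 2J$ (this is where the $2$ in $\e^{2|\beta|J}$ actually originates), so that both terms of the covariance live in a single expansion and disconnected words are killed identically by the antisymmetry $\hat A=-\swap^{1,2}\hat A\,\swap^{1,2}$. Without this or an equivalent device, your cancellation claim for general $\tau\in\,]0,1[$ is an assertion, not a proof.

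Two further mechanisms are misattributed. The per-edge factor $\e^{2|\beta|J}-1$ has nothing to do with the covariance subtraction killing a linear term: it arises because every edge of the bridging animal $G$ must occur at least once in the word, so the sum $\sum_{w\in G^\ast:\,G\subset w}|\beta J|^{|w|}/|w|!$ equals $(\e^{|\beta J|}-1)^{|G|}$ (Lemma~\ref{lem:eta_bound}), while the extra $\e^{2|\beta|J}$ per edge comes from a Golden--Thompson estimate controlling $\e^{-\beta \tilde H_{(\ol G)\complement}}$ against $Z(\beta)$. Second, your single geometric series $\sum_{m\geq \dist(A,B)}a\,\alphay^m$ accounts only for words containing exactly one long cluster touching $\boundary\! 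A$; summing over the number $k$ of such clusters produces $\exp\bigl(a\,\alphay^L/(1-\alphay)\bigr)-1$ rather than $a\,\alphay^L/(1-\alphay)$, and the threshold $L_0(\beta,a)$ in the theorem is precisely the condition under which this exponential can be linearized via $\e^x-1\leq x(\e^{\ln 3}-1)/\ln 3$. So $L_0$ is not a preasymptotic correction term you can wave at; it is forced by the multi-cluster combinatorics your bookkeeping omits.
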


The proof is given in Section~\ref{sec:proof_clustering}.

In the following sections, we outline some of the applications of Theorem~\ref{thm:clustering}.

\subsection{Universal locality and stability at high temperatures}
\label{sec:stability_locality}
If one is interested in the state $\gibbs^S(\beta)$ of some subsystem $S$, then one can truncate the Hamiltonian to $S$ extended by some buffer region and obtain the approximation via the thermal state of the truncated Hamiltonian.
The following theorem implies that the approximation error is exponentially small in the width of the buffer region.

For any operator $\rho$, we denote its \emph{reduction} to a subsystem $S \subset V$ by
$\rho^S \coloneqq \Tr_{S\complement}[\rho]$ and note that 
\begin{equation} \label{eq:one_norm_imterpretatoin}
\bnorm{\rho^S}_1 = \sup\left\{ |\Tr[A\,\rho]|: \supp(A) = S,\, \norm{A}_\infty =1\right\}.
\end{equation}
Then, as a consequence of Corollary~\ref{cor:truncation_formula} and Theorem~\ref{thm:clustering}, we obtain the following:
  
\begin{figure}
\centering
\includegraphics{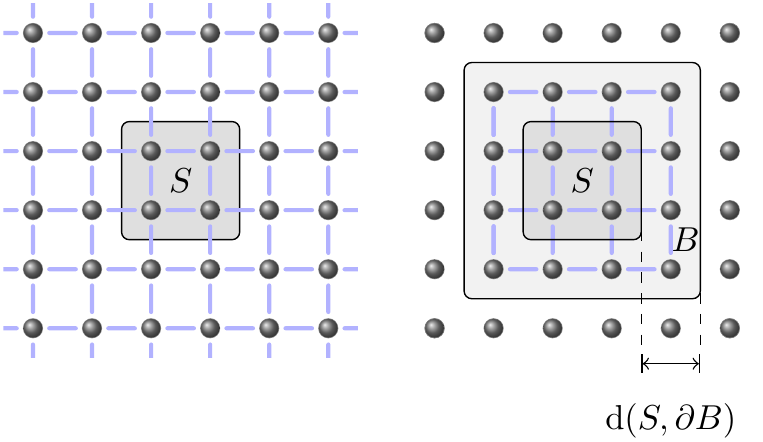}
\caption{
  The truncation from Corollary~\ref{cor:intensivity} and~\ref{imp:efficient_approx}: For $\beta< \beta^\ast$ and $\dist(S,\boundary\! B) \ll \xi(\beta)$ Corollary~\ref{cor:intensivity} implies that $\gibbs^S(\beta)$, depicted on the left, and $\trunc{\gibbs}{B}^S(\beta)$, depicted on the right, are approximately equal.}
\label{fig:buffer-non-buffer}
\end{figure}

\begin{corollary}[Universal locality at high temperatures]\label{cor:intensivity}
Let $H$ be a Hamiltonian satisfying the conditions of Theorem~\ref{thm:clustering}, 
let $|\beta|< \beta^\ast$, and let $S \subset B \subset V$ be subsystems with 
$\dist(S, \boundary\! B) \geq L_0(\beta, |\boundary\! S|)$. 
Then, 
\begin{equation}
 \norm{\gibbs^S(\beta)-\trunc{\gibbs}{B}^{S}(\beta)}_1  
 \leq
 \frac{ v\, |\beta|\, J } {1-\e^{-1/\xi(\beta)}} \, 
 \e^{- \dist(S, \boundary\! B) /\xi(\beta)} ,
\end{equation}
where $\trunc{\gibbs}{B}^S$ denotes the thermal state of $B$ reduced to $S$ and $v \coloneqq 4\, |\boundary\! S|\,|\boundary\! B|/\ln(3)$. 
\end{corollary}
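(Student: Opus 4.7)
The strategy is to reduce the trace-norm difference to a supremum over test observables, apply the truncation formula, and then bound the resulting covariance integral using the clustering theorem edge by edge. Let $A_S$ be any operator supported on $S$ with $\norm{A_S}_\infty = 1$, extended trivially to $A = A_S \otimes \1_{S\complement}$ on $\H$, so that $\Tr[A\,\gibbs(\beta)] = \Tr[A_S\,\gibbs^S(\beta)]$ and analogously for $\trunc{\gibbs}{B}^S(\beta)$. Writing $A = A_B \otimes \1_{B\complement}$ with $A_B = A_S \otimes \1_{B\setminus S}$, Corollary~\ref{cor:truncation_formula} combined with linearity of the covariance in its first argument and the decomposition $H_{\boundary\! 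B} = \sum_{\lambda \in \boundary\! B} h_\lambda$ yields
\begin{equation*}
  \Tr[A_S\,\gibbs^S(\beta)] - \Tr[A_S\,\trunc{\gibbs}{B}^S(\beta)] = -\beta \sum_{\lambda \in \boundary\! B} \int_0^1 \rmd\tau \int_0^1 \rmd s\, \cov_{\gibbs_s(\beta)}^\tau(h_\lambda, A).
\end{equation*}

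The key estimate is a clustering bound on each term that is uniform in $s$. A crucial observation is that the interpolating Hamiltonian $H(s) = H - (1-s)\,H_{\boundary\! B}$ has the same interaction (hyper)graph $(V,\Eset)$ as $H$ and, since $s\in[0,1]$, a local interaction strength bounded by $J$. Hence Theorem~\ref{thm:clustering} applies to $\gibbs_s(\beta)$ with the same growth constant $\animalc$, critical inverse temperature $\beta^\ast$, and correlation length $\xi(\beta)$, uniformly in $s$. For each $\lambda\in\boundary\! B$, any path from $S$ to the vertex set $\lambda$ provides a connection from $S$ to $\boundary\! B$, so $\dist(h_\lambda,A)\geq\dist(S,\boundary\! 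B)\geq L_0(\beta,|\boundary\! S|)$, while $a=\min\{|\boundary\! h_\lambda|,|\boundary\! A|\}\leq|\boundary\! S|$ and $\norm{h_\lambda}_\infty\leq J$. Invoking Theorem~\ref{thm:clustering} therefore gives
\begin{equation*}
  |\cov_{\gibbs_s(\beta)}^\tau(h_\lambda,A)| \leq \frac{4\,|\boundary\! S|\,J}{\ln(3)\,(1-\e^{-1/\xi(\beta)})}\,\e^{-\dist(S,\boundary\! B)/\xi(\beta)}
\end{equation*}
for every $\tau,s\in[0,1]$.

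Summing the triangle inequality over the $|\boundary\! B|$ edges of $\boundary\! B$ and performing the trivial $\tau,s$ integrations yields the desired bound on $|\Tr[A_S\,(\gibbs^S(\beta)-\trunc{\gibbs}{B}^S(\beta))]|$; taking the supremum over $A_S$ via the dual norm formula Eq.~\eqref{eq:one_norm_imterpretatoin} produces the claimed $1$-norm estimate with $v=4\,|\boundary\! S|\,|\boundary\! B|/\ln(3)$. The main point requiring care is the uniform applicability of Theorem~\ref{thm:clustering} along the entire interpolation, which the structural observation above resolves; a secondary minor point is that the hypothesis $\dist(S,\boundary\! B)\geq L_0(\beta,|\boundary\! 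S|)$ suffices also for the effective $a\leq|\boundary\! S|$ that appears in each term, which I expect follows from monotonicity of $L_0(\beta,\cdot)$ in its second argument.
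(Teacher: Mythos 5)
Your proof is correct and follows essentially the same route as the paper: Corollary~\ref{cor:truncation_formula}, a uniform-in-$s$ application of Theorem~\ref{thm:clustering} to the interpolating thermal states, and the dual-norm formula \eqref{eq:one_norm_imterpretatoin}. The only (cosmetic) difference is that you bound the covariance edge by edge over $\boundary\! B$ and sum, whereas the paper applies the clustering bound once to the whole operator $H_{\boundary\! B}$ using $\norm{H_{\boundary\! B}}_\infty\leq|\boundary\! B|\,J$; both give the same constant $v$, and your explicit checks (same interaction graph and interaction strength $\leq J$ along the interpolation, $\dist(h_\lambda,A)\geq\dist(S,\boundary\! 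B)$, monotonicity of $L_0$) are points the paper leaves implicit.
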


Similarly, as a corollary of Theorems~\ref{thm:perturbation_formula} and~\ref{thm:clustering} we obtain the following:

\begin{implication}[Stability]\label{impl:stability}
  Below the critical inverse temperature $\beta^*$ [from Eq.~\eqref{eq:crit_beta_def}], thermal states of local Hamiltonians are exponentially stable against distant locally bounded perturbations.
\end{implication}

\subsection{Efficient approximation}
Corollary~\ref{cor:intensivity} on the universal locality of thermal states also has the following  complexity theoretic consequence:

\begin{implication}[Efficient approximation]\label{imp:efficient_approx}
  For $|\beta| < \beta^\ast$, local expectation values can be approximated with a computational cost independent of the system size and bounded polynomially in the reciprocal error. 
\end{implication}

In this sense, the error bound (see Figure~\ref{fig:cone}) of Corollary~\ref{cor:intensivity} is reminiscent of the \emph{quasi-locality of dynamics}, as, e.g., presented in Ref.~\cite{BarKli12}, which is a consequence of \emph{Lieb-Robinson bounds} \cite{Pou10,NacVerZag11}. 
The quasi-locality theorem \cite{BarKli12} allows for an approximation of time evolved local observables by truncating the Hamiltonian in the time evolution operator at a distance $L>0$ far away from the space time cone of the observable's support and has an approximation error that is exponentially small in $L$.

\begin{figure}
\centering
\includegraphics[width=.95\linewidth]{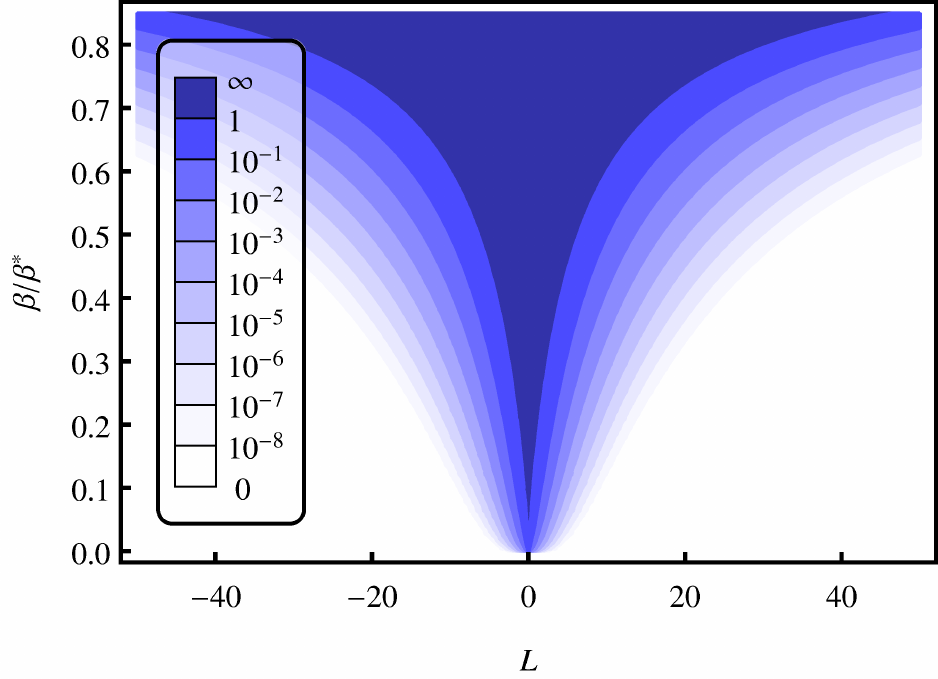}
\caption{One can obtain slightly tighter error bounds in Corollaries~\ref{cor:intensivity} and~\ref{cor:intensivity_fermions} by using Eq.~\eqref{eq:cov_intermediate_bound_general} directly. 
The plot shows this bound on the approximation error $\norm{\gibbs^S(\beta)-\trunc{\gibbs}{B}^{S}(\beta)}_1$ for the case of $S$ being a single site on a 2D square lattice as a function of the inverse temperature $\beta$ in units of the critical temperature and the width of the buffer region $L$. 
This can be seen as an imaginary time Lieb-Robinson ``cone'' with diverging width as $\beta \to \beta^\ast$.
}
\label{fig:cone}
\end{figure}

\subsection{Fermions}
In Ref.~\cite{Has04_fermions}, it was shown for \emph{fermionic systems} that two-point functions of observables that are odd polynomials in the fermionic operators decay exponentially with a correlation length proportional to the inverse temperature. 
Here, we obtain an exponential decay of the covariance above the critical temperature for all operators.

\begin{observation}[Fermions]\label{obs:fermions}
All results also hold for locally interacting fermions on a lattice.
See Theorem~\ref{thm:clustering_fermions} and Corollaries~\ref{cor:truncation_formula_fermions} and~\ref{cor:intensivity_fermions} in Section~\ref{sec:fermionic_results} for the precise statements. 
\end{observation}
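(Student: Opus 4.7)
The plan is to verify that each of the preceding statements admits a fermionic analogue by carefully tracking where the tensor-product structure of the spin lattice was used and replacing it with the $\mathbb{Z}_2$-graded structure of the CAR algebra. First I would set up the fermionic framework: introduce the fermionic Fock space over $V$ with creation and annihilation operators satisfying the canonical anticommutation relations, and decompose the algebra of observables into even and odd parts under fermion parity. I would restrict attention to Hamiltonians $H = \sum_{\lambda \in \Eset} h_\lambda$ in which every local term $h_\lambda$ is even and supported on $\lambda \subset V$ (meaning $h_\lambda$ lies in the even subalgebra generated by the modes indexed by $\lambda$). This is the physically relevant class and is the one place where a parity constraint enters; once imposed, even local terms on disjoint supports commute, which restores the essential algebraic property on which the spin proofs rely.

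Next I would establish the fermionic analogue of Theorem~\ref{thm:perturbation_formula}. Since the argument is purely algebraic---the fundamental theorem of calculus applied to $s \mapsto \Tr(A\, \gibbs_s(\beta))$ together with Duhamel's formula for the derivative of $\e^{-\beta H(s)}$---it transfers verbatim, with the CAR algebra replacing the spin tensor product, and no parity hypothesis on $A$ or on $H - H_0$ is needed. I would then obtain the corresponding truncation formula by choosing $H_0 = H - H_{\boundary\! B}$; to identify $\gibbs_0$ with the fermionic analogue of $\trunc{\gibbs}{B} \otimes \trunc{\gibbs}{B\complement}$ I would use that $\trunc{H}{B}$ and $\trunc{H}{B\complement}$, being even and supported on disjoint regions, commute, so their exponentials factor and the trace splits as a product of traces over the even subalgebras of $B$ and $B\complement$.

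The heart of the proof is the fermionic version of Theorem~\ref{thm:clustering}. I would reuse the cluster-expansion strategy: expand $\e^{-\beta H}$ as a power series in $H$, distribute over the edges of $\Eset$, and group the resulting products of $h_\lambda$ according to the connected components of their supports. The combinatorics controlling this expansion---the animal counts $a_m$ and the bound $a_m \leq \animalc^m$---are purely graph-theoretic and therefore unchanged. All operator-norm estimates on products of local terms and the cluster-expansion bounds proved in the Appendix carry through using only $\norm{h_\lambda}_\infty \leq J$ and cyclicity of the trace. Crucially, because each $h_\lambda$ is even, rearranging monomials across disjoint connected components incurs no fermionic signs, so the factorization of the expansion proceeds exactly as in the spin case. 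From here I would deduce the fermionic versions of Corollary~\ref{cor:intensivity} and Implication~\ref{impl:stability} by combining the fermionic truncation formula with the fermionic clustering bound just as before, with $\rho^S$ defined via the conditional expectation onto the CAR subalgebra of $S$.

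The main obstacle I expect is bookkeeping in the cluster expansion: one must verify that every step that silently uses commutativity of local operators on disjoint supports either acts on even terms, where commutativity is automatic, or is a manipulation inside the trace of a single monomial, where cyclicity together with parity eliminates any residual signs. A secondary subtlety is the definition of the reduction $\rho^S$: unlike the spin case, it is not given by a tensor-product partial trace but by the conditional expectation onto the even CAR subalgebra of $S$. One needs the standard fact that this conditional expectation agrees with the usual partial trace on even observables, so that the characterization \eqref{eq:one_norm_imterpretatoin} of the trace norm---and hence the statement of Corollary~\ref{cor:intensivity}---remains meaningful in the fermionic setting.
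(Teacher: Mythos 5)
Your overall strategy is the same as the paper's: impose that every local term $h_\lambda$ is even, observe that even operators with disjoint supports commute so that the factorization $\gibbs_0 = \gibbs[\trunc{H}{B}]\,\gibbs[\trunc{H}{B\complement}]$ and the truncation formula go through, and then check that every lemma feeding into the cluster expansion uses only the local bound $\norm{h_\lambda}_\infty \leq J$, cyclicity of the trace, commutativity of disjointly supported even terms, and purely graph-theoretic animal counting. That is exactly how Section~\ref{sec:fermionic_proofs} argues, and your additional remarks on the parity superselection rule and on interpreting the reduction $\rho^S$ consistently with Eq.~\eqref{eq:one_norm_imterpretatoin} are sound.

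The one place where your plan is incomplete is the clustering theorem for the \emph{generalized} covariance at $\tau \in {]0,1[}$. You describe expanding $\e^{-\beta H}$ on a single copy and grouping monomials by connected components, but the proof of Theorem~\ref{thm:clustering} does not work on a single copy: it rewrites $\cov^\tau_{\gibbs(\beta)}(A,B)$ via Eq.~\eqref{eq:swaptrick} as a trace over \emph{four} copies carrying $\rho^\tau\otimes\rho^\tau\otimes\rho^{1-\tau}\otimes\rho^{1-\tau}$, and the vanishing of the disconnected contributions rests on the sign-flip symmetry of $A^{(-)}$ under $\swap^{1,2}$. For fermions one must say what ``four copies'' means: the paper stresses that the relevant Hilbert space is the tensor product of four identical Fock spaces (four mutually distinguishable fermion species with the canonical inner product), \emph{not} the Fock space of four times the number of modes, so that the operators $\swap^{i,j}$ exchange tensor factors rather than fermionic modes and the symmetry relations used to kill words in $\clusters[\boundary A]{\geq L}{}\complement$ survive. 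Without fixing this interpretation, the swap operators and the factorization of $t(w)$ for disconnected words are not even well defined in the fermionic setting. This is a repairable omission rather than a wrong turn, but it is precisely the step where fermionic statistics could have broken the argument, so it needs to be stated explicitly.
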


\section{Relations to known results}
\label{sec:connections}
In this section, we discuss the critical temperature from the clustering theorem, 
the connection of this work to concepts related to thermalization, and
approximations of thermal states with so-called \emph{matrix product operators}. 
As a last point, we briefly mention similarities with local topological quantum order. 

\subsection{Critical temperatures and phase transitions}
Our results show that the  quantity $\beta^\ast$, as defined in Eq.~\eqref{eq:crit_beta_def}, provides a potentially coarse but universal and completely general upper bound on physical critical temperatures like the \emph{Curie temperature}.
For the ferromagnetic two-dimensional isotropic Ising model without external field, our bound yields, for example. $1/(\beta^\ast\,J) = 2/\ln((1+\sqrt{1+1/\e})/2) \approx 24.58$, 
whereas the phase transition between the disordered paramagnetic and the ordered ferromagnetic phases is known to really happen at
$1/(\beta_c\,J) = 2/\ln(1+\sqrt{2}) \approx 2.27$
\cite{BhaKha1995}. 
Our universal bound is about an order of magnitude higher than the actual value for this example. 
To put this discrepancy into perspective, it is worth pointing out that it is generally a very difficult task to estimate physical critical temperatures --- numerically or analytically. 
In fact, analytic expressions for critical temperatures or even just bounds on their values are known only for very few models. 

One of the few known general statements is the \emph{Mermin-Wagner-Hohenberg theorem} \cite{MerWag1966}. 
It states that in certain low-dimensional systems with short-range interactions there cannot be any phase transition involving the spontaneous breaking of a continuous symmetry at any non-zero temperature. 
However, such systems can still have a low-temperature phase with quasi-long-range order characterized by power-law-like decaying correlations. 
Consequently, even for systems covered by the Mermin-Wagner-Hohenberg theorem, our Theorem~\ref{cor:intensivity} is nontrivial. 
For example, it implies an upper bound on the critical temperature of the Kosterlitz-Thouless transition in the two-dimensional $XY$-model \cite{FroSpe1981}. 

In this work, we have concentrated on the general picture, but it seems likely that refinements of the methods employed and developed here can yield much tighter bounds on critical temperatures if more specific properties of a model are taken into account. 
At the same time, it remains an open problem to actually find a model with a phase transition with long-range order at the universal highest possible temperature. 

\subsection{Foundations of statistical mechanics}
The recent years have seen a large number of numerical and experimental (see Ref.~\cite{Yuk2011} for a review) as well as analytical investigations (see, for example, Refs.~\cite{GolLebTum06,CraDawEis08,LinPopSho09,ReiKas12,Ours,MueAdlMas13}) of equilibration and thermalization in closed quantum systems.
In the focus of these works are the approach to equilibrium or properties of energy eigenstates.
The current work complements this body of literature in that it shows fundamental properties of systems in thermal equilibrium.
A feature that makes the current work unique is that, contrary to essentially all other works, the results derived here explicitly use the structure of locality interacting systems (noteworthy exceptions are Ref.~\cite{MueAdlMas13} and, albeit in a very special setting, Ref.~\cite{CraDawEis08}).

The locality of thermal states is also of interest for recent results \cite{MueAdlMas13} on the dynamical thermalization of translation-invariant lattice models: Our Corollary~\ref{cor:intensivity} guarantees the existence of a ``unique phase'' \cite{MueAdlMas13} for all temperatures above our critical temperature.
Hence, it implies that at sufficiently high temperatures, Theorems~1, 2, and 3 of Ref.~\cite{MueAdlMas13} are applicable for any translation-invariant Hamiltonian.

There is also an interesting connection of our locality and stability results to the so-called \emph{eigenstate thermalization hypothesis} (ETH) \cite{RigDunOls08,Yuk2011}.
The ETH essentially conjectures that the expectation values of certain physically relevant observables (for example local ones) in energy eigenstates of sufficiently complex Hamiltonians should be very similar to the expectation values in thermal states with the same average energy.
Corollary~\ref{cor:intensivity} and Implication~\ref{impl:stability} thus imply that the eigenstates of a Hamiltonian in the center of the spectrum (which correspond to high-temperature thermal states) must, if the Hamiltonian fulfills the ETH, also be locally stable against perturbations of the Hamiltonian.
This insight could put constraints on the class of Hamiltonians that fulfills the ETH, provide new insights into the properties of their eigenstates, and open up new ways to test the ETH.

\subsection{MPO approximation of thermal states}\label{sec:hastings_MPO}
\emph{Matrix Product Operators} (MPOs) are a certain class of operators that are tractable on classical computers for one-dimensional systems. Therefore, they play an important role in numerical simulations based on so-called \emph{tensor networks}. 

An important ingredient to our proof of Theorem~\ref{thm:clustering} on clustering of correlations will be a bound on a truncated cluster expansion (Lemma~\ref{lem:tailored_hastings}). 
The original result on the cluster expansion (Lemma~\ref{lem:hastings} in the Appendix) is due to  Hastings and was first used to approximate thermal states with inverse temperature below $2\,\beta^\ast$ by MPOs \cite{Hastings06}. 
This approximation is summarized in the next theorem.

In one spatial dimension, this MPO approximation yields a tensor size bounded polynomially in the system size and the approximation error (see the subsequent corollary).
In higher dimensions, however, the MPO approximation yields a tensor size bounded only subexponentially in the system size and is hence not computationally efficient, albeit exponentially cheaper than storing the full density matrix $\gibbs(\beta)$. 
In order to explain this in more detail, we start the discussion with a slightly non-standard definition of MPOs:
\begin{definition}[Matrix product operator (MPO)] $\ $ \\ 
Let $(b[x]^{(j)})_{j=1}^{d^2}$ be a basis for the operators on $\H_x$ and write an arbitrary operator $A$ on $\H$ in the product basis as
 \begin{equation}
  A = \sum_{k \in [d^2]^{V}} A_k \bigotimes_{x \in V} b[x]^{(k_{x})} ,
 \end{equation}
 with expansion coefficients $A_k \in \CC$ and where $[d^2] \coloneqq \{1,2,\dots,d^2\}$. 
 If the $A_k$ are of the form
 \begin{equation} \label{eq:MPO_def}
  A_k = \prod_{x \in V} a[x](k) \, ,
 \end{equation}
 where every $a[x](k)$ only depends on at most $r$ of the $|V|$ indices $k_x$, then $A$ is called an \emph{MPO with tensor size $d^{2r}$}. 
\end{definition}

Thermal states can be approximated by such MPOs.
The following theorem is a consequence of Lemma~\ref{lem:hastings}, which we will prove in the Appendix along with Lemma~\ref{lem:tailored_hastings}. 

\begin{theorem}[MPO approximation of thermal states \cite{Hastings06}] \label{thm:MPO_approx}
Let $H=\sum_{\lambda \in \Eset} h_\lambda$ be a local Hamiltonian with finite interaction graph $(V,\Eset)$ having a growth constant $\animalc$, and local interaction strength  
$J = \max_{\lambda \in \Eset} \norm{h_\lambda}_\infty$, 
and define 
$\alphay(\beta J) \coloneqq \animalc \, \e^{|\beta J|}\,\left(\e^{|\beta J|} -1\right)$.
Moreover, let $\beta$ be small enough such that $\alphay(\beta J)<1$. 
Then, for each $L \in \ZZ^+$, there exists a self-adjoint MPO $\rho(\beta,L)$ [given in Eq.~\eqref{eq:MPO_approx}] with tensor size $d^{2\, N(L)}$, where 
\begin{equation}
 N(L)\coloneqq \sup_{x_0 \in V} |\{ x \in V : \dist(x,x_0)<L \}|
\end{equation}
is the number of vertices within a distance less than $L$.
The approximation error is bounded as
\begin{equation}\label{eq:MPO_approx_error}
 \norm{\gibbs(\beta) - \rho(\beta,L)}_1 
 \leq \exp\! \left(|E|\frac{\alphay(\beta J)^L}{1-\alphay(\beta J)} \right) - 1 \, ;
\end{equation}
i.e., for fixed $|\beta J|<\alphay^{-1}(1)$, the trace norm difference scales as 
$\landauO(|E|\, \e^{-|\ln[\alphay(\beta J)]|\,L})$ for large enough $L$.
\end{theorem}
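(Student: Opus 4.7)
The plan is to invoke Lemma~\ref{lem:hastings} to express $\e^{-\beta H}$ as a cluster expansion indexed by edge animals $F\subset\Eset$, with each cluster contribution $K_F$ supported on the vertices touched by $F$ and controlled in norm by roughly $\alphay(\beta J)^{|F|}$. The convergence condition $\alphay(\beta J)<1$ is precisely what makes the series absolutely summable. I would work with the multiplicative form (as in Hastings' original argument), essentially $\e^{-\beta H} = \prod_F(I+\tilde K_F)$ with an appropriate ordering; an equivalent additive form over collections of animals leads to the same estimates.

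The approximation $\rho(\beta,L)$ is defined by retaining only animals with $|F|<L$. Since every retained animal is connected and has fewer than $L$ edges, its support is contained in a ball of radius $L$ around any of its vertices and hence touches at most $N(L)$ sites. Self-adjointness is enforced either by pairing Hermitian-conjugate clusters in the expansion or by symmetrizing the truncation at the end, losing at most a factor of two in the error. For the trace-norm error I would peel off the discarded factors,
\[
\norm{\gibbs(\beta)-\rho(\beta,L)}_1 \;\leq\; \exp\!\Bigl(\sum_{|F|\geq L}\nnorm{\tilde K_F}_\infty\Bigr)-1,
\]
using the elementary inequality $|\prod_i(1+x_i)-1|\leq \exp(\sum_i|x_i|)-1$. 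The animal bound $a_m\leq \animalc^m$, summed over base edges $\lambda\in\Eset$ and sizes $m\geq L$, then gives $\sum_{|F|\geq L}\nnorm{\tilde K_F}_\infty \leq |\Eset|\,\alphay(\beta J)^L/(1-\alphay(\beta J))$, recovering Eq.~\eqref{eq:MPO_approx_error}.

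To read off the MPO representation, I expand each retained $K_F$ in the product basis $b[x]^{(k_x)}$ and collect site by site. The resulting coefficient $A_k=\prod_x a[x](k)$ has site factor $a[x](k)$ depending only on indices $k_y$ with $y$ lying in one of the retained animals passing through $x$; since those animals are connected and of size $<L$, the dependence is confined to sites within graph distance $<L$ of $x$, giving tensor size $d^{2N(L)}$. The main obstacle, delegated to the Appendix, is producing a cluster expansion in Lemma~\ref{lem:hastings} that is simultaneously (i) multiplicative over animals, so the error exponentiates cleanly, (ii) spatially localized, so that truncation yields an MPO of the claimed size, and (iii) manifestly Hermitian; the first and third are handled by the combinatorial bookkeeping of the linked-cluster theorem, while the second is an immediate consequence of connectedness of the retained animals.
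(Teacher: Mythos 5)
Your outline reproduces the paper's overall strategy---truncate a cluster expansion of $\e^{-\beta H}$ at cluster size $L$, bound the tail by summing $\alphay(\beta J)^m$ over animals anchored at each edge, and read off the MPO size $d^{2N(L)}$ from connectedness of the retained clusters---but the two steps where the actual work lives are asserted rather than established, and as written they do not go through. The central problem is the multiplicative factorization $\e^{-\beta H}=\prod_F(I+\tilde K_F)$ over \emph{all} animals $F$ with $\nnorm{\tilde K_F}_\infty\lesssim y(\beta J)^{|F|}$: overlapping animals give non-commuting factors, so no ordering makes the product well defined in a way that also supports the peeling identity you need for $\prod_F-\prod_{|F|<L}$. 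The paper's Lemma~\ref{lem:hastings} is additive, not multiplicative: $\rho(\beta,L)$ is the Taylor series of $\e^{-\beta H}/Z(\beta)$ with every word containing a maximal cluster of size $\geq L$ deleted, and the error $\Omega[H](\beta)$ is reorganized---via the alternating-binomial identity of Lemma~\ref{lem:alternating_binom_series} together with Lemma~\ref{lem:rho_m}---into operators $\rho_m$ indexed by $m$-fold collections of \emph{non-overlapping} animals. Only for non-overlapping animals does the relevant sub-series factorize, as $\rho(G)=\e^{-\beta H_{(\ol G)\complement}}\prod_j\eta(G_j)$ in Lemma~\ref{lem:rho_in_terms_of_eta}, because the factors then have disjoint supports and commute. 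The inclusion--exclusion that corrects for a word with $k$ long maximal clusters being counted $\binom{k}{m}$ times in $\rho_m$ is exactly the ``combinatorial bookkeeping'' you delegate to a linked-cluster theorem; it is the core of the proof, not a routine citation.

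The second gap is the normalization. Your bound $\norm{\gibbs(\beta)-\rho(\beta,L)}_1\leq\exp\bigl(\sum_{|F|\geq L}\nnorm{\tilde K_F}_\infty\bigr)-1$ compares normalized objects, so a factor $Z(\beta)$ must be produced and cancelled somewhere; the elementary inequality $|\prod_i(1+x_i)-1|\leq\e^{\sum_i|x_i|}-1$ applied to operator norms does not supply it. In the paper this factor comes from the Golden--Thompson inequality (Lemma~\ref{lem:gt}), which bounds the trace norm of the untouched background $\e^{-\beta H_{(\ol G)\complement}}$ by $Z(\beta)\prod_j\bnorm{\e^{|\beta|H_{G_j}}}_\infty$---this step is also the origin of the extra factor $\e^{|\beta J|}$ in $y(\beta J)=\e^{|\beta J|}(\e^{|\beta J|}-1)$, which your norm estimate for $\tilde K_F$ would otherwise miss. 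The remaining ingredients of your sketch are sound and at the paper's own level of detail: the retained words have all maximal clusters of size $<L$, hence supports within distance $<L$ of a site, giving the $N(L)$ count; and self-adjointness holds because the deleted set of words is invariant under reversal, so the truncated series pairs each $h(w)$ with $h(w)\ad$.
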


In particular, the theorem implies the following:
\begin{corollary}[Bound on the tensor size]\label{cor:tensor_size}
Let $D$ be the spatial dimension of the Hamiltonian's interaction graph $(V,E)$, let $n\coloneqq |E|$ be the system size, and $\beta<2\,\beta^\ast$ with $\beta^\ast$ from Eq.~\eqref{eq:crit_beta_def}. 
Then, the MPO approximation in Theorem~\ref{thm:MPO_approx} gives rise to a tensor size of the MPO $\rho(\beta,L)$ scaling as
\begin{equation}
  \log_d(\mathrm{tensor\ size}) \leq \landauO(\ln(C\,n/\epsilon)^D) \, ,
\end{equation}
with some $\beta$-dependent constant $C$. 
In particular, for $D=1$, the bound on the tensor size scales polynomially with $n/\epsilon$.
\end{corollary}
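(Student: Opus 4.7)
The plan is to invoke Theorem~\ref{thm:MPO_approx} directly and solve for the smallest $L$ such that the right-hand side of Eq.~\eqref{eq:MPO_approx_error} is below the target error $\epsilon$, then translate this $L$ into the claimed tensor size bound using the $D$-dimensional geometry of the interaction graph.

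First I would check that the hypothesis $\beta < 2\beta^\ast$ is precisely the condition $\alphay(\beta J) < 1$ needed to apply Theorem~\ref{thm:MPO_approx}. Setting $x \coloneqq \e^{|\beta J|}$, the inequality $\alphay(\beta J) = \animalc\, x(x-1) < 1$ rearranges to $x < (1+\sqrt{1+4/\animalc})/2$, which by definition~\eqref{eq:crit_beta_def} is exactly $|\beta| < 2\beta^\ast$. So Theorem~\ref{thm:MPO_approx} applies.

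Next I would require $\|\gibbs(\beta) - \rho(\beta,L)\|_1 \leq \epsilon$. Using $\ln(1+\epsilon) \geq \epsilon/2$ for small $\epsilon$, it suffices to choose $L$ so that
\begin{equation*}
 |E|\,\frac{\alphay(\beta J)^L}{1-\alphay(\beta J)} \leq \ln(1+\epsilon).
\end{equation*}
Solving for $L$ gives
\begin{equation*}
 L \geq \frac{\ln\bigl(C\,n/\ln(1+\epsilon)\bigr)}{|\ln \alphay(\beta J)|} ,
\end{equation*}
with $C \coloneqq 1/(1-\alphay(\beta J))$, a finite $\beta$-dependent constant because $\alphay(\beta J) < 1$. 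Hence $L = \landauO(\ln(C\,n/\epsilon))$ as $\epsilon \to 0$ and $n \to \infty$.

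Finally, I would use that in $D$ spatial dimensions the neighborhood sizes of the interaction graph satisfy $N(L) \leq \landauO(L^D)$, a standard volume bound for regular lattices. Combining this with the previous step,
\begin{equation*}
 \log_d(\mathrm{tensor\ size}) = 2\,N(L) \leq \landauO\!\left(\ln(C\,n/\epsilon)^D\right),
\end{equation*}
which proves the general claim. For $D=1$ this reduces to $\landauO(\ln(n/\epsilon))$, i.e., the tensor size itself is polynomial in $n/\epsilon$, establishing the one-dimensional statement. The only subtle step is the initial check that $\beta < 2\beta^\ast$ matches $\alphay(\beta J) < 1$; everything else is straightforward manipulation of Eq.~\eqref{eq:MPO_approx_error} combined with the lattice volume estimate.
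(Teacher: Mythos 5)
Your proposal is correct and follows essentially the same route as the paper: verify that $\beta<2\beta^\ast$ is equivalent to $\alphay(\beta J)<1$, invert the error bound of Eq.~\eqref{eq:MPO_approx_error} to get $L=\landauO(\ln(C\,n/\epsilon))$, and combine with the volume bound $N(L)\leq \landauO(L^D)$. The only cosmetic difference is that you invert $\e^{y}-1\leq\epsilon$ exactly via $y\leq\ln(1+\epsilon)$ and then use $\ln(1+\epsilon)\geq\epsilon/2$, whereas the paper linearizes $\e^{y}-1\leq C'y$ for $L$ logarithmically large; both yield the same asymptotics.
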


Let us consider a one-dimensional system and suppose we are explicitly given the MPO tensors $a'[x]$ [see Eq.~\eqref{eq:MPO_def}] of an approximation to a state $\rho$ and, similarly, an observable $A$ of MPO form with MPO tensors $a[x]$. 
If the tensor sizes of both MPOs scale at most polynomially in the system size, then one can compute the corresponding approximation to the expectation value $\Tr(\rho A)$ with a computational cost scaling polynomially in the system size. 
This means that, for instance, \emph{global} product observables can be approximated efficiently, which is not guaranteed by our Implication~\ref{imp:efficient_approx}. The problem with the MPO approximation, however, is that Theorem~\ref{thm:MPO_approx} only guarantees the existence of the MPO tensors but it is not obvious how they can be computed (efficiently). 

\begin{proof}[Proof of Corollary~\ref{cor:tensor_size}]
The condition $\beta<2\,\beta^\ast$ is equivalent to $\alphay(\beta J)<1$. 
Let us denote the bound to the approximation error in Eq.~\eqref{eq:MPO_approx_error} by $\epsilon$.
Note that the upper bound in Eq.~\eqref{eq:MPO_approx_error} satisfies
\begin{equation}
 \epsilon \coloneqq 
 \exp\! \left(|E|\frac{\alphay(\beta J)^L}{1-\alphay(\beta J)} \right) - 1 
 \leq C \, n \, \alphay(\beta J)^L 
\end{equation}
for distances $L$ being at least logarithmically large in $n = |E|$ and some $\beta$-dependent constant $C$. 
Then, the distance $L$ necessary to reach $\epsilon$ must asymptotically be at least as large as
\begin{equation} 
L \geq \frac{\ln( C \, n/\epsilon)}{|\ln[\alphay(\beta J)]|} \, .
\end{equation}
Bounding $N(L)$ in terms of the spatial dimension $D$ as 
$N(L) \leq M\,L^D$ with some constant $M$ yields a tensor size bounded as
\begin{align}
 \log_d(\mathrm{tensor\ size}) 
 \leq 
 2\,M\,\left(\frac{\ln\left(C \, n/\epsilon\right)}{\ln\left(1/\alphay(\beta J)\right)}\right)^D.
\end{align}%
\end{proof}

\subsection{Local topological quantum order}
It is worth mentioning that Corollary~\ref{cor:intensivity} and Implication~\ref{impl:stability} are very reminiscent of the \emph{local topological quantum order} condition for open quantum systems introduced in Ref.~\cite{Cubitt} and the results on the local stability of stationary states of local Liouvillians in Ref.~\cite{MJK}.
A slightly different family of local topological quantum order conditions for closed quantum systems \cite{Cubitt, MJK, Michalakis1,Michalakis2} has played a very important role in the theory of locally stable (topological) lattice systems and for rigorous proofs of entropic area laws. Corollary~\ref{cor:intensivity} similarly characterizes the regime where local perturbations cannot drive any thermal phase transition.

\section{Details}\label{sec:details}
In this section, we first discuss the generalized covariance and then provide details concerning the applicability of our results to Hamiltonians with $k$-body interactions. 
Finally, we justify Observation~\ref{obs:fermions} by stating the fermionic versions of our results.

\subsection{The generalized covariance}
\label{sec:generalizedcovariance}
The generalized covariance defined in Eq.~\eqref{eq:def_cov}, which depends on a parameter $\tau \in [0,1]$, provides more information about the correlations between two observables than the standard covariance in a similar way as the class of R\'enyi entropies characterizes more completely the entanglement properties of a state than simply the von Neumann entropy \cite{EntSpec}.
While it occurs quite naturally in the perturbation formula Theorem~\ref{thm:perturbation_formula}, 
other possible applications are to be explored. 
Here, we discuss possible generalizations of the generalized covariance to operators of arbitrary rank, 
show that for operators $A$ and $A'$ they are always bounded by $\norm{A}_\infty \norm{A'}_\infty$, 
and comment on convexity and a symmetrized version of the generalized covariance. 

A definition of the generalized covariance for states of arbitrary rank is not relevant for this work because for non-zero temperatures thermal states are full-rank operators. 
However, the discussion of possible generalizations also hints at the behaviour of $\cov^\tau$ at the end points of the unit interval. 
On the open interval $\tau \in {]0,1[}$, it is natural to simply keep the definition from Eq.~\eqref{eq:def_cov}. 
There are two natural ways to define $\rho^0$:
Either as, $\rho^0 \coloneqq \1$ or as $\rho^{0+} \coloneqq \lim_{\tau \to 0} \rho^\tau$, where $\rho^{0+}$ turns out to be the projector onto the image of the operator $\rho$. 
For each end point $\tau = 0$ and $\tau = 1$, there are hence two natural ways to define $\cov^\tau$, either such that the generalized covariance is continuous or such that $\cov^0_\rho(A,A')=\cov_\rho(A',A)$ and $\cov^1_\rho(A,A')=\cov_\rho(A,A')$, where 
\begin{equation}
\cov_\rho(A,A') \coloneqq \Tr(\rho \, A\, A') - \Tr(\rho\, A) \, \Tr(\rho \, A') \,   
\end{equation}
defines the standard covariance. 

Note that for product states and operators with disjoint support, all versions of the generalized covariance vanish. 
Moreover, for pure states, the continuous version of the generalized covariance vanishes also, meaning that classical correlations are needed to yield a non-zero value. 

Next, we show that the generalized covariance is always bounded as
\begin{equation}\label{eq:cov_bound} 
 \bigl|\cov^\tau_\rho(A,A') \bigr| \leq \norm{A}_\infty \norm{A'}_\infty \, , 
\end{equation}
irrespective of which definitions are chosen for $\cov^0$ and $\cov^1$. 
We consider a state $\rho$ and define $\bar{A}\coloneqq A-\Tr(\rho A)$. 
Then, 
\begin{equation}
  \cov_\rho^\tau(A,A') = \Tr\left(\rho^\tau \bar{A}\, \rho^{1-\tau} A'\right) .
\end{equation}
H\"older's inequality generalized to several operators and the fact that 
$\norm{X}_p = \norm{|X|^p}_1^{1/p}$ then imply that 
\begin{align}
  \bigl|\cov_\rho^\tau(A,A')\bigr| 
  &\leq 
  \norm{\rho^\tau}_{1/\tau} \, \nnorm{\bar{A}}_\infty \, \norm{\rho^\tau}_{1/(1-\tau)} \, \norm{A'}_\infty \\
  &= \nnorm{\bar{A}}_\infty \, \norm{A'}_\infty 
\end{align}
and, by noting that $\nnorm{\bar A}_\infty = \norm{A}_\infty$, the bound~\eqref{eq:cov_bound} is proven for the continuous version of the generalized covariance. 
For the non-continuous versions, the bound follows similarly. 

The variance $\cov_\rho^\tau(A,A)$ induced by the continuous version of the covariance is convex in $\tau$. This can be seen by writing out $\rho$ in its eigenbasis. 
As one can change the sign of $\cov_\rho^\tau(A,A')$ by just changing the sign of $A'$, 
the generalized covariance is not convex in $\tau$. 
But, it might be that its magnitude $|\cov_\rho^\tau(A,A')|$ is convex, which is unclear. 
If this were the case, it would be enough to prove the clustering Theorem~\ref{thm:clustering} only for the end points $\tau \in \{0,1\}$, and hence the proof could be significantly simplified. 

Similarly, as there is a symmetrized version of the standard covariance, one can also symmetrize the generalized covariance with respect to the two operators. 
Because of the cyclicity of the trace, the generalized covariance satisfies the symmetry property
\begin{equation} \label{eq:generalized_covariance_symmetry property}
  \cov_\rho^\tau(A,A') = \cov_\rho^{1-\tau}(A',A) 
\end{equation}
Hence, one can define the symmetrized version of the generalized covariance as follows:
  \begin{equation}
    \ol{\cov}_\rho^\tau(A,A') \coloneqq \frac{1}{2} \left( \cov_\rho^\tau(A,A') + \cov_\rho^\tau(A',A) \right) .
  \end{equation}
Our results can also be phrased in terms of this symmetrized version, since the averaged generalized covariance in the perturbation formula Theorem~\ref{thm:perturbation_formula} can easily be rewritten in terms of $\ol{\cov}$, and a bound analogous to the clustering Theorem~\ref{thm:clustering} holds also for the symmetrized quantity.

\subsection{Bound on the growth constant for local \texorpdfstring{$k$-}\ body interactions}
\label{sec:embedding}
In this section, we show that regular hyperlattices also have a finite growth constant, which renders our results applicable to Hamiltonians with local $k$-body interactions. 

In the case of $k$-body interactions, the Hamiltonian is again a sum of local terms $h_{\lambda}$ whose supports are hyperedges 
$\lambda = \supp(h_{\lambda}) \subset V$ with $|\lambda|\leq k$. 
As before, $V$ denotes the vertex set and $E$ the set of hyperedges. 

We assume that the interaction hypergraph $(V,E)$ is a \emph{regular hyperlattice}, i.e., that it can be embedded into a regular hypercubic lattice of a certain dimension $D$ with hyperedges of hypercubic form.
Let us denote by $R$ the edge length of the resulting hypercubes.
Note that such an embedding is, in general, not unique and changes both the number of terms in the Hamiltonian and the local interaction strength of $H$. 
Moreover, the grouping changes the values of the metric $\dist$ in our results. 

In order to find an exponential upper bound to the number $a_m$ of hyperanimals composed of $m$ hypercubes, let us define a \emph{spread-out} graph of range $R$  as the graph with the edge set consisting of all pairs $\{x, y\}$ with $0 < \norm{x - y}_\infty \leq R$ and $x,y\in \ZZ^D$ (see Ref.~\cite{MirSlade2010}). 
Notice that as any hypercube is uniquely specified by the coordinates of its ``lower left corner'', any hyperanimal of size $m$ corresponds to a lattice animal of size $m-1$ and range $R$ in the spread-out graph. 
It follows from Lemma~2 in Ref.~\cite{MirSlade2010} that $a_m\leq (K \e)^m$ with $K = (2\,R+1)^D-1$ being the coordination number. 
Hence, the hyperlattice has a growth constant bounded by $\animalc \leq ((2\,R+1)^D-1)\, \e$. 

The bound obtained is for most models, far from optimal, in particular, in situations where the supports of the local Hamiltonian terms are very different from hypercubes. 
For such cases, one can derive tighter but more specific bounds from known results about lattice animals in a similar way.

\subsection{Fermionic versions of the main results}
\label{sec:fermionic_results}
To make Observation~\ref{obs:fermions} about fermions precise, we introduce the setting of interacting fermions on lattices.
For each site $x \in V$, the corresponding fermionic operators, i.e., the creation and annihilation operators $f_x\ad$ and $f_x$, act on the fermionic Fock space and satisfy
\begin{align}
  \{ f_x ,f_y\ad \} &= \delta_{x,y}\, \1 \, , \\
  \{ f_x ,f_y \} &= 0 \, ,
\end{align}
where $\{A,B\}\coloneqq A\,B + B\,A$ is the \emph{anti-commutator}. 
For such systems, all operators can be given in terms of polynomials in the fermionic operators.
A monomial of fermionic operators is called \emph{even (odd)} if it can be written as a product of an even (odd) number of fermionic operators $f_x$ and $f\ad_y$.
A polynomial of fermionic operators is called \emph{even (odd)} if it can be written as a linear combination of only even (odd) monomials, and an operator is called \emph{even (odd)} if it can be written as an even (odd) polynomial of fermionic operators.
According to the \emph{fermion number parity superselection rule}, only operators that are even polynomials in the fermionic operators are physical observables and Hamiltonians.

As with spin lattice systems, we have again a finite interaction graph $(V,E)$; however, the support of an operator is now to be understood in the picture of second quantization as follows: 
The \emph{support} of any operator $A$ being a polynomial in the fermionic operators is the set of vertices of the fermionic operators that occur in the polynomial.
Correspondingly, we denote the algebra of the even operators supported on a region $X \subset V$ by $\G_X$ and denote $\G\coloneqq \G_V$ for short.
The Hamiltonian of a fermionic lattice system is of the form
\begin{equation}
  H = \sum_{\lambda \in E} h_\lambda 
\end{equation}
with $h_\lambda \in \G_\lambda$.
For $B \subset V$, the truncated Hamiltonian $\trunc H B$ is similarly the sum only over the edges contained in $B$. 
As for spin systems, $H_{\boundary\! B}$ is the sum over the boundary edges of $B$.

Theorem~\ref{thm:perturbation_formula} also holds for such fermionic lattice systems, and we can prove statements analogous to Corollary~\ref{cor:truncation_formula}, Theorem~\ref{thm:clustering}, and Corollary~\ref{cor:intensivity}. 
Hence, all implications stated in Section~\ref{sec:settingandmainresults}, also hold. 
All proofs are presented in Section~\ref{sec:fermionic_proofs}.

\begin{corollary}[Fermionic truncation formula]\label{cor:truncation_formula_fermions}
 Let $H= \sum_{\lambda \in E} h_\lambda$ be a fermionic local Hamiltonian with local terms $h_\lambda \in \G$, 
 let $B \subset V$ be a subsystem, and let the \emph{interpolating Hamiltonian} by $H(s) \coloneqq H - (1-s)\, H_{\boundary\! B}$ with its thermal state $g_s\coloneqq \gibbs[H(s)]$.
Then, for any operator $A$ with support $\supp(A) \subset B$,
\begin{equation}\label{eq:truncation_error_in_terms_of_cov_fermions}
  \begin{split}
    \Tr\bigl(A\,\gibbs[\trunc H B](\beta)\bigr) 
	& - \Tr\bigl(A\,\gibbs(\beta)\bigr)\\
    &= \beta\int_0^1 \rmd \tau \int_0^1 \rmd s\, \cov_{\gibbs_s(\beta)}^\tau (H_{\boundary\! B}, A) .    
  \end{split}
\end{equation}
\end{corollary}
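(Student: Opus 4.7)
The plan is to mirror the derivation of Corollary~\ref{cor:truncation_formula} from Theorem~\ref{thm:perturbation_formula}, using the fact (asserted in Observation~\ref{obs:fermions} and to be established separately in Section~\ref{sec:fermionic_proofs}) that the perturbation formula itself carries over to the fermionic setting without change. Concretely, I would take $H_0 \coloneqq H - H_{\boundary\! B}$, so that $H - H_0 = H_{\boundary\! B}$, and apply the fermionic perturbation formula with the observable $A \in \G_B$. The right-hand side of the claimed identity then appears directly; the task reduces to verifying that $\Tr\bigl(A\, \gibbs_0(\beta)\bigr) = \Tr\bigl(A\, \gibbs[\trunc{H}{B}](\beta)\bigr)$.

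For this reduction, I would observe that $H_0 = \trunc{H}{B} + \trunc{H}{B\complement}$, since removing every edge of $\boundary\! B$ from $\Eset$ leaves exactly those edges contained in $B$ together with those contained in $B\complement$. Both summands are even polynomials in the fermionic operators (being sums of local Hamiltonian terms $h_\lambda \in \G_\lambda$), and their supports are disjoint subsets of $V$. By the standard consequence of the parity superselection rule, even operators on disjoint fermionic regions commute, so $[\trunc{H}{B}, \trunc{H}{B\complement}] = 0$. Therefore
\begin{equation}
\e^{-\beta H_0} = \e^{-\beta \trunc{H}{B}}\, \e^{-\beta \trunc{H}{B\complement}},
\qquad
Z[H_0](\beta) = Z[\trunc{H}{B}](\beta)\, Z[\trunc{H}{B\complement}](\beta) .
\end{equation}

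Next I would compute $\Tr(A\, \gibbs_0(\beta))$. Since $A \in \G_B$ and $\e^{-\beta \trunc{H}{B\complement}} \in \G_{B\complement}$ are both even and supported on disjoint vertex sets, they commute, and the partial trace over the modes outside $B$ factorizes as $\Tr_{B\complement}\bigl[A\, \e^{-\beta \trunc{H}{B}}\, \e^{-\beta \trunc{H}{B\complement}}\bigr] = A\, \e^{-\beta \trunc{H}{B}}\cdot \Tr\bigl(\e^{-\beta \trunc{H}{B\complement}}\bigr)$, so that
\begin{equation}
\Tr\bigl(A\,\gibbs_0(\beta)\bigr) = \frac{\Tr\bigl(A\, \e^{-\beta \trunc{H}{B}}\bigr)\, Z[\trunc{H}{B\complement}](\beta)}{Z[\trunc{H}{B}](\beta)\, Z[\trunc{H}{B\complement}](\beta)} = \Tr\bigl(A\, \gibbs[\trunc{H}{B}](\beta)\bigr) .
\end{equation}
Substituting this into the fermionic perturbation formula yields~\eqref{eq:truncation_error_in_terms_of_cov_fermions}.

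The one genuine obstacle here, as opposed to a routine copy of the spin proof, is the careful use of the parity superselection rule to justify both $[\trunc{H}{B}, \trunc{H}{B\complement}] = 0$ and the factorization of the partial trace; the tensor-product reasoning used in Corollary~\ref{cor:truncation_formula} is not available, so these steps must be made via the fermionic algebra $\G$. Everything else, including the passage from the perturbation formula to the final statement, is formally identical to the spin case and can be done in one line each.
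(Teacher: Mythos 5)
Your proposal is correct and follows essentially the same route as the paper's own (much terser) proof: choose $H_0 = H - H_{\boundary\! B}$ in the fermionic perturbation formula, use the evenness of the local terms to factorize $\gibbs_0(\beta)$ into $\gibbs[\trunc{H}{B}](\beta)$ and $\gibbs[\trunc{H}{B\complement}](\beta)$, and trace out $B\complement$. Your explicit justification of the factorization via the commutation of even operators on disjoint regions is exactly the content the paper compresses into ``as the local terms are all in $\G$.''
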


\begin{theorem}[Clustering of correlations in fermionic systems]\label{thm:clustering_fermions}
Let $\gibbs(\beta)$ be the thermal state at inverse temperature $\beta$ of a local fermionic Hamiltonian 
$H= \sum_{\lambda \in E} h_\lambda$ 
with finite interaction graph $(V,E)$ having growth constant $\animalc$, local terms $h_\lambda \in \G$, and local interaction strength $J$. Define the functions $\beta^\ast$, $\xi$, and $L_0$ as in Eqs.~\eqref{eq:correlation_length_def}, \eqref{eq:crit_beta_def}, and \eqref{eq:L_zero_def}. 
Then, for every $|\beta|<\beta^\ast$, $\tau \in [0,1]$, and every two operators $A$ and $B$ with 
$\dist(A, B) \geq L_0(\beta,a)$,
where $a \coloneqq \min\{|\boundary\! A |,|\boundary\! B|\}$, 
\begin{equation}
 |\!\cov_{\gibbs(\beta)}^\tau (A, B)| 
 \leq 
 \frac{4 a \norm{A}_\infty\, \norm{B}_\infty}  {\ln(3)\, (1-\e^{-1/\xi(\beta)})} \, 
 \e^{-\dist(A,B)/\xi(\beta)} .
\end{equation}
\end{theorem}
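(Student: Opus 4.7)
The plan is to parallel the proof of Theorem~\ref{thm:clustering} in the fermionic setting, exploiting the fact that all local Hamiltonian terms $h_\lambda$ lie in the even subalgebra $\G$ and therefore mutually commute whenever their supports are disjoint. As a consequence, the cluster expansion of Lemma~\ref{lem:tailored_hastings} applies to $\gibbs(\beta)$ exactly as in the spin case, and the animal counting controlled by the growth constant $\animalc$ carries over without change. Only the treatment of the external operators $A$ and $B$ requires new input, since these are not assumed to be even.

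The first step is to split $A = A_+ + A_-$ and $B = B_+ + B_-$ into even and odd parts, noting that $\norm{A_\pm}_\infty \leq \norm{A}_\infty$ and similarly for $B$, and exploit the parity symmetry of $\gibbs(\beta)$. Since $H$ is even, $\gibbs(\beta)$ commutes with the fermion parity operator and is itself even; traces of odd operators against $\gibbs(\beta)$ therefore vanish, and more importantly $\Tr(\gibbs(\beta)^\tau A_\sigma \gibbs(\beta)^{1-\tau} B_{\sigma'}) = 0$ whenever $\sigma \neq \sigma'$. Hence $\cov_{\gibbs(\beta)}^\tau(A,B)$ reduces to an even-even and an odd-odd contribution, and it suffices to bound each separately.

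The even-even contribution is essentially a transcription of the spin proof, because operators in $\G$ with disjoint supports commute just like local bosonic operators. One inserts the truncated cluster expansion representations of $\gibbs(\beta)^\tau$ and $\gibbs(\beta)^{1-\tau}$, bounds each term by operator norms, uses that any animal contributing non-trivially to correlations between $\supp(A_+)$ and $\supp(B_+)$ must contain at least $\dist(A,B)$ edges, and sums the resulting geometric series using the assumption $|\beta| < \beta^\ast$. The claimed exponential decay with the prefactor $4\,a\,\norm{A}_\infty\norm{B}_\infty/(\ln(3)\,(1-\e^{-1/\xi(\beta)}))$ follows directly.

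The main obstacle is the odd-odd contribution, where $A_-$ and $B_-$ anticommute on disjoint supports and one must check that this sign does not disrupt the factorization-and-cancellation arguments underlying the spin proof. The saving observation is that every factor produced by the cluster expansion of $\gibbs(\beta)^\tau$ is even and hence still commutes past both $A_-$ and $B_-$ across disjoint supports, so no new signs appear when manipulating the expansion. When a cluster fails to touch both $\supp(A_-)$ and $\supp(B_-)$, cyclicity combined with the single anticommutation $A_- B_- = -B_- A_-$ produces an identity of the form $X = -X$, forcing the term to vanish; the subtracted product $\Tr(\gibbs(\beta) A_-)\Tr(\gibbs(\beta) B_-)$ is zero by parity, so connected clusters alone survive and they are bounded by exactly the same animal-counting estimate as in the even-even case. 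Combining the two parity sectors and absorbing the constant factors reproduces the inequality in precisely the form claimed, with the growth constant $\animalc$ and the critical temperature $\beta^\ast$ inherited unchanged from Theorem~\ref{thm:clustering}.
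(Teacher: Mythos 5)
Your proposal is correct in outline but takes a genuinely different route from the paper. The paper does not decompose $A$ and $B$ into parity sectors at all: it runs the spin proof verbatim, observing that (i) the four-copy construction lives on the ordinary tensor product of four Fock spaces (not the Fock space of a quadrupled set of modes), so $A^{(-)}$, $B^{(-)}$, the swap operators, and the symmetry relations such as $\hat A = -\swap^{1,2}\,\hat A\,\swap^{1,2}$ are exactly as in the spin case for arbitrary, not necessarily even, $A$ and $B$; and (ii) the only structural input to Lemma~\ref{lem:tailored_hastings} and its auxiliary lemmas is that Hamiltonian terms with disjoint supports commute, which is guaranteed by $h_\lambda\in\G$. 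Your parity splitting is a valid alternative: the vanishing of the mixed sectors by evenness of $\gibbs(\beta)$ is right, and your odd--odd cancellation (commute the even cluster factors out of the way, then use cyclicity together with $M_A M_B = -M_B M_A$ for odd operators on disjoint supports to force $X=-X$) is a correct replacement for the swap-symmetry argument in that sector. What the paper's route buys is uniformity --- one argument for all $A,B$ and no loss of constants. What your route costs is precisely such a loss: applying the spin bound separately to the even--even and odd--odd sectors and summing gives $\nnorm{A_+}_\infty\nnorm{B_+}_\infty+\nnorm{A_-}_\infty\nnorm{B_-}_\infty$, which can equal $2\,\norm{A}_\infty\norm{B}_\infty$, so you obtain $8a$ rather than the stated $4a$ in the numerator; ``absorbing the constant factors'' does not recover the claimed inequality verbatim. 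Note also that your one-sentence summary of the even--even step (``bound each term by operator norms and sum the geometric series'') elides the essential cancellation of disconnected words, which in the spin proof comes from the $A^{(-)},B^{(-)}$ construction and the swap symmetries rather than from term-by-term norm bounds, so you still need the full four-copy machinery there.
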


\begin{corollary}[Locality of fermionic thermal states]\label{cor:intensivity_fermions}
Let $H$ be a Hamiltonian satisfying the conditions of Theorem~\ref{thm:clustering_fermions}, 
let $|\beta|< \beta^\ast$, and let $S \subset B \subset V$ be subsystems with 
$\dist(S, \boundary\!B) \geq L_0(\beta, |\boundary\! S|)$.
Then, 
\begin{equation} 
\norm{\gibbs^S(\beta)  - \gibbs^S[\trunc H B](\beta)}_1
 \leq
 \frac{ v\, |\beta|\, J } {1-\e^{-1/\xi(\beta)}} \, 
 \e^{- \dist(S, \,\boundary\! B) /\xi(\beta)} ,
\end{equation}
where $v = 4\, |\boundary\! S|\,|\boundary\! B|/\ln(3)$.
\end{corollary}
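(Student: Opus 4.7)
The plan is to mirror the derivation of the spin-case Corollary~\ref{cor:intensivity}, substituting the fermionic truncation formula Corollary~\ref{cor:truncation_formula_fermions} and the fermionic clustering bound Theorem~\ref{thm:clustering_fermions} for their spin counterparts. Set $X \coloneqq \gibbs^S(\beta) - \gibbs^S[\trunc H B](\beta)$. Both summands are even operators supported on $S$, so $X$ lies in the physical even subalgebra $\G_S$, and trace-norm duality on $\G_S$ gives $\norm{X}_1 = \sup\{|\Tr(A\,X)| : A \in \G_S,\, \norm{A}_\infty \le 1\}$. Fix such an $A$; since $\supp(A) \subset S \subset B$, Corollary~\ref{cor:truncation_formula_fermions} applies and yields
\[
|\Tr(A\,X)| = \Bigl|\,\beta \int_0^1 \!\rmd\tau \int_0^1 \!\rmd s\, \cov_{\gibbs_s(\beta)}^\tau(H_{\boundary\! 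B},A)\,\Bigr|.
\]

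Next I would expand $H_{\boundary\! B} = \sum_{\lambda \in \boundary\! B} h_\lambda$, use linearity of the covariance in its first argument to move the sum outside both integrals, and apply the triangle inequality. This reduces the task to a uniform-in-$s,\tau$ bound on $|\cov_{\gibbs_s(\beta)}^\tau(h_\lambda, A)|$ for each boundary term. Since $\supp(A) \subset S$ and $\lambda \in \boundary\! B$, a short path argument gives $\dist(h_\lambda,A) \ge \dist(S,\boundary\! B)$, while the quantity $a = \min\{|\boundary\! A|,|\boundary\! h_\lambda|\}$ appearing in Theorem~\ref{thm:clustering_fermions} is controlled by $|\boundary\! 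A| \le |\boundary\! S|$. The assumption $\dist(S,\boundary\! B) \ge L_0(\beta,|\boundary\! S|)$ therefore secures the distance hypothesis of the clustering theorem, and inserting $\norm{h_\lambda}_\infty \le J$ and $\norm{A}_\infty \le 1$ bounds each term by $[4\,|\boundary\! S|\,J]/[\ln(3)(1-\e^{-1/\xi(\beta)})]\cdot \e^{-\dist(S,\boundary\! B)/\xi(\beta)}$. Summing over the $|\boundary\! B|$ boundary edges, evaluating the trivial unit-area integral over $[0,1]^2$, prepending $|\beta|$, and finally taking the supremum over $A$ produces exactly the claimed bound with $v = 4\,|\boundary\! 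S|\,|\boundary\! B|/\ln(3)$.

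The substantive analytic content is entirely absorbed into Corollary~\ref{cor:truncation_formula_fermions} and Theorem~\ref{thm:clustering_fermions}, so no new quantitative estimate enters here. The one point that differs in a non-cosmetic way from the spin proof is the use of trace-norm duality: on a fermionic lattice the reduction $\Tr_{S\complement}$ is a priori only well-defined on the even subalgebra $\G_S$, and the supremum dual to the trace norm must accordingly be restricted to even local observables. Verifying that both $\gibbs^S(\beta)$ and $\gibbs^S[\trunc H B](\beta)$ are even -- a direct consequence of the evenness of $H$ and $\trunc H B$ and hence of the corresponding thermal states before the partial trace -- is therefore the only step I would flag as requiring any care. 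Once this parity/duality check is in place the remainder is a verbatim transcription of the spin argument, and this small bookkeeping step is the main (if rather modest) obstacle.
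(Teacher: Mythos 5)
Your proposal is correct and takes essentially the same route as the paper, which bounds the double integral from Corollary~\ref{cor:truncation_formula_fermions} by the supremum of the covariance over $s,\tau$ and over observables $A$ supported on $S$ with $\norm{A}_\infty=1$, then applies Theorem~\ref{thm:clustering_fermions} once to the pair $(A,H_{\boundary\! B})$ using $\norm{H_{\boundary\! B}}_\infty\le|\boundary\! B|\,J$ rather than edge by edge---the constant $v$ comes out identical either way. Your parity/duality remark is a detail the paper passes over in silence; it is harmless because the difference of reduced states is an even operator supported on $S$ and the optimizing dual observable is a function of it, hence also even.
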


\section{Proofs}
\label{sec:proofs}
We start this section with the proofs of Theorems~\ref{thm:perturbation_formula} and~\ref{thm:clustering}. 
One important stepping stone for the proof of the latter is a tailored version of a bound on a truncated cluster expansion (Lemma~\ref{lem:tailored_hastings}) from Ref.~\cite{Hastings06}. 
Both versions are proven in the Appendix. 
In the last part of the section, we prove the fermionic versions of our main results, 
Therorem~\ref{thm:clustering_fermions} and Corollaries~\ref{cor:truncation_formula_fermions} and~\ref{cor:intensivity_fermions}.

\subsection{Proof of the perturbation formula (Theorem~\ref{thm:perturbation_formula})}
\label{sec:proof_perturbation_formula} 
The two main ingredients in the proof of Theorem~\ref{thm:perturbation_formula} are the fundamental theorem of calculus, and Duhamel's formula. 
The generalized covariance appears as a natural measure of correlations. 

\begin{proof}[Proof of Theorem~\ref{thm:perturbation_formula}]
Using the fundamental theorem of calculus we obtain
\begin{equation*}
    \Tr[A\,\gibbs_0(\beta)] - \Tr[A\,\gibbs_1(\beta)]
  = -\Tr\!\left(A \int_0^1 \frac{\rmd}{\rmd s} \frac{\e^{-\beta\, H(s)}} {Z_s(\beta)} \, \rmd s \right) 
\end{equation*}
with $Z_s\coloneqq Z[H(s)]$.
The derivative can be written as
\begin{equation*}
 \frac{\rmd}{\rmd s} \frac{\e^{-\beta\, H(s)}} {Z_s(\beta)} = \frac{1}{Z_s(\beta)} \frac{\rmd}{\rmd s}\, \e^{-\beta H(s)} - \frac{\gibbs_s(\beta)}{Z_s(\beta)} \Tr\!\left(\frac{\rmd}{\rmd s}\, \e^{-\beta H(s)} \right) . 
\end{equation*}
After applying Duhamel's formula to both derivatives, i.e., using that
\begin{equation}\nonumber
\begin{split}
&\frac{\rmd}{\rmd s}\, \e^{ -\beta H(s)}\\
=&
-\beta \int_0^1 \left(\e^{-\beta H(s)}\right)^{\tau} \left(\frac{\rmd}{\rmd s}H(s)\right)\, \left(\e^{-\beta H(s)}\right)^{1-\tau}\, \rmd \tau\, ,
\end{split}
\end{equation}
we obtain 
\begin{multline}
\phantom{={}}{} \Tr[A\,\gibbs_0] - \Tr[A\,\gibbs]
\\
\shoveleft{=- \beta \Tr\biggl( A \int_0^1\int_0^1 \Bigl(- \gibbs_s^\tau
\,(H-H_0)\, \gibbs_s^{1-\tau}}
\\
\shoveright{\hfill+ \gibbs_s \Tr[ \gibbs_s^\tau \,(H-H_0)\,
\gibbs_s^{1-\tau}] \Bigr)\rmd \tau\, \rmd s \biggr) .} 
\end{multline}
Together with the the cyclicity of the trace and the definition of the generalized covariance in Eq.~\eqref{eq:def_cov}, this finishes the proof. 
\end{proof}

\subsection{Proof of Theorem~\ref{thm:clustering} on clustering of correlations}
\label{sec:proof_clustering}
The proof of Theorem~\ref{thm:clustering} builds on and develops further a cluster expansion of the power series of $\e^{-\beta\,H}$ in terms of summands of the form 
\begin{equation}
  h(w) \coloneqq h_{w_1}\,h_{w_2} \dots h_{w_{|w|}} \, ,
\end{equation}
where $w_j \in \Eset$.
For the sake of a compact presentation, we refer to edges as \emph{letters}, to the edge set $E$ as an \emph{alphabet}, and call sequences of edges \emph{words}. 
For any \emph{sub-alphabet} $\Fset\subset \Eset$, we denote by 
$\Fset^\ast \coloneqq \bigcup_{l=0}^\infty \Fset^l$  
the set of words with letters in $\Fset$ and arbitrary length $l$, where the length $|w|$ of a word $w \in \Eset^\ast$ is the total number of letters it contains.
For two words $w,v \in \Eset^\ast$, their concatenation is denoted by 
$w\circ v\coloneqq (w_1,w_2,\dots,w_{|w|},v_1,v_2,\dots,v_{|v|})$. 
We call a word $c \in \Eset^\ast$ \emph{connected} or a \emph{cluster} if the set of letters in $c$ is an animal, i.e., connected.
So, clusters are connected sequences of edges where the edges can also occur multiple times, while animals are sets of edges without any order or repetition. 
A word $v$ is called a \emph{sub-sequence} of $w \in \Eset^\ast$ if $v$ can be obtained from $w$ by omitting letters, i.e., if there is an increasing sequence $j_1<j_2< \ldots<j_{|v|}$ such that $v_i = w_{j_i}$. 
This will be denoted by $v \subset w$. 
A connected sub-sequence $c \subset w$ is called a \emph{maximal cluster} of $w$ if $c$ is not a sub-sequence of any other connected sub-sequence of $w$.
Importantly, for any word $w \in \Eset^\ast$, one can permute its letters to a new word $w'$ such that 
$h(w') = h(w)$ 
irrespective of the choice of the local terms $h_\lambda$ and such that 
$w'=c_1\circ c_2\circ\dots \circ c_k$ is a concatenation of maximal clusters $c_j \subset w$ of $w$.
Note that this decomposition is unique up to the order of the $c_j$. 

In the following, we will consider systems that are either $n=2$ or $n=4$ copies of the original system with Hilbert space $\H$. 
For any operator $A$ on $\H$, we denote by $A^{(j)}$ the operator on $\H^{\otimes n}$ that acts as $A$ on the $j$th copy, e.g., $A^\II \coloneqq \1 \otimes A$ for $n=2$. 
By $\swap^{(i,j)}$, we denote the swap operator on $\H^{\otimes n}$ that swaps the $i$th and $j$th tensor factors, e.g., 
$\swap^{1,2} \ket{k_1,k_2,k_3,k_4} = \ket{k_2,k_1,k_3,k_4}$ for $n=4$. 
For $n=2$, we write $\swap$ instead of $\swap^{1,2}$. 

We can now state the subsequent lemma, which is a bound on a truncated cluster expansion that is based on a more general, but for our purposes not tight enough bound, used previously in Ref.~\cite{Hastings06} (see Lemma~\ref{lem:hastings} in the Appendix).
The lemma will play an important role in the subsequent proof of Theorem~\ref{thm:clustering}.

\begin{lemma}[Truncated cluster expansion]\label{lem:tailored_hastings}
Let $\tau \in [0,1]$ and $H=\sum_{\lambda \in \Eset} h_\lambda$ be a local Hamiltonian on $\H$ with finite interaction graph $(V,\Eset)$ having growth constant $\animalc$ and local interaction strength $J = \max_{\lambda \in \Eset} \norm{h_\lambda}_\infty$. 
We denote by $\tilde H$ the Hamiltonian of two weighted copies with local terms  $\tilde h_\lambda \coloneqq \tau \, h_\lambda^\I + (1-\tau) \, h_\lambda^\II$. 
Consider two operators $A$ and $B$ on $\H$, define 
$\alphay(x) \coloneqq \animalc \, \e^{|x|}\,\left(\e^{|x|} -1\right)$, and 
let $|\beta|$ be small enough such that $\alphay(\beta J)<1$.
For some set of edges $\Fset \subset \Eset$, let $\clusters{\geq L}{}\subset \Eset^\ast$ be the set of words containing at least one cluster $c$ that contains at least one letter of $\Fset$ and has size $|c|\geq L$ and let us denote the corresponding truncated cluster expansion of $\e^{-\beta \tilde H}$ by
\begin{equation}\label{eq:tildeOmega_def}
 \Omega[\tilde H](\beta) \coloneqq \sum_{w\in \clusters{\geq L}{}} \frac{(-\beta)^{|w|}}{|w|!} \tilde h(w) \, ,
\end{equation}
with $\tilde h(w) \coloneqq \tilde h_{w_1} \, \tilde h_{w_2} \dots \tilde h_{w_{|w|}}$. 
Then, for all $\tau \in [0,1]$, 
\begin{equation} \label{eq:tailored_Hastings_bound}
 \frac{\bigl|\Tr\bigl[\swap\, A^\I \, B^\II \, \Omega[\tilde H](\beta)\bigr]\bigr|}
      {\norm{A}_\infty \norm{B}_\infty Z(\beta)}
 \leq \exp\left({|\Fset|\frac{\alphay(\beta J)^L}{1-\alphay(\beta J)} }\right) - 1 \, .
\end{equation}
\end{lemma}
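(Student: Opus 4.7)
The plan is to adapt the cluster expansion underlying Lemma~\ref{lem:hastings} in the Appendix by refining the cluster counting so that only clusters passing through the sub-alphabet $\Fset$ contribute, and to handle the two-copy structure via a swap identity that reduces the object of interest to a single-copy quantity. First I would note the setup: since $\tau, 1-\tau \in [0,1]$ and the two copies act on disjoint tensor factors, the convex combination retains the local norm, so $\|\tilde h_\lambda\|_\infty \leq J$ and $\tilde H$ shares the interaction graph $(V,\Eset)$ with $H$. Moreover, $\e^{-\beta \tilde H} = \e^{-\beta\tau H^\I} \otimes \e^{-\beta(1-\tau)H^\II}$, and the swap identity $\Tr[\swap\, X^\I Y^\II]=\Tr[XY]$ gives
\begin{equation*}
\frac{\Tr\bigl[\swap\, A^\I B^\II\, \e^{-\beta\tilde H}\bigr]}{Z(\beta)} = \Tr\!\bigl[A\, \gibbs(\beta)^\tau\, B\, \gibbs(\beta)^{1-\tau}\bigr] ,
\end{equation*}
which clarifies why the normalization $Z(\beta)$, rather than $Z[\tilde H](\beta)$, is natural: the swap ``collapses'' the two copies back to a single-copy trace.

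Next I would expand $\Omega[\tilde H](\beta)$ as a Taylor series over words $w \in \Eset^\ast$ restricted to those containing at least one \emph{bad} maximal cluster, namely a maximal connected sub-sequence of length $\geq L$ that contains at least one edge of $\Fset$. Each word decomposes uniquely, up to reordering of letters preserving the maximal-cluster structure, into an interleaving of its maximal clusters, and distinct maximal clusters have disjoint lattice supports, so their $\tilde h$ factors commute. Applying the rearrangement identity underlying the proof of Lemma~\ref{lem:hastings}, the truncated sum $\Omega$ reorganizes as a nonempty multiset of bad clusters weighted by an exponential generating function over good clusters. Inserted into the normalized trace with $\swap\, A^\I B^\II$, the good-cluster factor together with $Z(\beta)$ combines into a factor of modulus at most $\|A\|_\infty\|B\|_\infty$ by the same bookkeeping as in Lemma~\ref{lem:hastings}; what remains is controlled by the sum over bad-cluster configurations alone.

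The quantitative core is then the bad-cluster estimate: for each fixed seed $\lambda \in \Fset$,
\begin{equation*}
\sum_{\substack{c:\,\lambda \in c\\ |c| \geq L}} \frac{|\beta|^{|c|}}{|c|!}\,\|\tilde h(c)\|_\infty \ \leq\ \sum_{m \geq L} \alphay(\beta J)^m \ =\ \frac{\alphay(\beta J)^L}{1-\alphay(\beta J)} ,
\end{equation*}
using $\|\tilde h(c)\|_\infty \leq J^{|c|}$, the growth-constant bound $a_m \leq \animalc^m$ from Eq.~\eqref{eq:animal_const}, and the factorial identity that produces the factor $\e^{|\beta J|}(\e^{|\beta J|}-1)$ in $\alphay$, all exactly as in the proof of Lemma~\ref{lem:hastings}. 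Summing over $\lambda \in \Fset$ contributes a factor $|\Fset|\,\alphay(\beta J)^L/(1-\alphay(\beta J))$ per bad cluster, and the exponential generating structure over multisets of bad clusters then yields the bound $\exp(|\Fset|\,\alphay(\beta J)^L/(1-\alphay(\beta J))) - 1$, where the $-1$ removes the empty-configuration case that is excluded by the definition of $\Omega$.

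The main obstacle will be the cancellation asserted in the second paragraph. In Lemma~\ref{lem:hastings} the expansion is bounded in operator norm, so the $Z(\beta)$ cancellation is immediate; here the swap entangles the two copies near $\supp(A) \cup \supp(B)$, and one must check that good clusters disjoint both from the current bad-cluster configuration and from $A$, $B$ factor out via the swap identity as pure partition-function contributions that cancel $Z(\beta)$, while good clusters touching $A$ or $B$ are absorbed by $\|A\|_\infty\|B\|_\infty$. This geometric bookkeeping, together with the $\tau$-independent norm $\|\tilde h_\lambda\|_\infty \leq J$, is the technically delicate step; once it is in place, the combinatorial bad-cluster bound above slots directly into the exponential generating form to complete the proof.
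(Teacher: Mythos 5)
Your architecture is right --- classify words by bad clusters through $\Fset$, collapse the two copies with a swap identity so that $Z(\beta)$ rather than $Z[\tilde H](\beta)$ appears, and count bad animals with the growth constant --- and you correctly locate the difficulty. But the resolution you sketch for that difficulty is not the one that works, and the gap it leaves is exactly the content of the lemma. After isolating a bad-animal configuration $G=\dunion_j G_j$, the good-cluster generating function is the full two-copy operator $\e^{-\beta\tilde H_{(\ol G)\complement}}$ (times factors $\tilde\eta(G_j)$ supported on the bad animals, cf.\ Lemma~\ref{lem:rho_in_terms_of_eta}); it does not split into pieces ``near $A,B$'' and ``far from everything'', since $H_{(\ol G)\complement}$ couples all of those regions, and its trace norm equals $\Tr\bigl[\e^{-\beta\tau H_{(\ol G)\complement}}\bigr]\Tr\bigl[\e^{-\beta(1-\tau)H_{(\ol G)\complement}}\bigr]$, which for $\tau\in{]0,1[}$ exceeds $Z(\beta)$ by a factor exponential in the system size. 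So neither a naive trace-norm bound on the doubled space nor a region-by-region ``cancellation'' produces $Z(\beta)$. The mechanism that does work (Lemma~\ref{lem:rho_of_G_tailored}) is: expand each $\tilde h(w)=\sum_{v\in\{1,2\}^{|w|}}\tilde h(w,v)$ into operators of product form across the two copies; apply the swap identity term by term to obtain a single-copy trace $\Tr\bigl(\bigl[A\,\e^{-\beta\tau H_{(\ol G)\complement}}h^{(i)}(w,v)\bigr]\bigl[B\,\e^{-\beta(1-\tau)H_{(\ol G)\complement}}h^{(ii)}(w,v)\bigr]\bigr)$; then use the generalized H\"older inequality with exponents $1/\tau$ and $1/(1-\tau)$, which gives $\norm{\e^{-\beta\tau H_{(\ol G)\complement}}}_{1/\tau}\,\norm{\e^{-\beta(1-\tau)H_{(\ol G)\complement}}}_{1/(1-\tau)}=\bnorm{\e^{-\beta H_{(\ol G)\complement}}}_1$; and finally Golden--Thompson (Lemma~\ref{lem:gt}) to bound this by $Z(\beta)\,\e^{|\beta|J|G|}$. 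The weights $\tau^{n^{(1)}(v)}(1-\tau)^{n^{(2)}(v)}$ recombine over $v$ so that no factor $2^{|w|}$ is lost. None of this is supplied by your estimate $\norm{\tilde h(c)}_\infty\le J^{|c|}$, which bypasses the copy structure entirely.

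Two smaller points. First, a word can contain several maximal bad clusters, so ``a nonempty multiset of bad clusters weighted by a generating function over good clusters'' overcounts; the paper fixes the multiplicities with an inclusion--exclusion identity (Lemmas~\ref{lem:alternating_binom_series} and~\ref{lem:rho_m}, yielding $\Omega[\tilde H](\beta)=-\sum_{m\ge1}(-1)^m\tilde\rho_m$) before taking absolute values and summing the $1/m!$ series to $\exp(\cdot)-1$. Second, the factor $\e^{|x|}$ in $\alphay(x)=\animalc\,\e^{|x|}(\e^{|x|}-1)$ does not come from the factorial identity, which yields only $(\e^{|\beta J|}-1)^{|G|}$ (Lemma~\ref{lem:eta_bound}); it is the Golden--Thompson correction for removing $H_{G_j}$ from the full Hamiltonian. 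These are repairable bookkeeping issues, but the normalization mechanism in the first paragraph is the essential missing step.
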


\begin{figure}[t]
\centering
\includegraphics{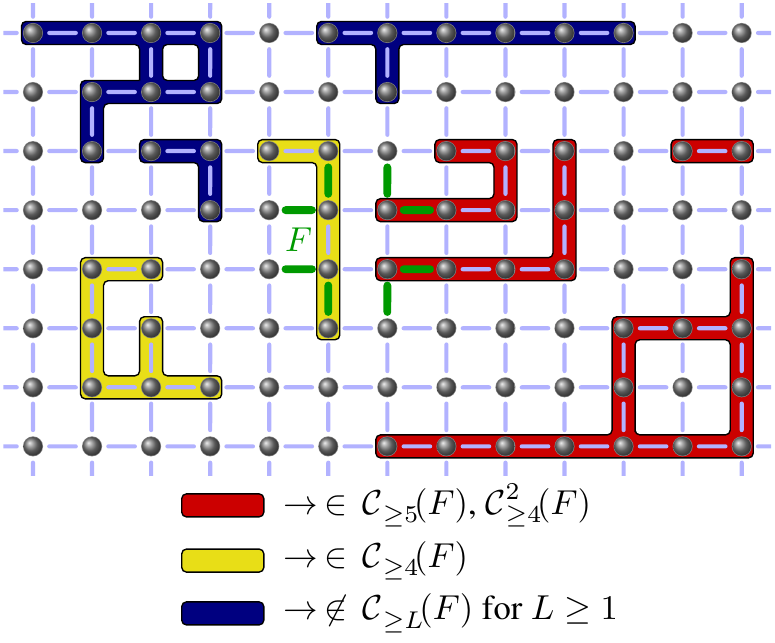}
\caption{A 2D square lattice. 
Three different sub-alphabets are indicated: 
Words that contain all letters in those alphabets are members of different sets $\clusters{\geq L}{}$. }
\label{fig:clusters}
\end{figure}

In the Appendix, we provide a detailed proof of this lemma. 
The terms resulting from the expansion of the exponential series of $\e^{-\beta H}$ are classified according to whether they contain a cluster of size at least $L$ that contains a letter from $\Fset$. 
One can then show that there is a percolation transition at $\beta^\ast = \alphay^{-1}(1)/(2J)$ such that for $|\beta|<\beta^\ast$, the contribution of long clusters is exponentially suppressed.

In the following proof of the exponential clustering, we will use the so-called \emph{swap-trick}: For any two operators $A$ and $B$, it holds that 
\begin{equation}\label{eq:swap-trick}
 \Tr(A\, B) = \Tr(\swap (A\otimes B)) \, ,
\end{equation}
which can be checked by a straightforward calculation.
  
\begin{proof}[Proof of Theorem~\ref{thm:clustering}]
Fix some $\tau \in [0,1]$.
For any operator $A:\H\to\H$, we define 
$A^{(\pm)} \coloneqq A\otimes \1 \pm \1\otimes A$ and 
$\tilde A^{(+)} \coloneqq \tau\,\bigl( A^\I + \,A^\II\bigr) + (1-\tau)\bigl( A^{(3)} + A^{(4)}\bigr)$.

As the first step, we write the covariance as
\begin{equation*}
  \cov_{\rho}^\tau (A, B) 
  =
  \kw 2 \Tr\!\left( A^{(-)} \left(\rho^\tau\otimes \rho^\tau \right) 
    B^{(-)} \left(\rho^{1-\tau}\otimes \rho^{1-\tau} \right) \right) .
\end{equation*}
Using the swap-trick \eqref{eq:swap-trick} yields (see Figure~\ref{fig:swap})
\begin{equation} \label{eq:swaptrick}
  \cov_{\rho}^\tau (A, B)
  =
  \kw 2 \Tr\left(\swap^{1,3}\,\swap^{2,4}\,(A^{(-)} \otimes B^{(-)}) \, 
    \rho_4 \right) ,
\end{equation}
\begin{figure}[t]
\centering
\includegraphics{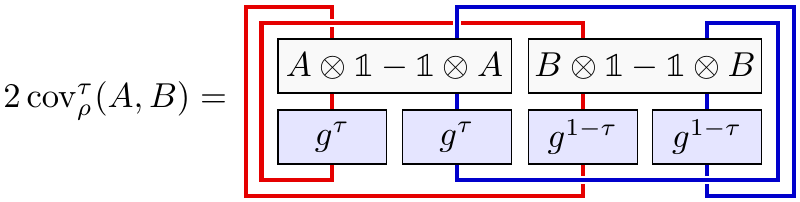}
\caption{The ``multiple swap-trick'': Eq.~\eqref{eq:swaptrick} as a tensor network.}
\label{fig:swap}
\end{figure}%
where
$\rho_4 \coloneqq \rho^\tau\otimes \rho^\tau \otimes \rho^{1-\tau}\otimes\rho^{1-\tau}$. 
For the case $\rho = \gibbs(\beta)$, the operator $\rho_4$ turns out to be
\begin{equation}
 \rho_4 = \frac{\e^{-\beta \tilde H^{(+)} }}{Z(\beta)^2} .
\end{equation}
Writing out $\rho_4$ as a power series yields
\begin{equation}\label{eq:cov_series}
 \cov_{\gibbs(\beta)}^\tau (A, B) 
  = 
  \kw{2\,Z(\beta)^2}\sum_{w \in \Eset^\ast} \frac{(-\beta)^{|w|}}{|w|!}\, t(w)
\end{equation}
with 
\begin{equation}\label{eq:trace_with_swaps}
 t(w)\coloneqq \Tr\!\left[\swap^{1,3}\,\swap^{2,4}\,(A^{(-)} \otimes B^{(-)}) \, \tilde h^{(+)}(w)\right]
\end{equation}
and $\tilde h^{(+)}(w) \coloneqq \tilde h^{(+)}_{w_1} \, \tilde h^{(+)}_{w_2} \dots \tilde h^{(+)}_{w_{|w|}}$. 
Next, we argue that $t(w)$ vanishes whenever $w$ does not contain a cluster connecting the supports of $A$ and $B$. 
Without loss of generality, we assume that 
$|\boundary\! A|\leq |\boundary\! B|$ 
and consider 
$\clusters[\boundary\! A]{\geq L}{}\complement = \Eset^\ast \setminus \clusters[\boundary\! A]{\geq L}{}$,
the set of words that do not contain a cluster containing an edge in $\boundary\! A$ of size $L \coloneqq \dist(A,B)$ or larger.
The set $\clusters[\boundary\! A]{\geq L}{}\complement$ hence contains no words with clusters that connect $\supp(A)$ and $\supp(B)$.
Any word $w \in \clusters[\boundary\! A]{\geq L}{}\complement$ can be replaced by a concatenation of two words $w_A$ and $w_B$ such that 
$\tilde h^{(+)}(w) = \tilde h^{(+)}(w_A) \, \tilde h^{(+)}(w_B)$, 
where $w_A$ contains all maximal clusters of $w$ that overlap with $\supp(A)$ and $w_B$ all other maximal clusters of $w$.
The operators $\tilde h^{(+)}(w_A)$ and $\1\otimes \1\otimes B^{(-)} \eqqcolon \hat B$, and $\tilde h^{(+)}(w_B)$ and $A^{(-)}\otimes\1\otimes \1 \eqqcolon \hat A$ then have disjoint supports, respectively, and the trace in Eq.~\eqref{eq:trace_with_swaps} factorizes into a product of two traces, one over the subsystem $X\coloneqq \supp(\hat A) \cup \supp(\tilde h^{(+)}(w_A))$ and the other over the rest of the system.
It turns out that both vanish: By using the symmetries 
$\hat A = - \swap^{1,2}\, \hat A\, \swap^{1,2}$, 
$\tilde h^{(+)}(w_A) = \swap^{1,2}\, \swap^{3,4}\, \tilde h^{(+)}(w_A)\, \swap^{3,4}\, \swap^{1,2}$, 
$\hat A \, \swap^{3,4} = \swap^{3,4}\, \hat A$, 
and that $\left(\swap^{i,j}\right)^2 = \1$, one can show, e.g., that
\begin{align} 
     \Tr\bigl[\swap^{1,3}\,\swap^{2,4} \hat A\,\tilde h^{(+)}(w_A) \bigr] 
 = - \Tr\bigl[ \swap^{1,3}\, \swap^{2,4}\, \hat A\, \tilde h^{(+)}(w_A) \bigr] .
\end{align}
This implies that for every $w \in \clusters[\boundary\! A]{\geq L}{}\complement$,
\begin{equation}
  t(w) \propto \Tr\bigl[\swap^{1,3}\,\swap^{2,4}\hat A\,\tilde h^{(+)}(w_A) \bigr] = 0 \, .
\end{equation}
Together with Eq.~\eqref{eq:cov_series}, realizing that $Z(\beta)^2 = Z[H^{(+)}](\beta)$, and using the notation from Eq.~\eqref{eq:tildeOmega_def} with $F = \boundary\! A$ and $L = \dist(A,B)$, it follows that 
\begin{equation}
\cov^\tau_{\gibbs(\beta)}(A,B) 
= \Tr\biggl( \frac{\swap^{1,3}\swap^{2,4} \hat A \,\hat B} {2\, Z(\beta)^2} \,
       \Omega\bigl[\tilde H^{(+)}\bigr](\beta)\biggr) .
\end{equation}
After applying Lemma~\ref{lem:tailored_hastings} and using that
$\nnorm{\hat A}_\infty \leq 2 \norm{A}_\infty$, and similarly for $B$, we obtain
\begin{equation}\label{eq:cov_intermediate_bound_general}
\frac{|\cov_{\gibbs(\beta)}^\tau (A, B)| }{\norm{A}_\infty\, \norm{B}_\infty}
\leq 
 2\, \left(\e^{|\boundary\! A|\frac{\alphay(2\, \beta J)^L}{1-\alphay(2\, \beta J)} } - 1 \right).
\end{equation}
The fact that the condition $\beta < \beta^\ast$ is equivalent to $\alphay(2\, \beta  J)<1$ implies that $\alphay(2\, \beta J)^L$ decays exponentially with $L$. 
In order to obtain the desired exponential bound \eqref{eq:clustering}, we apply the bound 
$\forall x \in [0,x_0]: \exp(x)-1 \leq x\,(\e^{x_0} -1)/x_0$ with the choice $x_0 = \ln(3)$. 
In order to have $|\! \boundary\! A| \frac{\alphay(2\, \beta J)^L}{1-\alphay(2\, \beta  J)} \leq \ln(3)$, we impose 
\begin{align}
  L &\geq \left| \ln\left( \frac{|\boundary\! A|}{\ln(3)\, (1-\alphay(2\, \beta J))}\right) / \ln(2\, \beta \, J))\right| \\
  &= \xi(\beta)\, \left|\ln\!\left( {\ln(3)\, \bigl(1-\e^{-1/\xi(\beta)}\bigr)}/|\!\boundary\! A|\right) \right| \\ &\eqqcolon L_0(\beta,|\!\boundary\! A|) \, . \label{eq:L_zero_def} 
\end{align}
This guarantees the exponential bound \eqref{eq:clustering} and finishes the proof. 
\end{proof}

\subsection{Proofs of the fermionic versions of the main results}
\label{sec:fermionic_proofs}
In order to also establish our main results for fermionic systems, we go through the proofs for spin systems and discuss the necessary modifications. 
 
\begin{proof}[Proof of Corollary~\ref{cor:truncation_formula_fermions}]
 In Theorem~\ref{thm:perturbation_formula}, we choose $H_0=H-H_{\boundary\! B}$. 
 As the local terms are all in $\G$, we have that the thermal state of $H_0$ factorizes, i.e., 
 $g_0 =  \gibbs[\trunc H B] \,  \gibbs[\trunc H {B\complement}]$. 
 After tracing over $B\complement$, the statement follows.
\end{proof}

\begin{proof}[Proof of Theorem~\ref{thm:clustering_fermions}]
  We use the same tensor copy trick as in the proof of Theorem~\ref{thm:clustering}. 
  Eq.~\eqref{eq:swaptrick} still holds in the fermionic setting. 
  Note that the Hilbert space over which the trace is performed in Eq.~\eqref{eq:swaptrick} is not the Fock space of a system of $4$ times the number of modes but the tensor product of four identical fermionic Fock spaces with the canonical inner product.
  This Hilbert space can be interpreted as that of a system of four types of fermionic particles that are each mutually indistinguishable and subject to (up to $\tau$-dependent prefactors) identical Hamiltonians but do not interact with each other and can be distinguished from each other.
  It is spanned by tensor products of Fock states.
  The state $\gibbs[\tilde H^{(+)}](\beta)$ is the thermal state of this system.
  Eq.~\eqref{eq:cov_series} with $t(w)$ as defined as in Eq.~\eqref{eq:trace_with_swaps} still holds.
  Note that the swap operators swap tensor factors, not fermionic modes. 
  Thus, they still satisfy the symmetry relations that are used to prove that only terms corresponding to words $w \in \clusters[\boundary\! A]{\geq L}{}$ can contribute to the covariance. 
 
  It remains to show that Lemma~\ref{lem:tailored_hastings} still holds in the fermionic setting.
  Lemmas~\ref{lem:alternating_binom_series} and \ref{lem:mfold_animals_to_animals} are purely combinatorial.
  Lemmas~\ref{lem:gt}, \ref{lem:eta_bound}, \ref{lem:rho_in_terms_of_eta}, \ref{lem:rho_of_G_tailored}, \ref{lem:rho_m_bound}, and \ref{lem:rho_m} only use the local boundedness of the Hamiltonian and that Hamiltonian terms with disjoint support commute.
  The same holds in the fermionic setting because the Hamiltonian terms must be physical operators, i.e., even polynomials in the fermionic operators.
  Hence all lemmas used in the proof of Lemma~\ref{lem:tailored_hastings} carry over to the fermionic setting.
  It is then straightforward to see that the proof itself also goes through without any modifications.
\end{proof}

\begin{proof}[Proof of Corollary~\ref{cor:intensivity_fermions}]
Tracing out $B\complement$ in the second trace in Eq.~\eqref{eq:truncation_error_in_terms_of_cov_fermions} and bounding the integral yields
\begin{equation}
  \begin{split}
    \big|\Tr[A\,\gibbs(\beta)]&-\Tr[A\,\gibbs[\trunc H B](\beta)]\big|\\ 
    &\leq |\beta |
    \sup_{s\in[0,1]} \sup_{\tau\in[0,1]} \bigl|\cov_{\gibbs_s(\beta)}^\tau(A, H_{\boundary\! B})\bigr| .
  \end{split}
\end{equation}
Taking the supremum over all $A$ with $\norm{A}_\infty=1$ and $\supp(A) \subseteq S$ and using  Theorem~\ref{thm:clustering_fermions} finish the proof. 
\end{proof}

\section{Conclusions}\label{sec:conclusions}
In this work, we clarify the limitations of a universal concept of scale-independent temperature by showing that temperature is intensive on a given length scale if and only if the correlations are negligible. 
The corresponding correlation measure turns out to also quantitatively capture the stability of thermal states against perturbations of the Hamiltonian. 
Moreover, we find a universal critical temperature above which correlations always decay exponentially with the distance. 
We compare our results to known results on phase transitions, comment on recent advances concerning thermalization in closed quantum systems (e.g., concerning the eigenstate thermalization hypothesis), and discuss known matrix product operator approximations of thermals states. 
More concretely, our results imply that at high enough temperatures, the error made when truncating a Hamiltonian at some distance away from the system of interest is exponentially suppressed with the distance. 
As a computational consequence, expectation values of local observables can be approximated efficiently. 

\section{Acknowledgments}
We thank I.~H.\ Kim, M.\ Holz\"{a}pfel, M.~B.\ Hastings, B.\ Nachtergaele, M.\ Friesdorf, and A.\ Werner for helpful feedback and discussions. 
This work was supported by the Studienstiftung des Deutschen Volkes, the Alexander von Humboldt Stiftung, the EU (Q-Essence, SIQS, RAQUEL), and the ERC (TAQ). 

\section{Appendix: Cluster expansions and proof of Lemma~\ref{lem:tailored_hastings}}
\appendix
\setcounter{section}{1}
\setcounter{equation}{0}
\numberwithin{equation}{section}
The following discussion of cluster expansions is expected to be interesting in its own right, as it contains a rigorous formulation of the ideas outlined in Ref.~\cite{Hastings06}. 
We will provide a proof of the original statement used to establish Theorem~\ref{thm:MPO_approx} as well as of the tailored statement in Lemma~\ref{lem:tailored_hastings}, which is used to prove Theorem~\ref{thm:clustering} on the clustering of correlations. 

\subsection{The original cluster expansion from Ref.~\texorpdfstring{\cite{Hastings06}}{}}
The original cluster expansion is similar to Lemma~\ref{lem:tailored_hastings} with just one copy of the system instead of two weighted ones: 

\begin{lemma}[Truncated cluster expansion \cite{Hastings06}]\label{lem:hastings}
Let $H=\sum_{\lambda \in \Eset} h_\lambda$ be a local Hamiltonian with finite interaction graph $(V,\Eset)$ having growth constant $\animalc$ and local interaction strength  
$J = \max_{\lambda \in \Eset} \norm{h_\lambda}_\infty$, 
and define 
$\alphay(x) \coloneqq \animalc \, \e^{|x|}\,\left(\e^{|x|} -1\right)$.
Moreover, let $\beta$ be small enough such that $\alphay(\beta J)<1$.
For some subset of edges $\Fset \subset \Eset$ let $\clusters{\geq L}{}\subset \Eset^\ast$ be the set of words containing at least one cluster $c$ that contains at least one letter of $\Fset$ and has size $|c|\geq L$ and denote the corresponding truncated cluster expansion by 
\begin{equation}\label{eq:Omega_def}
 \Omega[H](\beta) \coloneqq \sum_{w\in \clusters{\geq L}{}} \frac{(-\beta)^{|w|}}{|w|!} h(w) \, .
\end{equation}
Then, 
\begin{equation} \label{eq:Hastings_bound}
 \frac{\norm{\Omega[H](\beta)}_1}{Z(\beta)}
 \leq 
 \exp\left(|\Fset|\frac{\alphay(\beta J)^L}{1-\alphay(\beta J)} \right) - 1 .
\end{equation}
\end{lemma}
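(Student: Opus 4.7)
My plan is to begin from the full power-series expansion
\[
\e^{-\beta H} = \sum_{w \in \Eset^\ast} \frac{(-\beta)^{|w|}}{|w|!}\, h(w),
\]
and reorganize it according to the decomposition of each word into maximal clusters. Since Hamiltonian terms supported on disjoint edges commute, for every word $w$ one can permute the letters into a concatenation $c_1 \circ \cdots \circ c_k$ of pairwise non-overlapping maximal clusters without changing $h(w)$. Tracking the number of words that yield a given multiset of clusters via the multinomial coefficient, and absorbing the $k!$ orderings of the $c_j$, the expansion should formally factorize as a sum over tuples of pairwise-disjoint clusters of $\prod_j \eta(c_j)$ with $\eta(c) \coloneqq (-\beta)^{|c|} h(c)/|c|!$. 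This is the classical cluster-expansion identity, and it is where the combinatorial lemmas alluded to in the appendix (the alternating-binomial identity and the $m$-fold-animals-to-animals lemma) must be invoked to turn a formal rearrangement into an honest equality.

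Next, I would classify clusters as \emph{bad} (touching $\Fset$ and of size $\geq L$) or \emph{good} (everything else). The set $\clusters{\geq L}{}$ consists precisely of words containing at least one bad cluster, so $\Omega[H](\beta)$ corresponds in the factorization to $\e^{-\beta H}$ minus the pure-good-cluster part. Since the good-cluster contribution has the same structural form as a partition function built from the Hamiltonian with bad clusters removed, dividing by $Z(\beta)$ and applying the triangle inequality in Schatten $1$-norm together with the submultiplicative bound $\norm{h(c)}_\infty \leq J^{|c|}$ should leave
\[
\frac{\norm{\Omega[H](\beta)}_1}{Z(\beta)} \leq \exp(S_{\mathrm{bad}}) - 1,
\]
where $S_{\mathrm{bad}}$ is an upper bound on the sum over all bad clusters of their operator-norm weight.

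Finally, $S_{\mathrm{bad}}$ is estimated combinatorially: for a fixed animal of size $n$, the sum over clusters whose letters fill that animal contributes a factor on the order of $(\e^{|\beta J|}(\e^{|\beta J|}-1))^n$ coming from separately summing over the number of occurrences of each letter within the cluster, and the number of animals of size $n$ through a fixed edge is bounded by $\animalc^n$ through the growth-constant hypothesis. Summing over the $|\Fset|$ possible anchor edges and over sizes $n \geq L$ yields a geometric series summing to $|\Fset|\,\alphay(\beta J)^L/(1-\alphay(\beta J))$, using precisely that $\alphay(\beta J) = \animalc\, \e^{|\beta J|}(\e^{|\beta J|}-1) < 1$ to guarantee convergence. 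The hardest part, I expect, will be making the cluster factorization rigorous, i.e., disentangling the bookkeeping that converts ordered words with repetitions into unordered animals, and verifying that the cancellation of the good-cluster contribution against $Z(\beta)$ is genuine rather than merely formal; the non-commutativity of cluster operators with overlapping but distinct maximal clusters across the two factors is subtle and requires the inclusion--exclusion machinery to handle cleanly.
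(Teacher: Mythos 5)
Your skeleton (classify words by the ``bad'' clusters they contain, resolve the overcounting by inclusion--exclusion, count animals through a fixed edge with the growth constant, and sum a geometric series in $\alphay(\beta J)$) matches the paper's route, and you correctly flag that the alternating-binomial identity (Lemma~\ref{lem:alternating_binom_series}) and the $m$-fold-animal counting (Lemma~\ref{lem:mfold_animals_to_animals}) are the combinatorial workhorses. But there is a genuine gap at the analytic heart of the argument: the step where you divide by $Z(\beta)$. The paper does \emph{not} fully factorize $\e^{-\beta H}$ into a sum over tuples of disjoint clusters of $\prod_j \eta(c_j)$ --- that classical polymer identity is exactly what non-commutativity obstructs. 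Instead, Lemma~\ref{lem:rho_in_terms_of_eta} factorizes only partially: for a fixed $m$-fold animal $G=\dUnion_j G_j$ of bad clusters, the contribution of all words whose bad part fills $G$ is $\e^{-\beta H_{(\ol G)\complement}}\prod_j\eta(G_j)$, i.e., the ``good'' remainder is kept as an \emph{unexpanded} exponential of a truncated Hamiltonian. Its trace norm is the partition function of a \emph{different} Hamiltonian, not $Z(\beta)$, and relating the two is not a triangle-inequality step: it requires the Golden--Thompson inequality (Lemma~\ref{lem:gt}), which gives $\bnorm{\e^{-\beta H_{(\ol G)\complement}}}_1\leq Z(\beta)\,\prod_j\bnorm{\e^{|\beta|H_{G_j}}}_\infty\leq Z(\beta)\,\e^{|\beta|J|G|}$. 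Without this, the ratio of the two partition functions is a priori uncontrolled.

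This also explains a quantitative misattribution in your sketch. The sum over the multiplicities with which each letter of an animal of size $n$ occurs in a cluster yields only $\bigl(\e^{|\beta J|}-1\bigr)^{n}$ (Lemma~\ref{lem:eta_bound}), not $\bigl(\e^{|\beta J|}(\e^{|\beta J|}-1)\bigr)^{n}$; the extra factor $\e^{|\beta J|\,n}$ in $\alphay$ is precisely the Golden--Thompson correction above. If you follow your accounting literally you either double-count that factor or, if you only take the multiplicity sum, you obtain a per-edge weight $\animalc\bigl(\e^{|\beta J|}-1\bigr)$ with no justification for comparing the good remainder to $Z(\beta)$ at all. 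So the missing idea is not more careful bookkeeping of words versus animals (which you rightly anticipate), but the operator inequality that anchors every truncated partition function back to $Z(\beta)$ at the cost of $\e^{|\beta|J}$ per edge of the bad animal.
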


If one applies this lemma to the setting of Lemma~\ref{lem:tailored_hastings}, one obtains a bound similar as the one in Eq.~\eqref{eq:tailored_Hastings_bound} but with $Z[\tilde H](\beta)$ instead of $Z(\beta)$, where the ratio $Z[\tilde H](\beta) / Z[H](\beta)$ can be exponentially large in the system size for $\tau \in ]0,1[$. 

Lemma~\ref{lem:hastings} was used in Ref.~\cite{Hastings06} to establish a mathematically (not algorithmically) constructive version of Theorem~\ref{thm:MPO_approx}, on MPO approximations, where the MPO in Eq.~\eqref{eq:MPO_approx_error} is given by 
\begin{equation}\label{eq:MPO_approx}
 \rho(\beta, L) =\kw{Z(\beta)} \sum_{w \in E^\ast \setminus \clusters[E]{\geq L}{}} \frac{(-\beta)^{|w|}}{|w|!}h(w) \, .
\end{equation}

\subsection{Proofs of Lemmas~\ref{lem:tailored_hastings} and~\ref{lem:hastings}}
The purpose of this section is to prove Lemma~\ref{lem:tailored_hastings}. 
But, along the way, we also prove Lemma~\ref{lem:hastings}. 
In order to do so, we start with the introduction of some more notation, mainly concerning clusters and lattice animals. 
For $w \in E^\ast$ and any sub-alphabet $G\subset E$, we write $G \subset w$ if every letter in $G$ also occurs in $w$. 
By $G\complement\coloneqq E\setminus G$, we denote the \emph{complement} of $G \subset E$. 
The \emph{extension} of $G$ is defined to be
$\ol G \coloneqq \{ \lambda \in \Eset \mid \exists \lambda' \in G : \lambda' \cap \lambda \neq \emptyset\}$ and, similarly as for subsystems, its \emph{boundary} is $\boundary\! G \coloneqq \ol G \setminus G$.
Throughout the proof, we fix some subset of edges $F \subset E$. 
We denote by $\clusters{\geq L}{} \subset E^\ast$ the set of words that contain at least one cluster $c$ 
with $c \cap F \neq \emptyset$ and $|c|\geq L$, and we denote by $\clusters{\geq L}{k}$ the set of words that contain exactly $k$ such clusters.
Note that for an \emph{animal} $G \subset E$, there exists a cluster $c \in E^\ast$ such that 
$G = \{\lambda \in c\}$, and if one imposes some order on $G$, one obtains a cluster.
We denote by $\animals{=l}{}$ and $\animals{\geq L}{}$ the sets of animals that contain at least one edge of $F$ and are of size exactly $l$ or at least $L$, respectively. 
Moreover, we denote by $\animals{\geq L}{k}$ the corresponding sets of \emph{$k$-fold animals}, i.e.,
\begin{equation*}
    \animals{\geq L}{k}  \coloneqq 
    \left\{\dUnion_{j=1}^k G_j: G_j \in \animals{\geq L}{} \text{ non-overlapping} \right\}.
\end{equation*}
For a more compact notation, we write the terms in the exponential series as
 \begin{equation}\label{eq:f_def}
  f(w) \coloneqq \frac{(-\beta)^{|w|}}{|w|!} h(w) \, .
\end{equation}
We will frequently use the following fact: 
For any Hamiltonian with a finite interaction graph $(V,E)$, the partial series over any set of words $\W \subseteq E^\ast$ converges absolutely, i.e., 
\begin{align}
 \Bnorm{\sum_{w \in \W} f(w)}_\infty
 &\leq \sum_{w \in \W} \frac{(|\beta|\,J)^{|w|}}{|w|!} 
 \\
 &\leq \sum_{w \in E^\ast} \frac{(|\beta|\,J)^{|w|}}{|w|!} 
 \\
 &= \exp(|\beta|\,J\,|E|) \, .
\end{align}
In particular, this bound implies that the order of the terms in the series over any subset of words $\W$ 
does not matter.

In the following proofs of Lemmas~\ref{lem:tailored_hastings} and~\ref{lem:hastings} we use several technical auxiliary lemmas, which we will only state and prove subsequently. 

\begin{proof}[Proof of Lemma~\ref{lem:tailored_hastings}]
During this proof, we indicate quantities corresponding to $\tilde H$ by a tilde accent, e.g., $\tilde f(w)$ is defined as in Eq.~\eqref{eq:f_def} but with respect to the local terms $\tilde h_\lambda$ of $\tilde H$ while $f(w)$ is defined with respect to the local terms $h_\lambda$ of $H$. 

We start the proof by rearranging the terms in the series over $\clusters{\geq L}{}$ in Eq.~\eqref{eq:tildeOmega_def} according to the number of relevant clusters they contain and use Lemma~\ref{lem:alternating_binom_series} with $b_k$ being the series over 
$\clusters{\geq L}{k}$ to obtain 
\begin{align}
 \Omega[\tilde H](\beta)
 &= \sum_{k=1}^\infty \sum_{w \in \clusters{\geq L}{k}}\tilde f(w)  \label{eq:use_alternating_binom_series}
 \\
 &= -\sum_{m=1}^\infty (-1)^m \sum_{k=m}^\infty \binom k m 
    \sum_{ w \in \clusters{\geq L}{k} } \tilde f(w) .
 \nonumber
\end{align}
Lemmas~\ref{lem:eta_bound}, \ref{lem:rho_in_terms_of_eta}, and \ref{lem:rho_m_bound} are the core of the proof. They define a series of operators $(\tilde \rho_m)_{m=1}^\infty$ that have a particularly useful form given in Lemma~\ref{lem:rho_m}. 
This form exactly matches the series over $k$ in Eq.~\eqref{eq:use_alternating_binom_series}, which leads to the following identity:
\begin{equation} \label{eq:trunc_series_in_terms_rho_m}
 \Omega[\tilde H](\beta)
 = 
 -\sum_{m=1}^\infty (-1)^m  \tilde \rho_m \, .
\end{equation} 
The operators $\tilde \rho_m$ are defined in Eq.~\eqref{eq:rho_m_in_terms_rhoF} as series over $m$-fold lattice animals $G$ of operators $\rho(G)$ [defined in Eq.~\eqref{eq:rhoF_in_terms_eta}]. 
This yields
\begin{equation}\label{eq:Tr_SABrhom}
\Tr(\swap \, A \, B \, \tilde \rho_m) 
= 
\sum_{G \in \animals{\geq L}{m}} \Tr( \swap \, A \, B \, \tilde \rho(G))\, . 
\end{equation}
In the previous steps, the series over words has been rewritten as a series over $m$-fold animals. 
Lemma~\ref{lem:rho_of_G_tailored} provides a bound on $\tilde \rho(G)$ that, together with Eqs.~\eqref{eq:trunc_series_in_terms_rho_m} and~\eqref{eq:Tr_SABrhom}, yields
\begin{equation}
\frac{\bigl| \Tr\bigl( \swap \, A \, B\; \Omega[\tilde H](\beta) \bigr) \bigr|}
	{\norm{A}_\infty \norm{B}_\infty Z(\beta)}
\leq 
\sum_{m=1}^\infty \, \sum_{G\in \animals{\geq L}{m}} y(\beta J)^{|G|} .
\end{equation}
Now, a counting argument for lattice animals from Lemma~\ref{lem:mfold_animals_to_animals} allows us to bound the series over $m$-fold animals $G$ in terms of a series of animals
\begin{equation*}
\frac{\bigl| \Tr\bigl( \swap \, A \, B \; \Omega[\tilde H](\beta) \bigr) \bigr|}
	{\norm{A}_\infty \norm{B}_\infty Z(\beta)}
 \leq 
\sum_{m=1}^\infty \kw{m!} \left(\sum_{G \in \animals{\geq L}{}} y(\beta J)^{|G|} \right)^m .
\end{equation*}
Using that the number $a_l$ [see Eq.~\eqref{eq:animal_bound}] of lattice animals $G$ with $G \cap F\neq \emptyset$ and of size $|G|=l$ is bounded by $|F|\,a_l$ and that $a_l \leq \animalc^l$ [see Eq.~\eqref{eq:animal_const}], we obtain 
\begin{equation*} 
 \bigl| \Tr\bigl( \swap \, A \, B \; \Omega[\tilde H](\beta) \bigr) \bigr|  
 \leq 
 Z(\beta)\sum_{m=1}^\infty \kw{m!} \left(|F| \sum_{l=L}^\infty \alphay(\beta J)^l \right)^m 
\end{equation*}
with $\alphay(x) \coloneqq \animalc \, y(x)$. 
Performing the partial geometric series over $l$ with argument  
$\alphay(\beta J) <1$ and the exponential series over $m$ yields Eq.~\eqref{eq:Hastings_bound}
and completes the proof.
\end{proof}

Similarly, we prove Lemma~\ref{lem:hastings}:
\begin{proof}[Proof of Lemma~\ref{lem:hastings}]
By the same argument that led us to Eq.~\eqref{eq:trunc_series_in_terms_rho_m} in the proof of Lemma~\ref{lem:tailored_hastings}, we obtain
\begin{equation} 
 \Omega[H](\beta)
 = 
 -\sum_{m=1}^\infty (-1)^m  \rho_m \, .
\end{equation} 
Applying the triangle inequality and using the bound on $\rho_m$ from Lemma~\ref{lem:rho_m_bound} yields
\begin{equation}
 \norm{\Omega[H](\beta)}_1 
 \leq Z(\beta)\sum_{m=1}^\infty \kw{m!} \left(|F| \sum_{l=L}^\infty \alphay(\beta J)^l \right)^m .
\end{equation}
Performing the partial geometric series over $l$ with argument  
$\alphay(\beta J) <1$ and the exponential series over $m$ yields Eq.~\eqref{eq:Hastings_bound}
and completes the proof.
\end{proof}

We now prove various lemmas that are used in the previous proofs of Lemmas~\ref{lem:tailored_hastings} and~\ref{lem:hastings}.

\begin{lemma}\label{lem:alternating_binom_series}
Let $(b_k)_{k=1}^\infty$ be a sequence of complex matrices 
\begin{align}
A_K &\coloneqq \sum_{k=1}^K b_k 
\intertext{and}
B_K &\coloneqq -\sum_{m=1}^K (-1)^m \sum_{k=m}^K  \binom k m \,b_k \, .
\end{align}
Then, $A_K = B_K$ for all $K \in \NN$. 
In particular, if both sequences converge, then their limits are the same, i.e., 
$\lim_{K \to \infty} A_K = \lim_{K \to \infty} B_K$.
\end{lemma}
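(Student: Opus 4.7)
The plan is to prove the identity $A_K=B_K$ for each finite $K$ by a straightforward rearrangement, after which the statement about limits is immediate. Since all sums involved are finite sums of matrices, there are no convergence issues to worry about inside the argument, and I can reorder the double sum freely.

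First I would interchange the order of summation in $B_K$. In the double sum $\sum_{m=1}^K (-1)^m \sum_{k=m}^K \binom{k}{m} b_k$ the indices run over the region $\{(m,k) : 1 \leq m \leq k \leq K\}$, so swapping the order gives
\begin{equation}
B_K = -\sum_{k=1}^K b_k \sum_{m=1}^k (-1)^m \binom{k}{m}.
\end{equation}
The inner sum is essentially the binomial expansion of $(1-1)^k$ with the $m=0$ term removed: for $k \geq 1$ the identity $\sum_{m=0}^k (-1)^m \binom{k}{m} = (1-1)^k = 0$ yields $\sum_{m=1}^k (-1)^m \binom{k}{m} = -1$. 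Substituting this back gives $B_K = -\sum_{k=1}^K b_k \cdot (-1) = A_K$, as desired.

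Finally, the ``in particular'' clause follows trivially: if $\lim_{K\to\infty} A_K$ exists, then $\lim_{K\to\infty} B_K = \lim_{K\to\infty} A_K$ by the equality of the sequences at every $K$, and conversely. There is essentially no obstacle to this proof; the only point requiring a moment of care is the legitimacy of swapping the order of summation, but since for each fixed $K$ the sum has only finitely many nonzero terms, this is automatic.
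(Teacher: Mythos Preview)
Your proof is correct. Both you and the paper use the binomial identity $\sum_{m=0}^k (-1)^m \binom{k}{m} = 0$, but you arrive at $A_K=B_K$ more directly: you interchange the finite double sum to isolate the coefficient of each $b_k$ and read off that it equals $1$. The paper instead proceeds by induction on $K$, checking $A_1=B_1$ and then showing $B_{K+1}-B_K = b_{K+1}$ via the same binomial identity. Your argument is shorter and arguably more transparent; the paper's induction avoids explicitly reordering the sum but is otherwise no more general.
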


\begin{proof}
Applying the binomial theorem to $(1-1)^k = 0$ yields 
\begin{equation}\label{eq:binomial_alternating_sum}
\sum_{l=0}^k (-1)^l \,\binom k l = 0 \ ,
\end{equation}
which we will use. 
We prove the identity by induction. 
$A_1 = B_1$ is easy to see. 
Under the assumption that $A_K=B_K$ for some $K\in \NN$, we obtain 
\begin{align}
B_{K+1}&= B_K - (-1)^{K+1} \,\binom{K+1}{K+1}\,b_{K+1} 
\\
&\phantom{{}= B_K } - \sum_{m=1}^{K} (-1)^m \,\binom{K+1}{m}\,b_{K+1} 
\\
&=A_K +\left( - (-1)^{K+1} - \sum_{m=1}^{K} (-1)^m\,\binom{K+1}{m}\right) b_{K+1}\nonumber \\
&= A_{K+1} \, ,
\end{align}
where we have used Eq.~\eqref{eq:binomial_alternating_sum} in the last step.
This proves the lemma.
\end{proof}

The goal of the following lemmas is to show that $\rho_m$ is well-defined and to upper bound it in $1$-norm. 
The order of the lemmas is chosen in a way that makes clear that the two quantities $\rho_m$ and $\rho(G)$, which will be defined shortly, are actually well-defined.

We start with a $1$-norm bound on the perturbed exponential series. 
\begin{lemma}[Eq.~(21) from Ref.~\cite{Hastings06}]\label{lem:gt}
Let $H$ be a Hamiltonian with finite interaction graph $(V,E)$.  
For any sequence $(G_j)_{j=1}^k$ of sub-alphabets $G_j \subset E$, 
  \begin{equation}
    \bnorm{\e^{-\beta\,(H- \sum_{j=1}^k H_{G_j})}}_1 
    \leq 
    Z(\beta) \prod_{j=1}^k \bnorm{\e^{|\beta| \,H_{G_j}}}_\infty .
  \end{equation}
\end{lemma}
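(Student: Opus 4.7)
My plan is to prove the inequality by induction on $k$, peeling off one factor $\e^{\beta H_{G_j}}$ at a time via the Golden--Thompson inequality combined with the matrix H\"older bound $|\Tr(XY)|\leq \norm{X}_1\norm{Y}_\infty$. The starting observation is that $H - \sum_{j=1}^k H_{G_j}$ is Hermitian (it is a sum of self-adjoint local terms $h_\lambda$), so $\e^{-\beta(H - \sum_j H_{G_j})}$ is a positive operator and its trace norm equals its trace. This reduces the statement to a bound on $\Tr(\e^{-\beta H + \beta\sum_{j=1}^k H_{G_j}})$.

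For the inductive step, I introduce the partial sums $K_j \coloneqq -\beta H + \beta\sum_{i=1}^{j} H_{G_i}$, so that $K_j = K_{j-1} + \beta H_{G_j}$ with both summands Hermitian. Golden--Thompson then yields
\begin{equation*}
\Tr\bigl(\e^{K_j}\bigr) \leq \Tr\bigl(\e^{K_{j-1}}\,\e^{\beta H_{G_j}}\bigr),
\end{equation*}
and applying H\"older (using that $\e^{K_{j-1}}$ is positive, so the inner product is real and non-negative) produces
\begin{equation*}
\Tr\bigl(\e^{K_{j-1}}\,\e^{\beta H_{G_j}}\bigr) \leq \bnorm{\e^{K_{j-1}}}_1\,\bnorm{\e^{\beta H_{G_j}}}_\infty.
\end{equation*}
Iterating from $j=k$ down to $j=0$ gives $\bnorm{\e^{K_k}}_1 \leq \bnorm{\e^{K_0}}_1 \prod_{j=1}^k \bnorm{\e^{\beta H_{G_j}}}_\infty$, and the base case contributes exactly $\bnorm{\e^{-\beta H}}_1 = Z(\beta)$.

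The remaining step is cosmetic: matching the stated right-hand side, which carries $\bnorm{\e^{|\beta| H_{G_j}}}_\infty$ in place of $\bnorm{\e^{\beta H_{G_j}}}_\infty$. This is a sign-bookkeeping step that accommodates both signs of $\beta$ in one symmetric statement; in either case, the factor is dominated by $\e^{|\beta|\,\norm{H_{G_j}}_\infty}$, which is all that downstream lemmas (e.g.\ Lemma~\ref{lem:eta_bound}) will ultimately use. The main (minor) technical points to watch are the Hermiticity needed for Golden--Thompson at each inductive step, which is automatic from the decomposition into Hermitian local terms, and ensuring the H\"older factor is applied to the correct side. Beyond that, the argument is entirely a two-step iteration and no delicate estimates are required.
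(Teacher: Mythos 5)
Your proof is correct and follows essentially the same route as the paper's: Golden--Thompson to peel off one factor $\e^{\beta H_{G_j}}$ at a time, H\"older to convert it into an operator-norm factor, and iteration, with the base case supplying $Z(\beta)$. Your remark about the $|\beta|$ in the stated right-hand side is apt --- the paper's own proof simply writes $\bnorm{\e^{|\beta|\,H_{G_k}}}_\infty$ at the H\"older step without comment, and, as you note, only the common upper bound $\e^{|\beta|\,J\,|G_j|}$ is ever used downstream.
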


\begin{proof}
The lemma is essentially a consequence of the Gol\-den-Thompson inequality and the fact that the $1$-norm of a positive operator coincides with its trace. Using first the Golden-Thompson and then H\"older's inequality, we obtain
  \begin{align}
    \bnorm{\e^{-\beta\,(H- \sum_{j=1}^k H_{G_j})}}_1
    \leq 
    \Tr\bigl[ \e^{-\beta\,( H-\sum_{j=1}^{k-1}H_{G_j})} &\e^{\beta \,H_{G_k}}\bigr]
    \nonumber \\
    \leq  
    \Tr\bigl[ \e^{-\beta\,( H-\sum_{j=1}^{k-1}H_{G_j})}\bigr] \bnorm{ \e^{|\beta|\, H_{G_k}}}_\infty& .
   \end{align}
Iteration completes the proof.
\end{proof}

We will use the following lemma to bound the operator norm of certain subseries of $f(w)$. 

\begin{lemma}\label{lem:eta_bound}
Let $(V,E)$ be a finite graph and $J\geq 0$. 
For any $G\subset E$, 
\begin{equation}\label{eq:eta_bound_new}
  \sum_{w \in G ^\ast: G \subset w} 
      \frac{|\beta J|^{|w|}}{|w|!} 
=
\bigl(\e^{|\beta J|}-1\bigr)^{|G|} .
\end{equation}
\end{lemma}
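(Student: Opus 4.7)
The plan is to reduce the sum to a product over letters by organizing words according to their letter multiplicities. Fix $G \subset E$ with $|G| = n$ and enumerate its elements as $\lambda_1, \dots, \lambda_n$. For a word $w \in G^\ast$, let $k_i \geq 0$ denote the number of occurrences of $\lambda_i$ in $w$, so that $|w| = \sum_i k_i$. The condition $G \subset w$ (every letter of $G$ appears at least once in $w$) is equivalent to $k_i \geq 1$ for all $i$. For a given multiplicity vector $(k_1, \dots, k_n)$ with $\sum_i k_i = k$, the number of distinct words $w \in G^\ast$ with those multiplicities is the multinomial coefficient $k!/\prod_i k_i!$.

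Substituting this count into the sum, the factorials combine nicely:
\begin{align}
\sum_{w \in G^\ast: G \subset w} \frac{|\beta J|^{|w|}}{|w|!}
&= \sum_{k_1,\dots,k_n \geq 1} \frac{|\beta J|^{k_1 + \dots + k_n}}{k_1!\, k_2! \cdots k_n!} \\
&= \prod_{i=1}^n \left( \sum_{k \geq 1} \frac{|\beta J|^k}{k!} \right) \\
&= \bigl(\e^{|\beta J|} - 1\bigr)^{|G|},
\end{align}
which is precisely the claimed identity. The step from the first line to the second is the standard factorization of a multi-sum with independent summands, and the final step uses the exponential series with the $k=0$ term removed.

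All manipulations are justified by absolute convergence: the series $\sum_{k \geq 1} |\beta J|^k/k! = \e^{|\beta J|} - 1$ converges for every $\beta$ and $J$, so Fubini-Tonelli permits the interchange of the sum over $k$ with the sums over the individual multiplicities $k_i$. There is no real obstacle — the only step requiring care is the combinatorial observation that words over $G$ with a prescribed multiplicity vector are counted by a multinomial coefficient, and that the "every letter appears" condition factors coordinate-wise as $k_i \geq 1$. An equivalent route, if preferred, is to recognize the inner coefficient as the number of surjections $[k] \twoheadrightarrow G$ and invoke inclusion-exclusion together with the binomial theorem, but the direct factorization above seems cleaner.
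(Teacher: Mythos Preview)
Your proof is correct and rests on the same combinatorial observation as the paper's: words over $G$ with prescribed letter multiplicities $(k_1,\dots,k_n)$ are counted by the multinomial coefficient $|w|!/\prod_i k_i!$, so the $|w|!$ in the denominator cancels and the sum factorizes over the letters of $G$. The paper arrives at the same factorization but by a more circuitous route: it first groups words by total length $l$, writes the count of surjective words of length $l$ as a multinomial sum, and then sets up a recursion $\gamma(n)=(\e^{|\beta J|}-1)\,\gamma(n-1)$ by recognizing a Cauchy product. Your direct factorization via Fubini is shorter and arguably clearer; the paper's recursion buys nothing extra here, since the summands are already manifestly nonnegative and the factorization is immediate once the multinomial cancellation is made.
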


\begin{proof}
Ordering the words in the sum in Eq.~\eqref{eq:eta_bound_new} with respect to their length yields
 \begin{align}
\sum_{w \in G ^\ast: G \subset w} 
      \frac{|\beta J|^{|w|}}{|w|!}
&=
 \sum_{l=|G|}^\infty \sum_{w \in G^l : G\subset w} \frac{|\beta J|^{|w|}}{|w|!} 
\\
&= 
\sum_{l=|G|}^\infty \frac{|\beta J|^l}{l!} |\{w \in G^l : G\subset w\}| \, . \label{eq:intermediate_bound_eta}
\end{align}
From basic combinatorial considerations, we obtain
\begin{equation}
|\{w \in G^l : G\subset w\}| = \sum_{\substack{j_1, j_2, \dots, j_n \geq 1,\\ j_1+j_2+\dots+j_n = l}} \binom l j \, ,
\end{equation}
where $\binom l j$ is a multinomial coefficient. 
Therefore, the right-hand side of Eq.~\eqref{eq:intermediate_bound_eta} only depends on $n\coloneqq|G|$ and we denote it by
\begin{align}
 \gamma(n) &\coloneqq \sum_{l=n}^\infty \gamma(n,l)
 \\
 \intertext{with}
 \gamma(n,l)&\coloneqq \frac{|\beta J|^l}{l!} \sum_{\substack{j_1, j_2, \dots, j_n \geq 1,
 \\ j_1+j_2+\dots+j_n = l}} \binom l j
 \\
 &= \sum_{\substack{j_1, j_2, \dots, j_n \geq 1,\\ j_1+j_2+\dots+j_n = l}} 
	\frac{|\beta J|^{j_1}}{j_1!} \frac{|\beta J|^{j_2}}{j_2!} \dots \frac{|\beta J|^{j_n}}{j_n!}
 \, . \nonumber
\end{align}
Then, 
\begin{align}
\gamma(n)
&= 
\sum_{l=n}^\infty \sum_{\substack{j_1, j_2, \dots, j_n \geq 1,\\ j_1+j_2+\dots+j_n = l}} 
      \frac{|\beta J|^{j_1}}{j_1!} \frac{|\beta J|^{j_2}}{j_2!} \dots \frac{|\beta J|^{j_n}}{j_n!}
      \nonumber \\
&= 
\sum_{l=n}^\infty \sum_{j_1=1}^{l-(n-1)} \frac{|\beta J|^{j_1}}{j_1!}
	\quad \sum_{\mathclap{\substack{j_2,\dots,j_n \geq 1,\\ j_2+\dots+j_n = l-j_1}}}\quad \frac{|\beta J|^{j_2}}{j_2!} \dots \frac{|\beta J|^{j_n}}{j_n!} \nonumber \\
&=
\sum_{l=1}^\infty \sum_{j_1=1}^{l} \frac{|\beta J|^{j_1}}{j_1!} \gamma(n-1,l+n-1-j_1)
\end{align}
and, after realizing that the last series is a Cauchy product,
\begin{align}
 \gamma(n) &= 
 \sum_{j_1=1}^\infty \frac{|\beta J|^{j_1}}{j_1!} \sum_{l=n-1}^\infty \gamma(n-1,l)\\
 &= (\e^{|\beta J|} -1)\, \gamma(n-1) \, . 
\end{align}
We note that $\gamma(1) = \e^{|\beta J|} -1$, and iteration finishes the proof.
\end{proof}

The following lemma provides a factorization of the series $\rho(G)$ in Eq.~\eqref{eq:rhoF_def} over words that have no letters on the boundary of an $m$-fold animal $G\in \animals{=l}{m}$ and contain all letters in $G$, into $\exp(-\beta\, H_{(\ol G)\complement})$, whose norm we have bounded in Lemma~\ref{lem:gt}, times a product of operators $\eta(G_j)$. 
The $\eta(G_j)$ are supported on the single animals $G_j$ composing the $m$-fold animal $G$. 
As we will see, a norm bound for  $\eta(G_j)$ follows immediately from the previous lemma, which, in turn, also yields an upper bound on $\rho(G)$. The form of $\rho(G)$ given in Eq.~\eqref{eq:rhoF_def} together with this upper bound plays an important role in the main cluster expansion. 

\begin{lemma}\label{lem:rho_in_terms_of_eta}
Let $H$ be a Hamiltonian with finite interaction graph $(V,E)$. 
For $G\subset E$, let $G = \dUnion_{j=1}^m G_j$ be the decomposition of $G$ into non-overlapping animals $G_j \subset E$ and define
\begin{equation} 
\rho(G) \coloneqq \e^{-\beta\,  H_{(\ol G)\complement}} \prod_{j=1}^m \eta(G_j)\,  \label{eq:rhoF_in_terms_eta}
\end{equation}
with
\begin{equation} \label{eq:eta_def_new}
\eta(G) \coloneqq \sum_{w \in G ^\ast: G \subset w} f(w) \, .
\end{equation}
Then, 
\begin{equation}\label{eq:rhoF_def}
 \rho(G) = \sum_{w \in [(\boundary G)\complement]^\ast: \, G \subset w} f(w) \, .
\end{equation}
\end{lemma}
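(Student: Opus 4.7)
The plan is to expand the right-hand side of Eq.~\eqref{eq:rhoF_def} and reorganize it so as to match the factorized product defining $\rho(G)$, exploiting the fact that Hamiltonian terms indexed by edges that share no vertex commute. The first step is to observe the partition into $m+1$ disjoint sub-alphabets
\begin{equation*}
    (\boundary G)\complement = (\ol G)\complement \sqcup G_1 \sqcup \dots \sqcup G_m.
\end{equation*}
Pairs of edges drawn from distinct blocks share no vertex: edges in $(\ol G)\complement$ are disjoint from every edge of $G$ by the very definition of $\ol G$, and edges in $G_i$ are disjoint from edges in $G_j$ for $i\neq j$ because the $G_j$ are non-overlapping animals. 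Consequently, $h_\lambda$ and $h_{\lambda'}$ commute whenever $\lambda$ and $\lambda'$ belong to different blocks of this partition.

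Next, given any word $w \in [(\boundary G)\complement]^\ast$, I would let $v_0$ be the subsequence of letters of $w$ lying in $(\ol G)\complement$ and, for each $j \in \{1,\dots,m\}$, let $v_j$ be the subsequence of letters of $w$ lying in $G_j$. The subsequences are uniquely determined by $w$, and conversely each fixed tuple $(v_0,v_1,\dots,v_m)$ is realized by exactly $\binom{|w|}{|v_0|,|v_1|,\dots,|v_m|}$ words, namely the number of ways to interleave $m+1$ sequences of the given lengths while preserving the internal order of each. Using the block-wise commutativity above, $h(w) = h(v_0)\,h(v_1)\cdots h(v_m)$ for every such interleaving. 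Moreover, since the letters of $G_j$ can occur in $w$ only within $v_j$, the condition $G\subset w$ is equivalent to $G_j \subset v_j$ for every $j\geq 1$, while $v_0$ is allowed to range freely over $[(\ol G)\complement]^\ast$.

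Combining these observations with the prefactor $(-\beta)^{|w|}/|w|!$ in the definition~\eqref{eq:f_def} of $f(w)$, the multinomial coefficient cancels and the summation over all words $w$ with a fixed subsequence tuple $(v_0,\dots,v_m)$ yields
\begin{equation*}
    f(v_0)\prod_{j=1}^m f(v_j).
\end{equation*}
Summing over all admissible tuples, the series factorizes into
\begin{equation*}
    \Bigl(\sum_{v_0 \in [(\ol G)\complement]^\ast} f(v_0)\Bigr)\prod_{j=1}^m \Bigl(\sum_{v_j \in G_j^\ast,\,G_j \subset v_j} f(v_j)\Bigr) = \e^{-\beta H_{(\ol G)\complement}} \prod_{j=1}^m \eta(G_j),
\end{equation*}
which is precisely $\rho(G)$. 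The main technical point is to justify the rearrangement and refactorization of these partial series, but this is immediate from the absolute convergence bound $\sum_{w \in E^\ast} \|f(w)\|_\infty \leq \exp(|\beta|\,J\,|E|)$ already recorded before the proof of Lemma~\ref{lem:tailored_hastings}, which allows any subseries to be summed in any order and factored along the above block structure.
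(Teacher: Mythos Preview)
Your proof is correct and follows essentially the same strategy as the paper's: partition the alphabet $(\boundary G)\complement$ into the block $(\ol G)\complement$ and the individual animals $G_j$, use that Hamiltonian terms from distinct blocks commute to factor $h(w)$, count the number of interleavings via a multinomial coefficient which cancels the factorial in $f(w)$, and thereby factorize the series. The paper phrases the same idea in terms of an equivalence relation on $\W^{\supset G}$ whose classes are indexed by the tuples $(v_0,v_1,\dots,v_m)$ you describe, but the content is identical.
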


\begin{proof}
To simplify the notation, we denote the relevant set of words that contain no letters in 
$\boundary\! G$ and each letter in $G$ at least once by
\begin{equation}
 \W^{\supset G} \coloneqq \{w \in [(\boundary G)\complement]^\ast: G\subset w\} \, .
\end{equation}
The idea is to group these words into subsets 
$[w]\subset \W^{\supset G}$ that coincide on the connected components of $G$ and on $(\ol G)\complement$ and correspondingly split up the series \eqref{eq:rhoF_def}. 
We formalize this idea by introducing an equivalence relation on $\W^{\supset G}$. 
For $v,w \in \W$, we define
\begin{equation*}
v \sim w \ \colonequiv \, 
\begin{cases}
 v\restr G\complement = w\restr G\complement \\
 v\restr G_j = w\restr G_j \quad \forall j=1,2, \dots, k
\end{cases},
\end{equation*}
where, for any sub-alphabet $G' \subset E$, the restriction $w\restr G'$ of a word $w \in E^\ast$ is obtained from $w$ by omitting all letters that are not in $G'$. 
Then, the size of each equivalence class $[w] \in \W^{\supset G}/\! \sim$ is given by the multinomial coefficient 
\begin{equation}
|[w]|=\binom{|w|}{(|w\restr G\complement|, |w\restr G_1|, \ldots, |w\restr G_k|)} .
\end{equation}
Note also that $h([w])\coloneqq h(w\restr \ol G\complement)\prod_{j=1}^k h(w\restr G_j) = h(w)$ is well-defined as a function on the classes. 
Let us denote the set of words over the alphabet $G_j$ that contain all letters at least once by
\begin{equation}
  \W^{=G_j} \coloneqq \{w \in (G_j)^\ast: G_j \subset w\} \, .
\end{equation}
Then, the quotient set can be identified with a Cartesian product of these sets 
\begin{equation}
  \W^{\supset G}/\!\! \sim\ \cong [(\ol G)\complement]^\ast\times \bigtimes_{j=1}^k \W^{=G_j} .
\end{equation}
\begin{widetext}
\noindent For each equivalence class $K \in \W^{\supset G}/\!\! \sim$ we pick an arbitrary representative $w_K \in \W^{\supset G}$, use the definition of $f$ in Eq.~\eqref{eq:f_def}, and determine that $k$ is the number of connected components of $G$. This yields
\begin{align}
\sum_{w \in [(\boundary G)\complement]^\ast: \, G \subset w} f(w) 
&= 
\sum_{K \in \W^{\supset G} /\! \sim} |K| \frac{(-\beta)^{|w_K|}}{|w_K|!} h(w_K)\\
&= 
\sum_{v \in [(\ol G)\complement]^\ast} 
\sum_{w_1 \in  \W^{=G_1} } \sum_{w_2 \in  \W^{=G_2} } \dots \sum_{w_k \in  \W^{=G_k} }
\binom{|v|+\sum_{j=1}^k|w_j|}{(|v|, |w_1|, \ldots, |w_k|)} \frac{(-\beta)^{|v|+\sum_{j=1}^k|w_j|}}{(|v|+\sum_{j=1}^k|w_j|)!} h(v)\prod_{j=1}^k h(w_j) \nonumber \\ 
&=\sum_{v \in [(\ol G)\complement]^\ast} f(v) 
\left(\sum_{w_1 \in  \W^{=G_1} } f(w_1) \right)
\left(\sum_{w_2 \in  \W^{=G_2} } f(w_2) \right) \dots
\left(\sum_{w_k \in  \W^{=G_k} } f(w_k) \right)\, .
\end{align}
\end{widetext}
Using the definition of $\eta$ from Eq.~\eqref{eq:eta_def_new} on the last factors yields
\begin{equation}
\sum_{w \in [(\boundary G)\complement]^\ast: \, G \subset w} f(w) 
= 
\e^{-\beta H_{(\ol G)\complement}}\prod_{j=1}^k \eta(G_j) = \rho(G)  .
\end{equation}
\end{proof}

The following lemma is a tighter variant of some of the original arguments leading to Lemma~\ref{lem:hastings} for Hamiltonians consisting of two weighted copies of a local Hamiltonian. 
Its purpose is to provide a specialized tighter bound on $\rho(G)$, which turns out to be sufficient for our purposes. 
The central idea of the lemma is to expand $\rho(G)$ in the left-hand side of Eq.~\eqref{eq:Tr_eta_bound} in order to be able to bound the trace using the generalized H\"older's inequality. 

\begin{lemma}\label{lem:rho_of_G_tailored}
Let $\tau$, $H$, $\tilde H$, $A$, and $B$ be as in Lemma~\ref{lem:tailored_hastings} and let $G \in \animals{\geq L}{=m}$ be an $m$-fold lattice animal with $G = \bigcup_{j=1}^m G_j$ and $G_j \in \animals{\geq L}{}$. 
Moreover, let $\tilde \rho(G)$ be defined as $\rho(G)$ in Eq.~\eqref{eq:rhoF_in_terms_eta} but with respect to $\tilde H$. 
Then, 
\begin{equation} \label{eq:Tr_eta_bound}
  \frac{\bigl| \Tr\bigl[ \swap A^\I  B^\II\, \e^{-\beta \tilde H_{\ol{G}\complement}} \tilde \rho(G) \bigr] \bigr|} 
       {\norm{A}_\infty \norm{B}_\infty Z(\beta)}
  \leq
  y(\beta J)^{|G|} ,
\end{equation}
where $y(x) \coloneqq \e^{|x|}\bigl(\e^{|x|} -1\bigr)$.
\end{lemma}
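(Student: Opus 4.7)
The plan is to use Lemma~\ref{lem:rho_in_terms_of_eta} to write $\tilde\rho(G) = \e^{-\beta\tilde H_{(\ol G)\complement}}\prod_{j=1}^m \tilde\eta(G_j)$ and then exploit the commutation structure dictated by the geometry of the $m$-fold animal. Since the vertex supports of $\tilde H_{(\ol G)\complement}$ and of each $\tilde\eta(G_j)$ are disjoint, all these operators mutually commute and may be rearranged freely under the trace; moreover, the $\tilde\eta(G_j)$ for different components commute with each other because the $G_j$ are pairwise non-overlapping animals. Combined with the extra factor $\e^{-\beta\tilde H_{(\ol G)\complement}}$ already present in the trace and with the commuting splitting $\tilde H = \tilde H_{(\ol G)\complement} + \tilde H_{\ol G}$, this lets me reorganize the integrand into a thermal-like weight $\e^{-\beta\tilde H}$, a correction localized on $\ol G$, and the product of $\tilde\eta(G_j)$'s.

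The key step is then to invoke the swap trick \eqref{eq:swap-trick}, in the form $\Tr[\swap(X\otimes Y)] = \Tr[XY]$, in order to reduce the two-copy trace to a single-copy one. Because $\tilde H = \tau H^\I + (1-\tau) H^\II$ factorizes as $\e^{-\beta\tilde H} = \e^{-\tau\beta H}\otimes\e^{-(1-\tau)\beta H}$ across the two copies, and because $\swap A^\I B^\II = \swap(A\otimes B)$, the swap trick collapses the thermal weight $\e^{-\beta\tilde H}$ into $\Tr[\e^{-\tau\beta H}\e^{-(1-\tau)\beta H}] = Z(\beta)$, producing exactly the single factor of $Z(\beta)$ asserted by the lemma; in particular, the $\tau$-dependence drops out by this mechanism.

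Applying a generalized H\"older inequality to the resulting single-copy trace, the external operators contribute $\norm{A}_\infty\norm{B}_\infty$ in operator norm, the correction on $\ol G$ is bounded in operator norm by $\e^{|\beta|\,J\,|G|}$ (after arranging the split so that the $\e^{|\beta J|}$ exponent picks up only edges inside $G$, with boundary contributions on $\partial G$ absorbed into the $Z(\beta)$ normalization via Lemma~\ref{lem:gt}), each $\tilde\eta(G_j)$ contributes $(\e^{|\beta J|}-1)^{|G_j|}$ by Lemma~\ref{lem:eta_bound}, and the thermal weight contributes the $Z(\beta)$ produced above. Using $\sum_j |G_j| = |G|$, the $\e^{|\beta J|}$ and $(\e^{|\beta J|}-1)$ factors combine precisely into $[\e^{|\beta J|}(\e^{|\beta J|}-1)]^{|G|} = y(\beta J)^{|G|}$, yielding the claimed bound.

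The main obstacle will be arranging the splitting so that the $\e^{|\beta J|}$ exponent is exactly $|G|$ rather than $|\ol G| = |G|+|\partial G|$: one must apply Lemma~\ref{lem:gt} in such a way that the correction is treated in operator norm only for edges inside the $G_j$'s, while the $\partial G$ contributions are routed into the trace-norm estimate on the thermal weight in a manner compatible with the swap-trick reduction that produces $Z(\beta)$. A secondary technical point is that the H\"older exponents must be chosen so that all the norms combine cleanly; this is possible because the $\tilde\eta(G_j)$'s mutually commute and also commute with the correction on $\ol G$, so the operator norm of their product factorizes into the claimed product of single-component bounds uniformly in $\tau \in [0,1]$.
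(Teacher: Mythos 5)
Your ingredient list is essentially the paper's (the factorization of $\tilde\rho(G)$ from Lemma~\ref{lem:rho_in_terms_of_eta}, the swap trick, H\"older, Lemmas~\ref{lem:gt} and~\ref{lem:eta_bound}, and the bookkeeping yielding $y(\beta J)^{|G|}$), and you correctly see that the whole point is to end up with a single $Z(\beta)$ rather than $Z[\tilde H](\beta)$. But the central mechanism you propose for producing that $Z(\beta)$ fails at two points. First, $\tilde H = \tilde H_{(\ol G)\complement} + \tilde H_{\ol G}$ is \emph{not} a commuting splitting: the boundary edges in $\partial G \subset \ol G$ share vertices with edges in $(\ol G)\complement$, so $\e^{-\beta \tilde H}$ does not factor into $\e^{-\beta \tilde H_{(\ol G)\complement}}$ times a correction supported on $\ol G$, and there is no operator-level ``reorganization into a thermal weight plus a local correction.'' Second, even granting such a reorganization, the swap trick cannot ``collapse'' the thermal weight into $\Tr[\e^{-\tau\beta H}\e^{-(1-\tau)\beta H}]=Z(\beta)$: after $\Tr[\swap(X\otimes Y)]=\Tr[XY]$ the two exponentials end up separated by $B$ and by the cluster operators inside a single trace, so they never multiply together and no exact factor of $Z(\beta)$ can be split off. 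A related gap: $\prod_j\tilde\eta(G_j)$ is a \emph{sum} of tensor products across the two copies (each $\tilde h_\lambda=\tau h_\lambda^{(1)}+(1-\tau)h_\lambda^{(2)}$ is a sum, not a product), so the swap trick is not even applicable until each word is expanded into its $2^{|w|}$ single-copy-factorized terms and reordered.

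The paper's actual route supplies exactly the missing idea. It performs that $v\in\{1,2\}^{|w|}$ expansion, applies the swap trick term by term to get a single-copy trace of the form $\Tr\bigl(\bigl[A\,\e^{-\beta\tau H_{(\ol G)\complement}}h^{(i)}\bigr]\bigl[B\,\e^{-\beta(1-\tau)H_{(\ol G)\complement}}h^{(ii)}\bigr]\bigr)$, and then applies the generalized H\"older inequality with exponents $1/\tau$ and $1/(1-\tau)$ on the two exponential factors, so that $\bnorm{\e^{-\beta\tau H_{(\ol G)\complement}}}_{1/\tau}\,\bnorm{\e^{-\beta(1-\tau)H_{(\ol G)\complement}}}_{1/(1-\tau)}=\bnorm{\e^{-\beta H_{(\ol G)\complement}}}_1$ via $\norm{X}_p=\norm{|X|^p}_1^{1/p}$; only then does Lemma~\ref{lem:gt} (Golden--Thompson) convert this $1$-norm into $Z(\beta)$ times the $\e^{|\beta|J|G|}$ factor. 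It is this H\"older interpolation of the two fractional powers --- not an exact algebraic collapse --- that removes the $\tau$-dependence and produces the single $Z(\beta)$; bounding the two-copy exponential directly in trace norm would instead give $Z[\tilde H](\beta)=Z(\tau\beta)Z((1-\tau)\beta)$, which can be exponentially larger. Your worry about $|G|$ versus $|\ol G|$ is a legitimate bookkeeping point, but it is secondary to these structural gaps.
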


\begin{proof}
Let us denote $k_\lambda^\I \coloneqq \tau h_\lambda^\I$ and 
$k_\lambda^\II \coloneqq (1-\tau) h_\lambda^\II$.
For $w \in E^\ast$ and $v \in \{1,2\}^{|w|}$, we define 
$\tilde h(w,v) \coloneqq k^{v_1}_{w_1} k^{v_2}_{w_2} \dots k^{v_{|w|}}_{w_{|w|}}$. 
Then, by expanding the product $\tilde h(w)$, it can be written as 
\begin{equation}
\tilde h(w) = \sum_{v \in \{1,2\}^{|w|}} \tilde h(w,v) . 
\end{equation}
Importantly, we can reorder the terms in $\tilde h(w,v)$ so that 
\begin{equation}
 \tilde h(w,v) = \tilde h^\I(w,v) \tilde h^\II(w,v)\, ,
\end{equation}
where 
$\tilde h^\I(w,v) = h^{(i)}(w,v)\otimes \1$ and 
$\tilde h^\II(w,v)= \1 \otimes h^{(ii)}(w,v)$. 
Factorizing the operators and using the swap-trick \eqref{eq:swap-trick}, we obtain 
\begin{widetext}
\begin{align}
  \Tr\bigl[ \swap A^\I  B^\II  
  \e^{-\beta \tilde H_{\ol{G}\complement}}
  \tilde h(w,v) \bigr] 
  &= 
  \Tr\bigl(\swap 
    \bigl[A \, \e^{-\beta \tau H_{\ol{G}\complement}} h^{(i)}(w,v)\bigr] \otimes
    \bigl[B \, \e^{-\beta (1-\tau) H_{\ol{G}\complement}} h^{(ii)}(w,v) \bigr]
  \bigr) 
  \\
  &=
  \Tr\bigl( 
    \bigl[A \, \e^{-\beta \tau H_{\ol{G}\complement}} h^{(i)}(w,v)\bigr] \,
    \bigl[B \, \e^{-\beta (1-\tau) H_{\ol{G}\complement}} h^{(ii)}(w,v) \bigr]
  \bigr) . 
\end{align}
Bounding the trace by the $1$-norm and applying H\"{o}lder's inequality generalized to several operators yields
\begin{align}
  \bigl| \Tr\bigl[ \swap A^\I  B^\II  
  \e^{-\beta \tilde H_{\ol{G}\complement}}
  \tilde h(w,v) \bigr] \bigr|
  &\leq \norm{A}_\infty \norm{B}_\infty 
  \bnorm{\e^{-\beta \tau H_{\ol{G}\complement}}}_{1/\tau}
  \bnorm{\e^{-\beta (1-\tau) H_{\ol{G}\complement}}}_{1/(1-\tau)}
  \bnorm{h^{(i)}(w,v)}_\infty \bnorm{h^{(ii)}(w,v)}_\infty \nonumber
  \\
  &\leq \norm{A}_\infty \norm{B}_\infty \norm{\e^{-\beta H_{\ol{G}\complement}}}_1 
    \, J^{|w|} \tau^{n^\I(v)}\, (1-\tau)^{n^\II(v)}, 
\end{align}
where in the second step, we have used that $\norm{X}_p = \norm{|X|^p}_1^{1/p}$ and that with $n^{(j)}(v) \coloneqq |\{ v_k : v_k = j \}|$ for $j\in \{1,2\}$ the bounds
$\nnorm{h^{(i)}(w,v)}_\infty \leq (\tau\, J)^{n^\I(v)}$ and 
$\nnorm{h^{(ii)}(w,v)}_\infty \leq ((1-\tau)\, J)^{n^\II(v)}$ hold.
Now, we apply Lemma~\ref{lem:gt} and use that 
$\nnorm{e^{|\beta| H_{G_j}}}_\infty \leq \e^{|\beta|J |G_j|}$ to arrive at
\begin{equation}
  \bigl| \Tr\bigl[ \swap A^\I  B^\II \, 
  \e^{-\beta \tilde H_{\ol{G}\complement}}
  \tilde h(w,v) \bigr] \bigr|
  \leq 
  \norm{A}_\infty \norm{B}_\infty
  Z(\beta) \,
  \e^{|\beta| \, J \, |G|}
  J^{|w|} \, \tau^{n^\I(v)} \, (1-\tau)^{n^\II(v)} . \label{eq:bound_Tr_with_hwv}
\end{equation}
From the definition of $\eta$ in Eq.~\eqref{eq:eta_def_new}, it follows that
\begin{align}
 \prod_{j=1}^m \tilde{\eta}(G_j)
 &= 
 \sum_{\left\{\substack{w^{(j)} \in G_j^\ast:\\ G_j \subset w^{(j)} }\right\}_{j=1}^m }
 \prod_{i=1}^m
 \frac{(-\beta)^{\left|w^{(i)}\right|}}
      {  \left|w^{(i)}\right| !}\,
  \tilde h\bigl(w^{(i)} \bigr)\\
 &= 
 \sum_{\left\{\substack{w^{(j)} \in G_j^\ast:\\ G_j \subset w^{(j)} }\right\}_{j=1}^m }
 \frac{(-\beta)^{|w|}}
      { \prod_{i=1}^m \left|w^{(i)}\right| !}\,
 \sum_{ v\in \{1,2\}^{|w|}}
     \tilde h(w,v) \, ,
\end{align}
where $w \coloneqq w^\I w^\II \dots w^{(m)}$ and hence 
$\tilde h(w) = \prod_{i=1}^{m} h\bigl(w^{(i)}\bigr)$. 
Together with the bound \eqref{eq:bound_Tr_with_hwv}, this yields
\begin{equation}
\frac{\bigl| \Tr\bigl[ \swap A^\I  B^\II  
  \e^{-\beta \tilde H_{\ol{G}\complement}}
  \prod_{j=1}^m \tilde{\eta}(G_j) \bigr] \bigr|}
     {\norm{A}_\infty \norm{B}_\infty Z(\beta)}
  \leq 
  \e^{|\beta| \, J \, |G|}
   \sum_{\left\{\substack{w^{(j)} \in G_j^\ast:\\ G_j \subset w^{(j)} }\right\}_{j=1}^m }
   \frac{|\beta|^{|w|}}
	{ \prod_{i=1}^m \left|w^{(i)}\right| !} \,
      J^{|w|} \sum_{v \in \{1,2\}^{|w|}} \tau^{n^{(1)}(v)} \, (1-\tau)^{n^{(2)}(v)} .
\end{equation}
\end{widetext}
Using the definition \eqref{eq:rhoF_in_terms_eta} of $\tilde \rho(G)$ and the multinomial formula yields
\begin{align}
& \phantom{={}}
  \frac{\bigl| \Tr\bigl( \swap A^\I  B^\II  
  \e^{-\beta \tilde H_{\ol{G}\complement}}
  \tilde \rho(G) \bigr) \bigr|}
     {\norm{A}_\infty \norm{B}_\infty Z(\beta)}
   \\
& = 
  \e^{|\beta| \, J \, |G|}
   \sum_{\left\{\substack{w^{(j)} \in G_j^\ast:\\ G_j \subset w^{(j)} }\right\}_{j=1}^m }
   \prod_{i=1}^m \frac{(|\beta|\, J)^{|w^{(i)}|}}
	{\left|w^{(i)}\right| !} 
  \\
& =  
  \e^{|\beta| \, J \, |G|}
    \prod_{i=1}^m \Biggl(\, \sum_{\substack{w^{(i)} \in G_i^\ast: \\ G_i \subset w^{(i)}}}
      \frac{ ( |\beta| J)^{\bigl|w^{(i)}\bigr|} }{ \bigl|w^{(i)}\bigr|! } \Biggr)
   \\
& \leq 
  \e^{|\beta| \, J \, |G|}
    \prod_{i=1}^m \bigl(\e^{ |\beta|\, J} - 1\bigr)^{|G_i|} ,  \label{eq:Tr_eta_bound_intermediate}
\end{align}
where in the second to last step, we have factorized the series and in the last step, we have used Lemma~\ref{lem:eta_bound}.
\end{proof}

We will need the following combinatorial lemma:
\begin{lemma}\label{lem:mfold_animals_to_animals}
Let $(V,E)$ be a finite (hyper)graph and $y \in [0,1[$. Then, for any $F \subset E$,
\begin{equation}
  \sum_{G \in \animals{\geq L}{m}} y^{|G|} \leq \kw{m!} \left(\sum_{G \in \animals{\geq L}{}} y^{|G|} \right)^m .
\end{equation}
\end{lemma}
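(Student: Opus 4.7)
The plan is to view both sides of the inequality as sums over families of animals and exploit the fact that a non-overlapping collection of connected edge sets is uniquely determined up to permutation by its union. First I would expand the $m$th power on the right-hand side as a sum over ordered $m$-tuples,
\begin{equation}
\Biggl(\sum_{G \in \animals{\geq L}{}} y^{|G|}\Biggr)^{\!m}
= \sum_{(G_1,\dots,G_m) \in (\animals{\geq L}{})^m} y^{|G_1|+\dots+|G_m|},
\end{equation}
and then discard every tuple in which two of the $G_j$'s overlap. Because $y \geq 0$, this only decreases the right-hand side, giving the lower bound
\begin{equation}
\Biggl(\sum_{G \in \animals{\geq L}{}} y^{|G|}\Biggr)^{\!m}
\geq \sum_{\substack{(G_1,\dots,G_m) \in (\animals{\geq L}{})^m \\ \text{pairwise non-overlapping}}} y^{|G_1|+\dots+|G_m|}.
\end{equation}

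Next, I would group the ordered non-overlapping tuples by the unordered collection $\{G_1,\dots,G_m\}$ they determine. Since any two non-overlapping animals are in particular distinct as edge sets, each unordered collection corresponds to exactly $m!$ ordered tuples, so dividing by $m!$ yields a sum over unordered non-overlapping collections of $m$ animals in $\animals{\geq L}{}$. At the same time, every element $G \in \animals{\geq L}{m}$ is by definition the disjoint union $\biguplus_{j=1}^m G_j$ of $m$ pairwise non-overlapping animals of $\animals{\geq L}{}$, and $|G| = \sum_j |G_j|$. Thus the two sums agree:
\begin{equation}
\sum_{G \in \animals{\geq L}{m}} y^{|G|}
= \frac{1}{m!}\sum_{\substack{(G_1,\dots,G_m) \in (\animals{\geq L}{})^m \\ \text{pairwise non-overlapping}}} y^{|G_1|+\dots+|G_m|},
\end{equation}
and combining with the previous inequality proves the lemma.

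The one delicate point, and in my view the main thing that needs to be verified carefully, is that the identification between $m$-fold animals $G$ and unordered collections $\{G_1,\dots,G_m\}$ is a bijection. This relies on the observation that, because each $G_j$ is a connected edge set and the $G_j$'s are pairwise non-overlapping (hence vertex-disjoint in the interaction hypergraph), the $G_j$'s are precisely the connected components of $G$ viewed as a subgraph; hence the decomposition is unique up to reordering. Once this uniqueness is in hand, the counting argument above gives exactly the stated bound, and the inequality (rather than equality) only arises from dropping the overlapping ordered tuples, which is harmless because $y \geq 0$.
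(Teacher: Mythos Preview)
Your proof is correct and rests on the same combinatorial observation as the paper's: an $m$-fold animal $G\in\animals{\geq L}{m}$ decomposes uniquely into its $m$ connected components $G_1,\dots,G_m\in\animals{\geq L}{}$, so ordered non-overlapping $m$-tuples overcount $m$-fold animals by exactly $m!$. The paper's proof uses this idea iteratively---peeling off one component at a time to get $m\sum_{G\in\animals{\geq L}{m}}y^{|G|}\leq\bigl(\sum_{G_1\in\animals{\geq L}{m-1}}y^{|G_1|}\bigr)\bigl(\sum_{G_2\in\animals{\geq L}{}}y^{|G_2|}\bigr)$ and then recursing---whereas you do all $m$ components at once; this is a stylistic difference only, and your direct version is arguably cleaner.
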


\begin{proof}
Remember that $\animals{\geq L}{m}$ is the set of $m$-fold (edge) animals of size at least $L$ that contain a letter from $F$.
For every $G \in \animals{\geq L}{m}$, one finds $m$ pairs $(G_1,G_2)$ with $G_1 \in \animals{\geq L}{m-1}$ and $G_2 \in \animals{\geq L}{}$ such that $G= G_1 \dunion G_2$; hence,
\begin{align}
m \sum_{G \in \animals{\geq L}{m}} y^{|G |}
&\leq 
\sum_{G_1 \in \animals{\geq L}{m-1}} \, \sum_{G_2 \in \animals{\geq L}{}} y^{|G_1 |+|G_2 |} 
\nonumber \\
&= \left( \sum_{G \in \animals{\geq L}{m-1}} y^{|G |}\right)
 \left(\sum_{G \in \animals{\geq L}{}} y^{|G |}\right).\nonumber
\end{align}
By iterating this inequality, we obtain
\begin{equation}
\sum_{G \in \animals{\geq L}{m}} y^{|G |} 
\leq
\kw{m!}  \left(\sum_{G \in \animals{\geq L}{}} y^{|G |}\right)^m .
\end{equation}
\end{proof}

In the following lemma, we define a family of operators $\rho_m$ and bound their $1$-norms. 
The bounds, in particular, guarantee that the $\rho_m$ are well-defined. 
In addition, they are useful for the proof of Lemma~\ref{lem:hastings}, albeit they are not explicitly needed for the proof of Lemma~\ref{lem:tailored_hastings}. 

\begin{lemma} \label{lem:rho_m_bound}
Let $\rho(G)$ be defined as in Lemma~\ref{lem:rho_in_terms_of_eta} with respect to a Hamiltonian $H$ having a finite interaction (hyper)graph $(V,E)$ with growth constant $\animalc$ and let
\begin{equation} \label{eq:rho_m_in_terms_rhoF}
  \rho_m \coloneqq \sum_{G \in \animals{\geq L}{m}} \rho(G) 
\end{equation} 
for some $F \subset E$.
Then, 
\begin{equation} \label{eq:rho_m_bound}
  \norm{\rho_m}_1 
  \leq \frac{Z(\beta) }{m!} 
  \left(|F| \sum_{l=L}^\infty \alphay(\beta J)^{l} \right)^m ,
\end{equation}
where $\alphay(x) \coloneqq \animalc\, \e^{|x|}(\e^{|x|}-1)$.
\end{lemma}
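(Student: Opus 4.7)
The plan is to bound $\norm{\rho_m}_1$ by the triangle inequality, reduce to a per-animal bound via the factorization from Lemma~\ref{lem:rho_in_terms_of_eta}, apply Lemmas~\ref{lem:gt} and~\ref{lem:eta_bound} to the two resulting factors, and then invoke the combinatorial Lemma~\ref{lem:mfold_animals_to_animals} together with the growth-constant estimate~\eqref{eq:animal_const} to perform the final series summation.

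First, the triangle inequality applied to Eq.~\eqref{eq:rho_m_in_terms_rhoF} gives $\norm{\rho_m}_1 \leq \sum_{G \in \animals{\geq L}{m}} \norm{\rho(G)}_1$. For each $m$-fold animal $G = \dUnion_{j=1}^m G_j$, I would decompose $\rho(G) = \e^{-\beta H_{(\ol G)\complement}} \prod_j \eta(G_j)$ via Lemma~\ref{lem:rho_in_terms_of_eta} and apply H\"older's inequality to obtain $\norm{\rho(G)}_1 \leq \bnorm{\e^{-\beta H_{(\ol G)\complement}}}_1 \prod_j \norm{\eta(G_j)}_\infty$. Lemma~\ref{lem:eta_bound} bounds the product of $\eta$-factors by $(\e^{|\beta J|}-1)^{|G|}$, while a careful application of Lemma~\ref{lem:gt} with sub-alphabets chosen to match the connected components of $G$ controls the exponential factor by $Z(\beta)\,\e^{|\beta J||G|}$. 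Combining these ingredients yields the per-animal estimate $\norm{\rho(G)}_1 \leq Z(\beta)\, y(\beta J)^{|G|}$ with $y(x) = \e^{|x|}(\e^{|x|}-1)$.

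Substituting this per-animal bound into the triangle inequality and invoking Lemma~\ref{lem:mfold_animals_to_animals} gives $\norm{\rho_m}_1 \leq \frac{Z(\beta)}{m!}\bigl(\sum_{G \in \animals{\geq L}{}} y(\beta J)^{|G|}\bigr)^m$. I would then order the inner sum by $l = |G|$ and use that there are at most $|F|$ choices of a required anchor edge in $F$ and at most $\animalc^l$ animals of size $l$ containing a given edge [Eqs.~\eqref{eq:animal_bound}--\eqref{eq:animal_const}]; this bounds the single-animal series by $|F|\sum_{l \geq L} \animalc^l y(\beta J)^l = |F|\sum_{l \geq L} \alphay(\beta J)^l$, yielding the claimed estimate.

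The hard step is the per-animal exponential bound $\bnorm{\e^{-\beta H_{(\ol G)\complement}}}_1 \leq Z(\beta)\,\e^{|\beta J||G|}$: a direct invocation of Lemma~\ref{lem:gt} with $G_1 = \ol G$ produces only $Z(\beta)\,\e^{|\beta J||\ol G|}$, and the extra boundary factor $\e^{|\beta J||\boundary G|}$ would contaminate the growth-constant counting and thereby weaken the critical temperature. Getting the sharper exponent requires exploiting the cluster-expansion structure of $\rho(G)$ together with the commutativity of Hamiltonian terms with disjoint supports, so that boundary contributions are absorbed by the $\eta(G_j)$-factors rather than by the exponential, permitting Lemma~\ref{lem:gt} to be invoked with only the components of $G$ as sub-alphabets and the sharper exponent $|G|$ resulting.
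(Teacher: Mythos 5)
Your proposal reproduces the paper's proof step for step: triangle inequality over $m$-fold animals, the factorization $\rho(G)=\e^{-\beta H_{(\ol G)\complement}}\prod_j\eta(G_j)$ from Lemma~\ref{lem:rho_in_terms_of_eta} followed by H\"older's inequality, Lemmas~\ref{lem:gt} and~\ref{lem:eta_bound} for the per-animal estimate $\norm{\rho(G)}_1\leq Z(\beta)\,y(\beta J)^{|G|}$, and then Lemma~\ref{lem:mfold_animals_to_animals} together with the count $|\animals{=l}{}|\leq|F|\,\animalc^l$ for the final summation.

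The one step you flag as hard deserves comment, because your proposed resolution does not work --- and, notably, the paper does not supply one either; it simply invokes Lemma~\ref{lem:gt} and writes the exponent $|G|$. The operator produced by the factorization is $\e^{-\beta H_{(\ol G)\complement}}$, i.e., the exponential of $H$ with \emph{all} of $\ol G=G\dunion\boundary\! G$ removed, so Lemma~\ref{lem:gt} applied to $H_{(\ol G)\complement}=H-H_{\ol G}$ gives $Z(\beta)\,\e^{|\beta| J|\ol G|}$; taking all terms mutually commuting with the boundary terms proportional to the identity shows that the ratio $\Tr\,\e^{-\beta H_{(\ol G)\complement}}/Z(\beta)$ can actually equal $\e^{|\beta| J|\boundary G|}$, so the extra factor cannot be removed from this one-norm bound. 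Your suggestion that the boundary contributions are ``absorbed by the $\eta(G_j)$'' cannot be right: by Eq.~\eqref{eq:rhoF_def} the words contributing to $\rho(G)$ contain no letters of $\boundary\! G$ at all, so the $\eta(G_j)$, supported on the $G_j$, see nothing of the boundary, and invoking Lemma~\ref{lem:gt} with only the components of $G$ as sub-alphabets controls $\e^{-\beta H_{G\complement}}$ rather than the operator that actually appears. The honest outcome of this line of argument is $\norm{\rho(G)}_1\leq Z(\beta)\,\e^{|\beta| J|\ol G|}\bigl(\e^{|\beta| J}-1\bigr)^{|G|}$; since $|\ol G|\leq(1+c)|G|$ with $c$ the maximal number of edges adjacent to a given edge, the lemma survives verbatim with $\alphay(x)$ replaced by $\animalc\,\e^{(1+c)|x|}(\e^{|x|}-1)$, which only degrades the numerical constants ($\beta^\ast$ and $\xi$) downstream. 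Unless you can genuinely establish the exponent $|G|$, you should state the bound with the modified $\alphay$.
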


\begin{proof}
Consider a $k$-fold animal $G \in \animals{\geq L}{k}$ and decompose it into its $k$ non-overlapping animals 
$G_j \in \animals{\geq L}{}$ as 
$G = \dUnion_{j=1}^k G_j \subset E$. 
Then, Eq.~\eqref{eq:rhoF_in_terms_eta} and H\"{o}lder's inequality imply
\begin{equation}
 \norm{\rho(G)}_1 \leq 
 \bnorm{\e^{-\beta \,H_{(\ol G)\complement}}}_1 \prod_{j=1}^k \norm{\eta(G_j)}_\infty ,
\end{equation}
and it follows from Lemma~\ref{lem:gt} and Lemma~\ref{lem:eta_bound} in conjunction with the definition of $\eta$ in Eq.~\eqref{eq:eta_def_new} that
\begin{equation}
 \norm{\rho(G)}_1 \leq Z(\beta)\, y(\beta J)^{|G|} .
\end{equation}
Hence, by the definition from Eq.~\eqref{eq:rho_m_in_terms_rhoF} and Lemma~\ref{lem:mfold_animals_to_animals}, we obtain
\begin{align}
 \norm{\rho_m}_1 &\leq Z(\beta) \sum_{G \in \animals{\geq L}{m}} y(\beta J)^{|G|} \\
 &\leq \frac{Z(\beta)}{m!} \left(\sum_{G \in \animals{\geq L}{}} y(\beta J)^{|G|} \right)^m .
\end{align}
By decomposing the set of animals of size at least $L$ into a union of sets of animals of fixed size $l$, i.e., $\animals{\geq L}{} = \dUnion_{l = L}^\infty \animals{=l}{}$, we can write
\begin{equation}
 \norm{\rho_m}_1 
 \leq \frac{Z(\beta)}{m!} 
    \left(\sum_{l=L}^\infty \left| \animals{=l}{} \right|\, y(\beta J)^{l} \right)^m .
\end{equation}
The bound~\eqref{eq:animal_const} on the number of lattice animals, the fact that the number $|F|$ of edges in $F$ upper bounds the number of possibilities of translating an animal $G$ such that $G \subset F$, and $\alphay = \alpha\, y$ finish the proof.
\end{proof}

While the last lemma provides a bound on $\rho_m$ and, in particular, implies that $\rho_m$ is well-defined, the next lemma provides a useful form of $\rho_m$. 

\begin{lemma}\label{lem:rho_m}
  Let $\rho_m$ be defined as in Eq.~\eqref{eq:rho_m_in_terms_rhoF}. Then
  \begin{equation}\label{eq:rho_m}
    \rho_m = 
    \sum_{k=m}^\infty \binom k m \sum_{ w \in \clusters{\geq L}{k} } f(w) \, .
  \end{equation}
\end{lemma}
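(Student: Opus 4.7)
The plan is to prove Eq.~\eqref{eq:rho_m} by substituting the explicit form of $\rho(G)$ from Eq.~\eqref{eq:rhoF_def} into the definition Eq.~\eqref{eq:rho_m_in_terms_rhoF} of $\rho_m$, giving
\begin{equation*}
\rho_m = \sum_{G \in \animals{\geq L}{m}} \sum_{w \in [(\boundary G)\complement]^\ast:\, G \subset w} f(w),
\end{equation*}
and then interchanging the order of summation. For finite $E$ the outer sum is finite and each inner sum converges absolutely via the global bound $\sum_{w \in E^\ast} (|\beta|\,J)^{|w|}/|w|! = \exp(|\beta|\,J\,|E|)$ noted in the preamble to this appendix, so the swap is legal. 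The result is
\begin{equation*}
\rho_m = \sum_{w \in E^\ast} N_m(w)\, f(w), \quad N_m(w) \coloneqq \bigl|\{G \in \animals{\geq L}{m}:\, G \subset w,\; w \cap \boundary G = \emptyset\}\bigr|,
\end{equation*}
and the proof reduces to identifying $N_m(w)$.

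The key combinatorial claim is that $N_m(w) = \binom{k}{m}$ whenever $w \in \clusters{\geq L}{k}$. To establish it, I would let $D_1, \ldots, D_n$ denote the edge sets of the maximal clusters of $w$, equivalently the connected components of the set of letters occurring in $w$; by definition of $\clusters{\geq L}{k}$, exactly $k$ of them are \emph{relevant} in the sense $|D_i| \geq L$ and $D_i \cap F \neq \emptyset$. Given a valid $G$ with connected components $G_1, \ldots, G_m$, the connectedness of each $G_j$ together with $G_j \subset w$ forces $G_j \subset D_i$ for some $i$. A strict inclusion would, by connectedness of $D_i$, produce an edge $\lambda \in D_i \setminus G_j$ sharing a vertex with $G_j$, so that $\lambda \in w \cap \boundary G_j$. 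Such $\lambda$ cannot belong to any other component $G_{j'}$, because the ``non-overlapping'' condition in the definition of an $m$-fold animal (implicitly used for the uniqueness of decomposition in Lemma~\ref{lem:mfold_animals_to_animals}) means vertex-disjoint; hence $\lambda \in \boundary G$, contradicting $w \cap \boundary G = \emptyset$. So each $G_j$ coincides with some $D_i$, and the $\animals{\geq L}{}$ requirements on $G_j$ force the corresponding $D_i$ to be relevant.

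Conversely, any unordered choice of $m$ distinct relevant $D_i$'s assembles into a valid $G$: vertex-disjointness between different $D_i$'s is automatic since they are distinct connected components of the support of $w$; the inclusion $G \subset w$ is immediate; and an edge of $w$ adjacent to some $G_j = D_i$ would necessarily lie in the same connected component $D_i$ and hence in $G$, precluding any letter of $w$ in $\boundary G$. There are exactly $\binom{k}{m}$ such choices, giving $N_m(w) = \binom{k}{m}$ and, after grouping the outer sum by the value of $k$,
\begin{equation*}
\rho_m = \sum_{k=m}^\infty \binom{k}{m} \sum_{w \in \clusters{\geq L}{k}} f(w).
\end{equation*}
The only real obstacle is this combinatorial bijection; once one fixes the reading of ``non-overlapping'' as vertex-disjoint and identifies the maximal clusters of $w$ with the connected components of its support, the argument is essentially an exchange of summation followed by counting.
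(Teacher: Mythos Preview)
Your proof is correct and follows essentially the same approach as the paper: both substitute Eq.~\eqref{eq:rhoF_def} into Eq.~\eqref{eq:rho_m_in_terms_rhoF}, exchange the order of summation, and identify the multiplicity of each $f(w)$ as $\binom{k}{m}$ when $w \in \clusters{\geq L}{k}$. The paper states this multiplicity claim tersely as an ``observation'', while you spell out the bijection between valid $G$'s and $m$-element subsets of the relevant maximal clusters of $w$; your explicit verification that the components $G_j$ must coincide with full connected components $D_i$ (via the boundary condition and the vertex-disjointness reading of ``non-overlapping'') is exactly the content behind the paper's one-line remark.
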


\begin{proof}
For $G \in \animals{\geq L}{m}$, let 
 \begin{equation}
 \W(G) \coloneqq 
 \{w \in [(\boundary G)\complement]^\ast : G\subset w \} \, .
 \end{equation}
According to Eqs.~\eqref{eq:rhoF_def} and \eqref{eq:rho_m_in_terms_rhoF},
\begin{equation}
 \rho_m = 
 \sum_{G \in \animals{\geq L}{m}}\, \sum_{w \in [(\boundary G)\complement]^\ast: G\subset w} f(w) \, .
\end{equation}
As 
\begin{equation}
 \bigcup_{G \in \animals{\geq L}{m}} \W(G) = \dUnion_{k=m}^\infty \clusters{\geq L}{k} \, , 
\end{equation}
the sums in Eqs.~\eqref{eq:rho_m_in_terms_rhoF} and \eqref{eq:rho_m} contain the same terms.
It remains to show that the multiplicities are correct, i.e., are given by the binomial factor.
Every word in $\W(G)$ contains at least $m$ maximal clusters of size at least $L$, each of which contains a letter in $F$. 
The key is to decompose this set as
\begin{equation}
 \W(G) 
 = \dUnion_{k=m}^\infty  \W^k(G) 
\end{equation}
 with 
\begin{equation*}
 \begin{split}
  \W^k(G) \coloneqq 
    \{ w \in \W(G) : \exists\ &\text{exactly $k$ maximal clusters}\\
	    & c\subset w :  c\in \clusters{\geq L}{ } \} \, ,
  \end{split}
\end{equation*}
i.e., into sets of words having exactly $k\geq m$ such clusters.
Then, the observation that for every $w \in \W^k(G)$ there are exactly $\binom k m$ many $m$-fold animals $G' \in \animals{\geq L}{m}$ with $w \subset G'$ completes the proof.
\end{proof}


\end{document}